\documentclass[11pt]{article}
\usepackage{tikz,tikz-cd}
\usepackage{pgfplots}\pgfplotsset{compat=1.18}
\usepackage{mathrsfs}

\usepackage[utf8]{inputenc}
\usepackage[T1]{fontenc}
\usepackage[english]{babel}
\usepackage{amsmath,amssymb,amsthm}
\usepackage{graphicx} % Required for inserting images
\usepackage[margin=1.in]{geometry}
\usepackage[shortlabels]{enumitem}
\usepackage{todonotes}
\usepackage[bibstyle=numeric, backend=biber, sorting=none, citestyle=numeric-comp, giveninits,url=false,maxbibnames=5]{biblatex} % loads etoolbox
\usepackage{csquotes}\MakeOuterQuote"
\usepackage{soul}\setstcolor{red}
\usepackage[colorlinks,citecolor=blue,linkcolor=blue]{hyperref}
\usepackage[capitalize]{cleveref}
\usepackage{comment}

\addbibresource{refs.bib}

\newtheorem{theorem}{Theorem}
%\numberwithin{theorem}{section}
\newtheorem{lemma}[theorem]{Lemma}
\newtheorem{corollary}[theorem]{Corollary}
\newtheorem{proposition}[theorem]{Proposition}

\newtheorem{definition}[theorem]{Definition}
 %letter-labeled thm env for intro

\theoremstyle{definition}

\newtheorem*{problem*}{Problem}
\newtheorem{assumption}[theorem]{Assumption}
\newtheorem{example}[theorem]{Example}
\newtheorem{remark}[theorem]{Remark}
\newtheorem*{warning*}{Warning}

\newcommand{\F}{{\mc F}}

\newcommand{\braket}[2]{\langle #1|#2\rangle}
\newcommand{\ketbra}[2]{|#1\rangle\langle#2|}
\newcommand{\ket}[1]{|#1\rangle}

\newcommand{\bra}[1]{\langle#1|}

\newcommand{\norm}[1]{\lVert #1\rVert}
\newcommand{\oo}{\infty}
\newcommand{\ox}{\otimes}
\newcommand{\mc}{\mathcal}
\newcommand{\eps}{\varepsilon}
\newcommand{\III}{{\mathrm{III}}}
\newcommand{\II}{{\mathrm{II}}}
\newcommand{\I}{{\mathrm{I}}}
\newcommand{\e}{\mathrm{e}}

\newcommand{\up}[1]{^{(#1)}}

\DeclareMathOperator{\tr}{Tr}
 %we need this command in the technical manuscript

\renewcommand{\tilde}{\widetilde}
\renewcommand{\hat}{\widehat}

\newcommand{\hide}[1]{}

\def\A{{\mc A}}

\def\B{{\mc B}}
\def\CC{{\mathbb C}}

\def\H{{\mc H}}

\def\M{{\mc M}}
\def\N{{\mc N}}

\def\U{{\mc U}}

\def\RR{{\mathbb R}}

\def\NN{{\mathbb N}}

\def\ZZ{{\mathbb Z}}

\DeclareMathOperator{\id}{id}

\DeclareMathOperator{\spec}{Sp}
\DeclareMathOperator{\essspec}{\spec_{\mathrm{ess}}}

\newcommand{\CAR}{\textup{CAR}}

\newcommand*{\1}{\text{\usefont{U}{bbold}{m}{n}1}}
\newcommand{\placeholder}[0]{{\,\cdot\,}}

 %green was kinda hard to read

 %{\hspace{0.7pt}{\mathbin{\vcenter{\hbox{\scalebox{0.5}{$\bullet$}}}}}\hspace{-0.7pt}}}

\newcommand{\h}{\mathfrak h}

\title{Critical Fermions are Universal Embezzlers}

\author{Lauritz van Luijk, Alexander Stottmeister, Henrik Wilming}
\date{\normalsize Institut f\"ur Theoretische Physik, Leibniz Universit\"at Hannover, \\ Appelstraße 5, 30167 Hannover, Germany\\[2ex]\today}

\begin{document}

\maketitle

\begin{abstract}
Universal embezzlers are bipartite quantum systems from which any entangled state may be extracted to arbitrary precision using local operations while perturbing the system arbitrarily little. 
We show that universal embezzlers are ubiquitous in many-body physics: The ground state sector of every local, translation-invariant, and critical free-fermionic many-body system on a one-dimensional lattice is a universal embezzler if bi-partitioned into two half-chains.
The same property holds for the locally interacting spin chains that are dual via the Jordan-Wigner transformation.
Universal embezzlement already manifests at finite system sizes, not only in the thermodynamic limit: For any finite error and any targeted entangled state, a finite length of the chain is sufficient to embezzle said state within the given error.
On a technical level, our main result establishes that the half-chain observable algebras associated with ground state sectors of the given models are type III$_1$ factors.
\end{abstract}

\tableofcontents

\section{Introduction}
The entanglement structure of ground states of many-body systems plays a crucial role in our understanding of quantum phases of matter. It can be used to detect quantum phase transitions \cite{osborne_entanglement_2002,vidal_entanglement_2003}, and much work has been invested in obtaining a deep understanding of the entanglement structure of gapped ground states dominated by the area law \cite{eisert_colloquium_2010}. An important outcome of this work has been the introduction of tensor networks as both numerical and theoretical tools to understand such systems \cite{schollwock_density-matrix_2005,schollwoeck_density-matrix_2011,cirac_matrix_2021}.  Gapped phases of matter exhibit rich entanglement properties, for example, encoded in their topological orders with the associated anyonic excitations and quantum error-correcting codes \cite{wen_topological_1990,kitaev_fault-tolerant_2003,kitaev_anyons_2006,wen_topological_2013}.

The entanglement of ground states at quantum phase transitions shows a very distinct behavior from the gapped phases. Critical many-body systems are expected to have a scaling limit described by a conformal field theory (CFT) \cite{di_francesco_conformal_1997} (cf.~\cite{osborne2023cft_lattice_fermions,osborne2023ising} for recent rigorous results in free-fermion systems). On large scales, one can hence expect similar behavior as in conformal field theories, for which various local quantifiers of entanglement, such as the geometric entropy of single or multiple intervals, have been computed \cite{srednicki_entropy_1993,holzhey_geometric_1994, calabrese2004entanglement_entropy_qft, calabrese2009entanglement_entropy_cft, casini2009entanglement_entropy_qft, headrick2010entanglement_renyi_entropies, longo2018relative_entropy_cft}.
This has been confirmed in specific models; see, for example, \cite{vidal_entanglement_2003,latorre_ground_2004}. 
In one spatial dimension, a hallmark feature of such critical ground states is that the entanglement entropy of a connected region grows logarithmically with the size of the region instead of the constant upper bound implied by the area law. 
Therefore, if we partition an infinite chain into two half-chains, they are necessarily infinitely entangled.

In this work, we explain how the infinite entanglement in critical free-fermionic many-body systems can be characterized using the information-theoretic task of \emph{embezzlement of entanglement} introduced in \cite{van_dam2003universal}. The recent result that embezzlement provides an operational interpretation for different ways of being infinitely entangled \cite{short_paper,long_paper} prompts an application to many-body physics.
In the task of embezzlement, two agents (say Alice and Bob) can each access one half of a shared entangled resource as well as a local subsystem. They are asked to produce, via local operations and without communication, an entangled state on their two local subsystems while perturbing the resource state as little as possible.
The resource is called an \emph{embezzling state} if this is possible for \emph{every entangled state} (of arbitrary dimension) \emph{to arbitrary precision} while perturbing the resource state \emph{arbitrarily little} \cite{short_paper,long_paper}.
More precisely, a pure state $\ket\Omega_{AB}$ on a Hilbert space $\H$ modeling a bipartite system is embezzling if for every $\eps>0$ and any finite-dimensional entangled state $\ket\Psi_{AB} \in \mathbb C^d\otimes \mathbb C^d$ (for any $d$) there exist unitaries $u_{AA'}$ and $u_{BB'}$ of Alice and Bob, respectively, such that
\begin{align}\label{eq:mbz}
    u_{AA'} u_{BB'} \big(\ket\Omega_{AB}\otimes \ket 0_{A'}\ket 0_{B'}\big) \approx_\eps \ket\Omega_{AB} \otimes\ket\Psi_{A'B'}, 
\end{align}
where $\ket 0_{A'}\ket 0_{B'}\in \CC^d\otimes\CC^d$ is a product state. 

A bipartite physical system on which every pure state is an embezzling state is called a \emph{universal embezzler}. 
It is far from obvious that embezzling states, much less universal embezzlers, exist. 
Indeed, such systems clearly require infinite amounts of entanglement. Ref.~\cite{long_paper} showed that universal embezzlers can be constructed mathematically and are deeply linked to the classification of von Neumann algebras. The results imply that, unsurprisingly, a universal embezzler must have infinitely many degrees of freedom. On a technical level, a bipartite quantum system is a universal embezzler if and only if the local observable algebras of the two parts of the system are so-called type $\III_1$ von Neumann algebras.

One may wonder whether universal embezzlers are purely mathematical objects or actually appear in physics. 
In \cite{long_paper}, it was shown that relativistic quantum field theories provide examples of universal embezzlers. While this transparently explains the maximum Bell inequality violations of the vacuum \cite{summers1985vacuum}, it currently appears to be necessary that both agents must control exceedingly large portions of their respective halves of spacetime to achieve acceptable errors for non-trivial target states, which renders this result rather impractical.

In this work, we show that critical, fermionic many-particle systems provide an abundance of examples of universal embezzlers: On a one-dimensional lattice, any free fermionic, translation-invariant, local model with a nontrivial Fermi surface is a universal embezzler in the ground state sector.

\begin{figure}
    \centering    \includegraphics{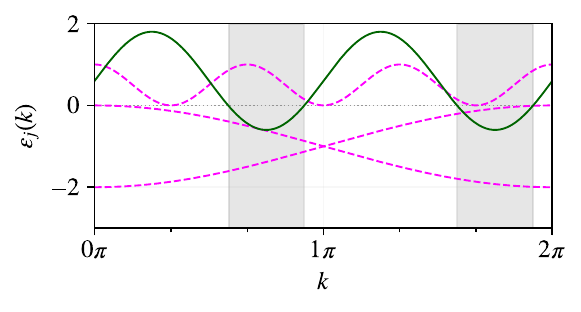}
    \caption{We consider translation-invariant free-fermionic models on a one-dimensional lattice, where the energy bands $\varepsilon_j(k)$ are piecewise continuous functions of the wave number $k$ and at least one of the energy bands is \emph{critical}, i.e., exhibits a non-trivial Fermi surface, here illustrated in solid green. The energy bands depicted in dashed magenta do not count as critical in our sense because the subsets of wave numbers $k$ for which they are not strictly positive or not strictly negative, respectively, consist of only finitely many points: The bands do not cross the Fermi energy at $0$. }
    \label{fig:dispersion}
\end{figure}

Moreover, we show that the large-scale structure of entanglement required for universal embezzlers has strong implications for the finite-size ground states: For any specified finite error $\eps >0$ and any fixed dimension $d$ there is a sufficiently large system size $N(\eps,d)$ such that all entangled states of local dimension $d$ may be embezzled up to error $\eps$ from the ground state of the system.  The latter result is true beyond free-fermionic systems and, in fact, holds for \emph{all} universally embezzling many-body systems with a unique ground state. 

Thus, the relationship between embezzling states and embezzling families is roughly similar to that between the sharp (and idealized) notion of phase transitions in infinite systems and their approximate manifestation in large but finite systems.
This shows that the study of the entanglement in the thermodynamic limit is also relevant for finite systems.  

To obtain our results, we have to overcome several obstacles. First, we have to argue that the parity super-selection rule for fermions does not prohibit the use of the results of \cite{long_paper} on embezzlement. This requires new operator algebraic results on the structure of fermionic quantum theory. In particular, we need to establish a property called \emph{Haag duality} for the physically relevant even operators. 
Second, we need to show that the local von Neumann algebras in the ground state sectors are of type $\III_1$ for the full class of models that we consider. This result generalizes a prior result of Matsui, who showed the type $\III_1$ property for the XX spin chain \cite{matsui_split_2001,keyl2006}. To do so, we combine general operator algebraic results on quasi-free fermionic states by Powers and St\o rmer \cite{powers1970free_states} with the theory of Toeplitz operators.
Toeplitz theory has previously been used to obtain precise calculations of the asymptotic behavior of local entanglement quantifiers in spin chains and free fermionic models \cite{jin_quantum_2004,eisert_single-copy_2005,its_entanglement_2005,keating_random_2004,keating_entanglement_2005} and is closely connected to the classification of one-dimensional topological phases of quantum walks \cite{cedzich_topological_2018,cedzich_complete_2018}. 
Third, we have to relate the entanglement structure in the thermodynamic limit to that in finite systems.
In the following, we discuss these points in more detail with a focus on providing intuition and a high-level understanding. The precise mathematical arguments are given in sections \ref{sec:bipartite} to \ref{sec:fermin_mbz_univ}.

\section{Results}\label{sec:results}
We consider fermionic many-body Hamiltonians on the infinite one-dimensional lattice with $b$ orbitals per lattice site. The algebra of operators is generated by the creation and annihilation operators $a(\xi),a^\dagger(\eta)$ with $\ket\xi,\ket\eta\in \h := \ell^2(\ZZ)\otimes \mathbb C^b$ obeying the canonical anti-commutation relations. We use the standard bases $\ket x \in \ell^2(\ZZ)$ and $\ket{j}\in \CC^b$ to define $a_j(x):= a(\ket x\otimes \ket j)$.  The class of Hamiltonians that we consider are second-quantized single-particle Hamiltonians $h$ on $\h$ of the form
\begin{align}\label{eq:hamiltonians}
    H = \sum_{x,y\in\ZZ} \sum_{l,m=1}^b h_{lm}(x-y) a_{l}^\dagger(x) a_m(y),\quad h_{lm}(x-y) = \bra{x}\otimes \bra{l} h \ket{y}\otimes\ket{m},
\end{align}
where we already made use of translation invariance. We assume that the Fourier transforms $\hat h_{lm}(k)$ for $k\in S^1 \cong [0,2\pi)$ exist. Then the hermitian matrix $\hat h(k)$ can be diagonalized for every $k\in S^1$. We denote the projection onto the strictly positive eigenvalues of $\hat h(k)$ by $\hat p_+(k)$ and require that $h$ is sufficiently local in space so that $\hat p_+$ is a piecewise continuous function with at most finitely many discontinuities.\footnote{{Technically, $\hat p_+$ is an element of the space $L^\infty(S^1,M_b(\mathbb{C}))$. Thus, it is only defined up to the values on sets of measure zero. In particular, we can assume it is continuous from the left.}} 
This is guaranteed if the Hamiltonian is short-ranged but also allows for long-range couplings.
We say that the Hamiltonian $H$ is \emph{critical} if $\hat p_+$ has at least one discontinuity.
In the case of a short-ranged Hamiltonian, we can think of energy bands described by eigenvalue functions $\epsilon_j$ that are analytic on $S^1$ up to (at most) a single point of discontinuity and the same is true for the associated spectral projections \cite{kato_perturbation_1995}.
It is then sufficient that at least one of the energy bands $\epsilon_j$ is negative on a nontrivial interval $I \subset S^1$ and positive on a nontrivial interval in $S^1\setminus I$, see figure~\ref{fig:dispersion} for an illustration. We emphasize that being critical in our sense is a stronger condition than merely being gapless: If an energy band has isolated points at which it vanishes, it is not critical, but the Hamiltonian does not have a gap above the ground state. The simple hopping Hamiltonian gives a paradigmatic example for the given class of Hamiltonians
\begin{align}\label{eq:xx}
    H_{\text{XX}} = \sum_{x} \big(a^\dagger(x)a(x+1) + a^\dagger(x+1) a(x)\big). 
\end{align}
A different example is the Su-Schrieffer-Heeger model \cite{su_soliton_1980} at the topological phase-transition point, given by
\begin{align}
    H_{\text{SSH}} = \sum_{x} \left(a_1^\dagger(x)a_2(x) + a_2(x)^\dagger a_1(x) + a_1^\dagger(x)a_2(x+1) + a_2(x+1)^\dagger a_1(x)\right),
\end{align}
which features a topologically non-trivial band structure (see section \ref{sec:fermin_mbz_univ}).
The topology of the band structure is insignificant for our results.

In systems with infinitely many degrees of freedom, states are most usefully defined as expectation value functionals on the algebra of local observables describing the system.
If $\h_0$ is the kernel of $h$, and $p_+$ is the spectral projection onto the strictly positive part of $h$, the pure ground states of $H$ are parametrized by 
projections $p_0$ onto subspaces of $\h_0$. They are given by the quasi-free (fermionic Gaussian) states $\omega_{p}$ on the algebra of canonical anti-commutation relations $\CAR(\h)$ on $\h$ determined by
\begin{align}
    \omega_{p}\big(a_l(x) a_m(y)^\dagger\big) = \bra x\otimes \bra l p\ket y\otimes\ket m ,\quad p = p_+ + p_0,
\end{align}
and Wick's theorem for higher-order correlators. In the following, we set $p_0=0$, but our results also hold for a wide range of choices of $p_0$ (if $\h_0$ is non-trivial), see section \ref{sec:fermin_mbz_univ}.

A bipartition of the system is defined by a projection $q$ onto a subspace $q\h$: The observables generated by $a(\xi),a^\dagger(\eta)$ with $\ket\xi,\ket\eta\in q\h$ belong to Alice, and those generated by $\ket\xi,\ket\eta\in q^\perp \h=(1-q)\h$ belong to Bob. 
We consider a bipartition of the spatial lattice into two half-infinite chains so that $q$ is the projection onto $\ell^2(\ZZ_+)\otimes \CC^b$.
The quantum state associated with Alice is simply the restriction of $\omega_p$  to the corresponding operators, which we can identify with $\omega_{qpq}$ (see section \ref{sec:fermin_mbz_univ}).

Physical observables of a fermionic system are necessarily even operators, i.e., invariant under parity transformations, as dictated by the parity super-selection rule \cite{wick_intrinsic_1952}.
We write $\CAR_e(\h)$ for the algebra of even operators generated by the creation and annihilation operators with vectors from $\h$. The even operators $\CAR_e(\h)$ operate on the even part $\mc H_e$ of the full Fock-space $\mc H$ associated with $\omega_p$, which is represented by a vector $\ket\Omega\in \mc H_e$. The local observable algebras $\CAR_e(q\h)$ and $\CAR_e(q^\perp\h)$ act on $\mc H_e$. Their weak closures are the \emph{von Neumann algebras} $\M_A$ and $\M_B$, belonging to Alice and Bob, respectively. The weak topology is the one making expectation value functionals with respect to vectors on $\H$ continuous, as physically expected. The triple $\left (\M_A,\M_B,\H_e\right)$ defines a bipartite system in the sense of \cite{long_paper}. 

To use the results of \cite{long_paper} to establish the universal embezzlement property, we have to ensure that i) $\M_A$ and $\M_B$ together generate all bounded operators on $\mc H_e$ and that ii) the two operator algebras exhaust all quantum degrees of freedom, which is captured by a condition called \emph{Haag duality}: It says that the commutant $\M_A':= \{ x \in \mathcal B(\mathcal H_e) : [x,y] = 0\ \forall y\in \M_A\}$ is precisely given by $\M_B$. 
If $\h$ is infinite-dimensional, the condition i) does a priori not automatically ensure condition ii). 
Settling this issue is our first result: 
\begin{theorem}
    If $\omega_p$ is a quasi-free pure state on $\CAR_e(\h)$, and $q$ a projection on $\h$, then $\M_A' = \M_B$.
\end{theorem}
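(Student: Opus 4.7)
My strategy is to derive Haag duality for the even subalgebras from the twisted Haag duality satisfied by the full CAR algebras $\N_A := \pi_p(\CAR(q\h))''$ and $\N_B := \pi_p(\CAR(q^\perp\h))''$ acting on the full Fock space $\H$. Since $\omega_p$ is a pure quasi-free state, a classical result of Araki on twisted duality applies: the graded commutant of $\N_A$ with respect to the fermion parity grading equals $\N_B$, or equivalently, $\N_A'$ (as an ordinary commutant in $\mc B(\H)$) coincides with a Klein twist of $\N_B$ that converts the graded commutation of odd elements into ordinary commutation.

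The inclusion $\M_B \subseteq \M_A'$ is almost immediate: two even fermionic polynomials, one from each side of the bipartition, commute because the anti-commutation signs produced when permuting odd generators occur in pairs and cancel. This relation already holds between even elements of $\N_A$ and $\N_B$ on $\H$, and descends to the restriction to $\H_e$.

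For the reverse inclusion I would work on the full Fock space and then restrict. Let $\tilde\M_A := \pi_p(\CAR_e(q\h))'' \subseteq \mc B(\H)$, so that $\M_A = \tilde\M_A|_{\H_e}$. Let $V \in \mc B(\H)$ be the global parity unitary implementing $\theta$ on $\pi_p(\CAR(\h))$ with $V\Omega = \Omega$, so $\H_e = \ker(V - 1)$. Given $x \in \M_A'$, extend by zero on $\H_o$ to obtain $\hat x \in \mc B(\H)$; since $\tilde\M_A$ is block-diagonal with respect to $\H_e \oplus \H_o$ (being even), a short calculation shows $\hat x \in \tilde\M_A'$. The structure of $\tilde\M_A'$ on $\H$ can then be determined from the twisted duality combined with the Galois correspondence for the inner $\ZZ_2$-action by local Alice-side parity: it is generated by $\N_B$ and this local parity, with its $\Ad V$-invariant part equal to $\M_B + V_A\M_B$. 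Restricting back to $\H_e$, the identity $V|_{\H_e} = 1$ forces $V_A|_{\H_e} = V_B|_{\H_e} \in \M_B$, which absorbs the extra generator, so $x = \hat x|_{\H_e} \in \M_B$.

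The main technical obstacle is that the local Alice-side parity unitary $V_A$ need not exist as a bounded operator on $\H$: this fails exactly when $[q,p]$ is not Hilbert--Schmidt, which is the type $\III_1$ regime central to the paper's main theorem. To bypass this, I would recast the final step purely in the $\ZZ_2$-graded language, using only the global $V$ and the abstract graded commutant of $\N_A$, and verify directly that restriction to $\H_e$ intertwines the graded commutant of the full CAR with the ordinary commutant for the even subalgebras. This is the operator-algebraic face of the parity super-selection rule, and it completes the reduction of even Haag duality to Araki's twisted duality.
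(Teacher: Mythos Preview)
Your overall strategy---reduce even Haag duality to Araki's twisted duality for the full CAR algebras---is exactly the paper's route. The easy inclusion and the cutdown $\M_A' = P_e\,\F_e(q)'\,P_e$ (your extend-by-zero step) are also the same. The gap is in the computation of $\F_e(q)'$ on the full Fock space. Your first argument uses a local Alice-side parity $V_A$ and the identification $\F_e(q)' = \N_B \vee \{V_A\}''$, which you correctly note fails precisely when $[q,p]$ is not Hilbert--Schmidt, i.e., in the type~$\III$ regime that is the whole point. Your proposed workaround (``recast in $\ZZ_2$-graded language using only the global $V$'') is not a proof: the assertion that restriction to $\H_e$ intertwines the graded commutant of $\N_A$ with the ordinary commutant of $\M_A$ is essentially a restatement of the theorem, and you give no mechanism for verifying it without the local parity you just abandoned.

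The paper closes this gap with a single clean identity that never invokes local parity: using the Klein twist $Z$ (which commutes with all even elements) and twisted duality $\F(q)' = Z\F(q^\perp)Z^*$, one computes
\[
\F(q^\perp)' \cap \{(-1)^F\}' \;=\; Z\F(q)Z^* \cap \{(-1)^F\}' \;=\; Z\big(\F(q)\cap\{(-1)^F\}'\big)Z^* \;=\; \F_e(q),
\]
whence $\F_e(q)' = \F(q^\perp)\vee\{(-1)^F\}''$. The extra generator is the \emph{global} parity $(-1)^F$, which always exists, and it is absorbed upon cutting down because $P_e = \tfrac12(1+(-1)^F)\in\{(-1)^F\}''$. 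This is the missing idea: the twist operator $Z$ lets you trade the local parity (which need not exist) for the global one (which always does).
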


Von Neumann algebras can be classified into different types. 
The main result of \cite{long_paper} states that the bipartite system we consider is a universal embezzler if and only if $\M_A$ is of so-called type $\III_1$ (which automatically ensures that the same holds for $\M_B = \M_A'$). 
The type of $\M_A$ is determined by the properties of Alice's reduced state $\omega_{qpq}$, which in turn is completely determined by the operator $qpq$ on $\h$. In fact, it was shown by Powers and St\o rmer in \cite{powers1970free_states} that the type of $\M_A$ is entirely determined by the spectrum of the operator $qpq$. To get an intuition for this, assume that $\h$ is finite-dimensional. If $qpq$ has eigenvectors $\ket{\xi_j} \in q\h$ with eigenvalues $\lambda_j$, we find from Wick's theorem that
\begin{align}
\omega_{qpq}\left(a^\dagger(\xi_{j_1}) \cdots a^\dagger(\xi_{j_m})a(\xi_{j_m})\cdots a(\xi_{j_1})\right) = \prod_{k=1}^m \omega_{qpq}\left(a^\dagger(\xi_{j_k})a(\xi_{j_k})\right) = \prod_{k=1}^m (1-\lambda_{j_k}). 
\end{align}
Each normal mode $\ket{\xi_j}$ corresponds to a single fermionic mode, and we just saw that all of them are uncorrelated. We can think of Alice's state $\omega_{qpq}$ as describing a tensor-product of a finite number of uncorrelated two-level systems, each represented by a density matrix $\rho_j = \mathrm{diag}(\lambda_j,1-\lambda_j)$.  The $\lambda_j$ measure how mixed the local state of Alice is. 
In the bipartite pure state, each normal mode of Alice is entangled with precisely one normal mode of Bob and carries at most one ebit of entanglement, which happens if $\lambda_j=1/2$. This entanglement structure is the characteristic property of Gaussian fermionic states.

In our case, $\h$ is infinite-dimensional. The type of the local observable algebra of an infinite spin chain described by an infinite tensor product $\otimes_j \rho_j$ has been studied by Araki and Woods \cite{araki1968factors}.  The resulting algebra has type $\III_1$ if and only if the ratios of products of eigenvalues of the $\rho_j$ are dense in $\RR_+$. Intuitively, this means that ``all amounts of entanglement occur with arbitrary multiplicity''. The results have been transferred to quasi-free fermionic states by Powers and St\o rmer, where the spectrum of $qpq$ can be continuous.

To determine the spectrum of $qpq$ we make use of the theory of block Toeplitz operators. This allows us to show:
\begin{theorem}\label{thm:intro-fermions}
If $H$ is critical, there exists a gauge-invariant quasi-free, pure, and translation-invariant ground state such that $\M_A$ is a type $\III_1$ factor. 
\end{theorem}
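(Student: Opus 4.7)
The plan is to deduce the type of $\M_A$ from a spectral analysis of the one-particle ``reduced density operator'' $A := qpq$ on $q\h$, after realizing it as a block Toeplitz operator. First, I would take $p_0 = 0$, so that $p = p_+$ is an orthogonal projection; then $\omega_p$ is a pure, gauge-invariant, translation-invariant quasi-free ground state of $H$, and purity combined with Haag duality (the preceding theorem) makes $\M_A$ a factor. By the Powers--St\o rmer analysis of quasi-free fermionic states---and standard arguments passing from the full $\CAR$ algebra to its index-$2$ even subalgebra without changing the type---the type of $\M_A$ is determined by the spectral data of $A$. Via the Araki--Woods ratio-set criterion, the factor is type $\III_1$ as soon as $\essspec(A)$ contains a non-trivial subinterval of $(0,1)$: the image of any such interval under $\lambda\mapsto\lambda/(1-\lambda)$ generates a dense subgroup of $\RR_+$, which is exactly what is required for the asymptotic ratio set to exhaust $\RR_+$.

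To analyze $A$ I would Fourier transform, obtaining the unitary $\h = \ell^2(\ZZ)\otimes\CC^b \cong L^2(S^1)\otimes\CC^b$ that sends the half-chain projection $q$ to $P_{H^2}\otimes\1_b$, where $P_{H^2}$ is the Hardy projection onto $H^2(S^1)$, and turns $p_+$ into multiplication by the matrix-valued symbol $\hat p_+\in L^\infty(S^1, M_b(\CC))$. Under this isomorphism, $A$ becomes the block Toeplitz operator $T_{\hat p_+}$ on $H^2(S^1)\otimes\CC^b$.

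The essential spectrum of such block Toeplitz operators is governed by a Gohberg--Krupnik-type description for piecewise continuous matrix-valued symbols,
\begin{align*}
\essspec(T_{\hat p_+}) = \bigcup_{k\in S^1}\bigcup_{\mu\in[0,1]} \sigma\bigl(\mu\,\hat p_+(k^+) + (1-\mu)\,\hat p_+(k^-)\bigr).
\end{align*}
At every continuity point of $\hat p_+$ this contributes only $\{0,1\}$, since $\hat p_+(k)$ is a projection. However, criticality guarantees at least one jump at some $k_0\in S^1$ with $P^- := \hat p_+(k_0^-)$ and $P^+ := \hat p_+(k_0^+)$ distinct projections in $M_b(\CC)$. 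The convex combinations $\mu P^+ + (1-\mu) P^-$ are self-adjoint contractions depending analytically on $\mu$, and their eigenvalues trace continuous curves between those of $P^-$ (at $\mu=0$) and those of $P^+$ (at $\mu=1$). Since $P^-\neq P^+$, at least one eigenvalue curve is nonconstant, and, starting and ending in $\{0,1\}$, must sweep through a nondegenerate subinterval of $(0,1)$, which is therefore contained in $\essspec(A)$.

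Combining these ingredients, $\essspec(A)$ contains an open interval inside $(0,1)$, so the Powers--St\o rmer/Araki--Woods classification identifies $\M_A$ as a hyperfinite type $\III_1$ factor. The main technical obstacle I anticipate is establishing the Gohberg--Krupnik spectral identity in the block setting for symbols of this specific projection-valued, piecewise continuous form---in particular, reducing the block problem at each discontinuity to a local scalar problem on the subspace where the two limit projections disagree---and verifying by a short linear-algebra computation that the resulting eigenvalue curves are genuinely nonconstant whenever $P^-\neq P^+$, ruling out any degenerate reducible configurations of the symbol at the jump.
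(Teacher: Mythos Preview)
Your proposal is correct and follows essentially the same route as the paper: choose $p_0=0$, reduce via Powers--St\o rmer and Araki--Woods to finding a nontrivial interval in the (essential) spectrum of the block Toeplitz operator $qpq$, invoke the Gohberg--Krupnik description of that spectrum for piecewise continuous matrix symbols, and then verify via the geometry of two projections that the convex path between the distinct limit projections at a jump sweeps out such an interval. One caveat: purity together with Haag duality does \emph{not} by itself force $\M_A$ to be a factor in the even fermionic setting---the paper shows $\M(q)$ acquires a center exactly when $\F(q)$ is type~$\I$---so factoriality should be read as a consequence of the type~$\III$ conclusion combined with the index-$2$ crossed-product/even-subalgebra argument you already invoke, rather than as an independent preliminary step.
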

In the case of a linear, gapless dispersion relation, the scaling limit of $H$ is expected to be well described by a conformal field theory (CFT) \cite{belavin1984infinite_conformal1, di_francesco_conformal_1997, fernandez1992random_walks}. It is known that the local observable algebras in CFTs are of type $\III_{1}$ \cite{brunetti1993modular_structure, gabbiani1993oa_cft}. Hence, their vacua yield embezzling states. We, therefore, expect that any local Hamiltonian with a CFT as its scaling limit has a universally embezzling ground state sector. 

\cref{thm:main-fermions} also applies to spin systems that are dual to the corresponding fermionic Hamiltonians by the Jordan-Wigner transformation \cite{jordan_uber_1928,lieb_two_1961,evans1998qsym}. This needs to be clarified since the relevant local observable algebras $\M_A$ are different due to the missing parity super-selection rule. Moreover, we show that Haag duality holds. This implies that these spin systems are also universal embezzlers. 

Our final result shows that the embezzling property of ground states in the thermodynamic limit also descends to finite systems. We call a family of pure states $\ket{\Omega_n}$ an \emph{embezzling family} if for every dimension $d$ and every error $\eps>0$ there exists a number $n(\eps,d)$ such that every state $\ket\Psi\in \CC^d\otimes \CC^d$ can be embezzled from $\ket{\Omega_m}$ with $m\geq n(\eps,d)$ up to error $\eps$ in the sense of \eqref{eq:mbz} \cite{leung_characteristics_2014,van_dam2003universal}.

\begin{theorem}[informal]\label{thm:main-families} Consider a family of local Hamiltonians $H_n$ on increasingly long chains with a unique ground state $\ket \Omega$ in the thermodynamic limit. If $\ket \Omega$ is an embezzling state, then there exists a sequence of ground states $\ket{\Omega_n}$ of $H_n$ that is an embezzling family. 
\end{theorem}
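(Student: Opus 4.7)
The strategy is to localize the embezzling unitaries supplied by $\ket{\Omega}$ onto a finite window of the chain and transfer the resulting embezzlement condition to $\ket{\Omega_n}$ via the weak-$*$ convergence of finite-volume ground states to $\omega$ on local observables, which is guaranteed by uniqueness of the thermodynamic-limit ground state. The principal obstacles are that the embezzling unitaries a priori lie in the weakly-closed algebras $\M_A,\M_B$ rather than in the strictly local subalgebras, and that the construction must be made uniform over the entire compact set of target states $\ket{\Psi}\in\CC^d\otimes\CC^d$.

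Fix $\eps>0$ and $\ket{\Psi}\in\CC^d\otimes\CC^d$. By the embezzling property of $\ket{\Omega}$, there exist unitaries $u_{AA'}\in\M_A\otimes B(\CC^d)$ and $u_{BB'}\in\M_B\otimes B(\CC^d)$ satisfying \eqref{eq:mbz} with small error. Writing $u_{AA'}=e^{ih_A}$ and $u_{BB'}=e^{ih_B}$ with self-adjoint, norm-bounded generators, Kaplansky's density theorem lets me approximate $h_A,h_B$ strongly by self-adjoint elements supported on finitely many sites (together with the ancilla); strong continuity of the exponential on uniformly bounded self-adjoint nets then yields strictly local unitaries $\tilde u_{AA'},\tilde u_{BB'}$ supported on a finite window $\Lambda_N\subset\ZZ$ for which \eqref{eq:mbz} continues to hold with controllable additional error. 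Expanding the squared-norm error as
\begin{equation*}
\|\tilde u_{AA'}\tilde u_{BB'}(\ket{\Omega}\otimes\ket{00})-\ket{\Omega}\otimes\ket{\Psi}\|^2 = 2-2\,\Re\,\omega(X_\Psi)
\end{equation*}
exhibits the embezzlement bound as an expectation value in $\omega$ of a single strictly local observable $X_\Psi$ supported on $\Lambda_N$, built from $\tilde u_{AA'}^*\tilde u_{BB'}^*$ paired against matrix elements of $\ket{\Psi}\bra{00}$ in the ancilla.

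By uniqueness of the thermodynamic-limit ground state together with weak-$*$ compactness of the state space, any sequence of finite-volume ground states $\omega_n$, regarded as states on the quasi-local algebra, obeys $\omega_n(A)\to\omega(A)$ for every local $A$; applied to $X_\Psi$ this transfers the embezzlement condition to $\ket{\Omega_n}$ for all $n$ larger than some threshold $n(\eps,\ket{\Psi})$. To obtain uniformity in $\ket{\Psi}$, I cover the unit sphere of $\CC^d\otimes\CC^d$ (compact) by a finite net $\{\ket{\Psi_k}\}_{k=1}^K$ of mesh size proportional to $\eps$, run the previous construction for each $\ket{\Psi_k}$ to obtain local unitaries $\tilde u_{AA'}^k,\tilde u_{BB'}^k$ and thresholds $n_k$, and set $n(\eps,d):=\max_k n_k$. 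For an arbitrary target $\ket{\Psi}$, apply the construction of the nearest net point; the triangle inequality
\begin{equation*}
\|\tilde u_{AA'}^k\tilde u_{BB'}^k(\ket{\Omega_n}\otimes\ket{00})-\ket{\Omega_n}\otimes\ket{\Psi}\| \leq \|\tilde u_{AA'}^k\tilde u_{BB'}^k(\ket{\Omega_n}\otimes\ket{00})-\ket{\Omega_n}\otimes\ket{\Psi_k}\| + \|\ket{\Psi_k}-\ket{\Psi}\|
\end{equation*}
absorbs the discretization error into total error $\eps$, yielding the desired embezzling family $(\ket{\Omega_n})_n$. The main technical obstacle is step two: namely, ensuring that the unitaries produced by the abstract embezzling property in the thermodynamic limit can be genuinely replaced by strictly local ones without spoiling the error, so that the weak-$*$ convergence of ground states can be invoked on a bounded region.
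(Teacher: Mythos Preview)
Your proposal is essentially correct and shares the three key ingredients with the paper's proof: (i) replace the abstract embezzling unitaries by ones living in a finite region, (ii) transfer the estimate to finite-volume ground states via weak-$*$ convergence (which follows from uniqueness, since cluster points of the finite-volume ground states are infinite-volume ground states), and (iii) use a finite net to make the threshold uniform in the target state. The paper, however, organizes the argument differently. Rather than working directly with the bipartite vectors, it first passes to the \emph{monopartite} picture: it shows that on a hyperfinite algebra any embezzling state restricts to a monopartite embezzling family on an exhausting chain of finite-dimensional subalgebras (this is where the localization of unitaries happens, via a Glimm-type lemma that the unitary group of $\bigcup_n\M_n$ is strong$^*$-dense in $\U(\M)$), and only at the end lifts back to the bipartite vector statement using the Powers--St{\o}rmer/Fuchs--van~de~Graaf inequality. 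Your direct route avoids this detour and the square-root loss it incurs; the paper's route, on the other hand, isolates a clean standalone theorem (embezzling state on hyperfinite $\M$ $\Rightarrow$ embezzling family) that applies beyond the ground-state setting. One small point you gloss over: the finite-volume $\omega_n$ are states on $\A(\Lambda_n)$, not on $\A$; to invoke weak-$*$ compactness and the ``cluster points are ground states'' argument you need to extend them to states on the full quasi-local algebra, as the paper does explicitly.
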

More generally, this result follows from a theorem showing that any embezzling state on a \emph{hyperfinite} von Neumann algebra gives rise to an embezzling family.
Thus, if we only want to embezzle states of a fixed dimension and up to a fixed error, we do not actually need the full thermodynamic limit. The thermodynamic limit provides a sharp distinction but is not physically required. We note that assuming the uniqueness of the ground state $\ket{\Omega}$ is not strictly necessary (cf.~\cref{rem:uniqueness}).

\section{Discussion}
We have shown that a very large class of effective models for critical, fermionic many-body systems have a remarkable large-scale structure of entanglement by acting as universal embezzlers when interpreted as bipartite systems. 
The embezzlement property crucially relies on the choice of bipartition. For example, suppose instead of considering two half-chains, Alice had access to the odd sublattice, and Bob had access to the even sublattice. In that case, it follows from the results of \cite{wilming_quantized_2023} and \cite{powers1970free_states} that the algebras $\M_A$ have type $\II_1$ instead of $\III_1$ for any sublattice symmetric Hamiltonian, such as \eqref{eq:xx}. According to \cite{long_paper}, no type $\II$ von Neumann algebra can host embezzling states. 

It has been shown that gapped one-dimensional, local many-body systems cannot be universal embezzlers as a consequence of the area law, which implies that $\M_A$ has type $\I$ \cite{matsui_boundedness_2013}. 
This raises the natural question of whether \emph{every} gapless, translation-invariant, local many-body Hamiltonian has a ground state that leads to a universal embezzler in the thermodynamic limit. It is tempting to conjecture that this is always the case if the system is critical in the sense of a CFT scaling limit, meaning that the large-scale observables of a half-space form a $\III_{1}$ factor. Interestingly, the Motzkin spin chain \cite{movassagh_supercritical_2016} provides an example of a system where the gap closes too quickly for a CFT scaling limit to exist. It would be of great interest to determine the type of the associated half-chain von Neumann algebra. Is it also of type $\III_1$? Does Haag duality hold?

More generally, in conjunction with \cite{long_paper}, our results motivate the study of the large-scale structure of entanglement in interacting critical many-body systems.
For example, it would be interesting to know whether the type of the half-chain algebras can be deduced from scaling laws of entanglement entropies. 

\paragraph {\bf Acknowledgements.}
We would like to thank Reinhard F.~Werner, Tobias J.~Osborne, and Robert Fulsche for helpful discussions. We thank Roberto Longo and Jan Derezi{\'n}ski for sharing their knowledge about index theory and quasi-free systems. LvL and AS have been funded by the MWK Lower Saxony via the Stay Inspired Program (Grant ID: 15-76251-2-Stay-9/22-16583/2022).

\section{Abstract bipartite systems and universal embezzlement}
\label{sec:bipartite}

In the remainder, we assume familiarity with basic operator algebraic notions. 
We refer to \cite{takesaki1} for a complete reference and to \cite[Sec.~3.1]{long_paper} for a brief overview.
We note a few notional conventions:

\paragraph{Notation.}
We use bra-ket notation for inner products. 
Given a C*-algebra $\mathcal A$ acting on a Hilbert space, $[\mathcal A\Omega]$ denotes the closed linear subspace generated by acting with $\mc A$ on the vector $\Omega$. We sometimes identify a subspace with the orthogonal projection onto the subspace to simplify notation.
If $P$ is a projection operator on a Hilbert space, we write $P^\perp := 1-P$. The spectrum of a linear operator $A$ is denoted as $\spec(A)$ and its essential spectrum as $\essspec(A)$. The algebra of complex $n\times n$ matrices is denoted as $M_n(\CC)$. Throughout, we use $A^*$ and not $A^\dagger$ to denote the adjoint of a linear operator, but we keep the notation $a(\xi),a^\dagger(\eta)$ for annihilation and creation operators (see below). Given $\ket\xi\in\h = \ell^2(\ZZ)\ox \CC^b$, we write 
\begin{align}
    \ket{\xi(x)} = (\bra{x}\ox \1)\ket{\xi},\quad \ket{\hat \xi(k)} := \sum_{x\in\ZZ}\e^{-ikx} \ket{\xi(x)},\quad x\in\ZZ,\ k\in S^1\cong [0,2\pi),
\end{align}
for the position and momentum representation via the unitary Fourier-transform $\h\rightarrow \h$. Note that $\ket{\xi(x)}, \ket{\hat\xi(k)} \in \CC^b$. \\

All our results rest upon the concept of a bipartite system in the sense of \cite{long_paper}, see also \cite{keyl_infinitely_2003,van_luijk_schmidt_2023}.
We denote by $\H$ the (separable) Hilbert space describing the joint system. 
The observable algebras of Alice and Bob are commuting von Neumann algebras $\M_A$ and $\M_B$ on $\H$, which we assume to satisfy \emph{Haag duality} \cite{keyl2006}. The latter property states that every symmetry of Alice's part, i.e., every bounded operator $b\in \B(\H)$ that commutes with all operators $a\in \M_A$, is contained in the algebra of Bob's part $\M_B$:
\begin{align}\label{eq:haag}
    \M_B  = \M_A' := \{ b\in \B(\H)\ :\ [a,b]=0\ \, \forall a\in \M_A\},
\end{align}
where the prime denotes taking the commutant.
Thus, a bipartite system $(\M_A,\M_B,\H)$ is fully specified by the von Neumann algebra $\M_A$ acting on $\H$.
Conversely, every von Neumann algebra $\M$ acting on a Hilbert space $\H$ determines a bipartite system via $\M_A=\M$, $\M_B=\M'$.

In many circumstances, it is reasonable to assume, or one can explicitly show, that the systems of Alice and Bob contain no classical degrees of freedom in the sense that the observable algebras have trivial center:
\begin{equation}\label{eq:center}
    Z(\M_A) = Z(\M_B)= \CC1,
\end{equation} 
where the center $Z(\N)$ of a von Neumann algebra $\N$ is $Z(\N):= \N\cap\N'$.
A von Neumann algebra with a trivial center is called a factor. 

Following \cite{long_paper,short_paper} a bipartite system is a \emph{universal embezzler} if for every vector $\ket\Omega_{AB}\in \H$, every finite-dimensional pure entangled state $\ket\Psi_{A'B'}\in \CC^d\otimes \CC^d$, and every error $\eps>0$ there are unitaries $u_{AA'} \in \M_A\otimes M_d(\CC)\otimes 1$ of Alice and $u_{BB'}\in \M_B \otimes 1\otimes M_d(\CC)$ of Bob such that
\begin{align}\label{eq:SI_mbz}
    \norm{u_{AA'} u_{BB'}\big(\ket\Omega_{AB}\otimes \ket{0}_{A'}\ket{0}_{B'}\big) - \ket\Omega_{AB}\otimes\ket\Psi_{A'B'}} < \eps,
\end{align}
where $\ket{0}_{A'}\ket{0}_{B'}\in \CC^d\otimes \CC^d$ is a fixed but arbitrary product state. Note that while the resource system shared between Alice and Bob is described by general von Neumann algebras, the entangled states that are `embezzled' from this resource system are standard, finite-dimensional entangled states.

\begin{theorem}[{\cite{long_paper}}]
    A bipartite system $(\M,\M',\H)$ is a universal embezzler if and only if $\M$, and hence $\M'$, is a von Neumann algebra of type $\III_1$.
\end{theorem}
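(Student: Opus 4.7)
The plan is to establish the two directions separately, relying on the Tomita--Takesaki standard form of $\M$ and the Connes--Størmer theory of type $\III_1$ factors.

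\textbf{Sufficiency.} Suppose $\M$ is of type $\III_1$. Since $M_d(\CC)$ is of type $\I_d$, the tensor product $\tilde \M := \M \ox M_d(\CC)$ is again a type $\III_1$ factor. By the Connes--Størmer transitivity theorem, the unitary orbit (under $\tilde\M$) of any faithful normal state on $\tilde \M$ is norm-dense in the set of all faithful normal states. I would apply this to the enlarged bipartite system $(\M\ox M_d\ox\1,\, \M'\ox\1\ox M_d,\, \H\ox \CC^d\ox\CC^d)$ with initial vector $\ket\Omega\ox\ket 0\ox\ket 0$ and target $\ket\Omega\ox\ket\Psi$: both vectors induce faithful normal states on $\tilde\M$, so there is a unitary $u_{AA'}\in \tilde\M$ sending the state induced by the initial vector to within $\eps$ (in norm) of the state induced by the target vector. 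To pass from state-level to vector-level approximation, I would invoke the Powers--Størmer inequality in the standard form: two representatives in the natural positive cone $\mc P^\natural$ have norm distance bounded by the square root of the norm distance of their induced states. Since vectors in $\mc P^\natural$ representing the same state agree, the residual discrepancy between $u_{AA'}(\ket\Omega\ox\ket 0\ox\ket 0)$ and $\ket\Omega\ox\ket\Psi$ is absorbed into a unitary $u_{BB'}\in \M'\ox\1\ox M_d$ on Bob's side by the standard polar-decomposition argument in the commutant.

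\textbf{Necessity.} Assume the bipartite system is a universal embezzler. I would rule out all types except $\III_1$ via the Connes invariant $S(\M)$, which equals $[0,\infty)$ precisely for type $\III_1$ and is a proper subset otherwise. For $\M$ of type $\I$ or $\II$, a (semifinite) trace exists and the reduced state induced by $\ket\Omega$ admits a genuine density operator; a pigeonhole argument on the singular values forbids embezzling a maximally entangled qudit of sufficiently large dimension $d$. For $\M$ of type $\III_\lambda$ with $\lambda\in[0,1)$, the modular spectrum $S(\M)\setminus\{0\}$ is either $\lambda^{\ZZ}$ or $\{1\}$, a discrete subgroup of $\RR_+$. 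Choosing a target state $\ket\Psi$ whose Schmidt ratios fall into the gaps of this subgroup (for instance, an irrational power of $\lambda$) produces an approximation obstruction that prevents embezzlement within sufficiently small $\eps$, contradicting universality.

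\textbf{Main obstacle.} The delicate step is sufficiency: Connes--Størmer provides only state-level transitivity, and upgrading this to a joint action by commuting unitaries from $\M\ox M_d$ and $\M'\ox M_d$ requires the full standard-form machinery (natural positive cone, modular conjugation, Powers--Størmer). The necessity direction is more algebraic but rests on correctly identifying the Connes spectrum -- rather than the coarse type label -- as the true obstruction, and exhibiting target states whose Schmidt data witnesses its failure for non-$\III_1$ factors.
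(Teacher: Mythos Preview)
The paper does not prove this theorem; it is quoted verbatim as a result of \cite{long_paper} and no argument is given. So there is no ``paper's own proof'' to compare against, and your sketch is being measured only against the cited source.

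Your overall architecture --- Connes--St{\o}rmer transitivity for sufficiency, Connes invariant $S(\M)$ for necessity --- is the correct one and is indeed what \cite{long_paper} uses. However, there is a real gap in your sufficiency argument. You claim that both $\ket\Omega\ox\ket0\ox\ket0$ and $\ket\Omega\ox\ket\Psi$ induce \emph{faithful} normal states on $\tilde\M=\M\ox M_d(\CC)\ox\1$. This is false: the reduced state of $\ket0_{A'}\ket0_{B'}$ on the $M_d(\CC)$ tensor factor is the rank-one projection $\kettbra0$, so the product state $\omega_\Omega\ox\kettbra0$ has support strictly below $1$ whenever $d\ge2$. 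The Connes--St{\o}rmer theorem as you stated it therefore does not apply directly. The repair, which is carried out in \cite{long_paper}, uses that in a type $\III$ factor every nonzero projection is Murray--von Neumann equivalent to $1$, so non-faithful normal states can be reduced to the faithful case after compressing and conjugating by a partial isometry; alternatively one invokes the stronger homogeneity result that in a type $\III_1$ factor the inner-automorphism orbit of \emph{any} normal state is norm dense in the full state space. Without one of these extra steps your sufficiency argument is incomplete.

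Your necessity sketch is along the right lines but is also thin in places. For semifinite $\M$ the ``pigeonhole on singular values'' idea is correct in spirit and can be made precise via majorization of the trace-class density operator. For type $\III_\lambda$, $0<\lambda<1$, your obstruction via Schmidt ratios outside $\lambda^{\ZZ}$ is the right intuition, but it needs to be tied to a genuine invariant (the modular spectrum of the marginal state on $\tilde\M$) that is preserved under Alice's unitaries and changes under tensoring with $\ket\Psi$. Finally, your treatment of $\III_0$ is missing: there $S(\M)=\{0,1\}$, and the obstruction comes from the nontrivial flow of weights rather than from ``gaps'' in a discrete subgroup, so a separate argument is required.
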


In the thermodynamic limit, the Hilbert space description of a many-body system is no longer unique, e.g., due to the presence or absence of infinitely many (quasi-)particles or different symmetry-broken ground states. To account for this fact, one separates the algebraic structure that is encoded in the commutativity of localized observables from the specification of a concrete Hilbert space representation.
This is mathematically captured by the notion of a \emph{quasi-local algebra}, which is the C*-algebra $\A$ that is generated by subalgebras $\A(\Lambda)$ describing observables localized in a finite region $\Lambda$ \cite{bratteli1997oa2}.
The Hilbert space enters by choice of a \emph{sector}, which is a representation of the quasi-local on a Hilbert space $\H$.
Typically, one is interested in ground state sectors of local Hamiltonians.
These arise as GNS representations of the quasi-local algebra $\A$ with respect to ground states of the dynamics generated by said Hamiltonians \cite[Prop.~5.3.19]{bratteli1997oa2}.
Different sectors can differ quite strongly in their physical properties.

To obtain a bipartite system in the sense of the above, we take two disjoint regions $\Lambda_A$ and $\Lambda_B$ and consider the pair of commuting subalgebras $\A(\Lambda_A), \A(\Lambda_B) \subset\A$.
In every sector, we get a pair of commuting von Neumann algebras $\M_A$ and $\M_B$ acting on the sector's Hilbert space $\H$ by taking the weak closures of $\A(\Lambda_A)$ and $\A(\Lambda_B)$ in the representation on $\H$.
If Haag duality holds, i.e., $\M_B=\M_A'$, which, in general, is a property of the sector, the triple $(\M_A,\M_B,\H)$ constitutes a bipartite system. 
Crucially, if the bipartite system $(\M_A,\M_B,\H)$ is a universal embezzler, the embezzling unitaries can be chosen from the local observable algebras $\A(\Lambda_A)$ and $\A(\Lambda_B)$:

\begin{lemma}\label{lem:cstar_mbz}
    Let $\A_A$ and $\A_B$ be C*-algebras, and let $\pi_A,\pi_B$ be a commuting pair of faithful representations on a joint Hilbert space $\H$.
    Let $\M_{A/B}$ be the weak closure of $\pi_{A/B}(\A_{A/B})$. Assume that Haag duality $\M_A=\M_B'$ holds and that $(\M_A,\M_B,\H)$ is a universal embezzler.
    
    For all state vectors $\ket\Omega_{AB}\in\H$ and $\ket\Psi_{A'B'}\in\CC^d\otimes\CC^d$ and all $\eps>0$, there exist unitaries $v_{AA'} \in \A_A\otimes M_d(\CC)$ and $v_{BB'}\in \A_B\otimes M_d(\CC) $ such that \eqref{eq:SI_mbz} holds with $u_{AA'} = (\pi_A\otimes \id \otimes1)(v_{AA'})$ and $u_{BB'}= (\pi_B\otimes 1\otimes \id)(v_{BB'})$.
\end{lemma}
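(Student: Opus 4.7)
The plan is a two-step approximation. First, apply the characterization theorem of universal embezzlers to obtain unitaries $u_{AA'} \in \M_A \otimes M_d(\CC) \otimes 1$ and $u_{BB'}\in \M_B \otimes 1\otimes M_d(\CC)$ that realize \eqref{eq:SI_mbz} with error $\eps/2$. Second, strongly approximate $u_{AA'}$ and $u_{BB'}$ on the finitely many relevant vectors by unitaries of the form $(\pi_A\otimes\id\otimes 1)(v_{AA'})$ and $(\pi_B\otimes 1\otimes\id)(v_{BB'})$ with $v_{AA'}\in \A_A\otimes M_d(\CC)$ and $v_{BB'}\in \A_B\otimes M_d(\CC)$ unitary, and conclude via the triangle inequality.

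The key ingredient is a Kaplansky-type density statement for unitaries: if $\A$ is a unital C*-subalgebra of a von Neumann algebra $\M$ with $\A''=\M$, then the unitary group $U(\A)$ is strongly dense in $U(\M)$. To see this, note that for any $u\in U(\M)$ one can write $u=\e^{\i h}$ for some self-adjoint $h\in\M$ with $\norm{h}\leq\pi$, by applying the Borel functional calculus (a measurable branch of the logarithm on $S^1$) to the spectral resolution of $u$; then the standard Kaplansky density theorem provides a net of self-adjoint elements $h_n\in\A$ with $\norm{h_n}\leq\pi$ converging strongly to $h$, and the map $h\mapsto \e^{\i h}$ is strongly continuous on norm-bounded sets of self-adjoint operators, so $\e^{\i h_n}\in U(\A)$ converges strongly to $u$. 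One applies this with $\A=\pi_A(\A_A)\otimes M_d(\CC)$ inside $\M=\M_A\otimes M_d(\CC)$ (and analogously for Bob), using that tensoring with the finite-dimensional type $\I$ factor $M_d(\CC)$ commutes with taking weak closures, so that the double commutant of $\pi_A(\A_A)\otimes M_d(\CC)$ is indeed $\M_A\otimes M_d(\CC)$.

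With this density statement in hand, choose first $v_{BB'}$ such that $\tilde u_{BB'}:=(\pi_B\otimes 1\otimes\id)(v_{BB'})$ satisfies $\norm{(\tilde u_{BB'}-u_{BB'})\ket{\Omega\otimes 00}}<\eps/4$, and then choose $v_{AA'}$ such that $\tilde u_{AA'}:=(\pi_A\otimes\id\otimes 1)(v_{AA'})$ satisfies $\norm{(\tilde u_{AA'}-u_{AA'})\tilde u_{BB'}\ket{\Omega\otimes 00}}<\eps/4$ (the existence of these follows because strong convergence gives pointwise convergence on any fixed vector). A triangle inequality, using the isometry of $\tilde u_{AA'}$ to compare $\tilde u_{AA'}u_{BB'}\ket{\Omega\otimes 00}$ with $\tilde u_{AA'}\tilde u_{BB'}\ket{\Omega\otimes 00}$, then yields the bound by $\eps/4+\eps/4+\eps/2=\eps$. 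The main obstacle is the passage from strong density of the algebra to strong density of its unitary group, since plain Kaplansky only gives approximation in the unit ball; this is resolved by the exponential-map argument, which crucially depends on the connectedness of $U(\M)$ for von Neumann algebras $\M$ and on the joint strong continuity of the exponential map on norm-bounded self-adjoint parts.
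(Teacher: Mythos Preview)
Your argument is correct and follows essentially the same route as the paper: obtain embezzling unitaries in the von Neumann algebras from the universal embezzler property, then approximate them strongly by unitaries coming from the C*-algebras. The paper simply cites the density of the unitary group of a C*-algebra in the unitary group of its weak closure \cite[Thm.~II.4.11]{takesaki1}, whereas you spell out a proof of this fact via the Borel logarithm and Kaplansky density for self-adjoints; your explicit $\eps/2+\eps/4+\eps/4$ bookkeeping likewise just makes precise what the paper leaves implicit.
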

\begin{proof}
    Since $(\M_A,\M_B,\H)$ is a universal embezzler, we find unitaries $u_{AA'}$ and $u_{BB'}$ in the weak closures of the respective algebras.
    The claim then follows from the general fact that the unitary group of a C*-algebra $\A$ on a Hilbert space $\H$ lies strongly dense in the unitary group of its weak closure $\overline \A^w$ \cite[Thm.~II.4.11]{takesaki1}.
\end{proof}

\section{Fermionic embezzlement}
\label{sec:fermi_mbz}
We want to claim that fermionic many-particle systems provide examples of universal embezzlers. To do this, we have to ensure that no problems arise from the parity superselection rule that constrains the implementable local unitary operations to those commuting with the parity operator (`even' unitaries). 
In this section, we show that, indeed, no problems arise as long as the finite-dimensional systems on which the embezzlement is performed are not fermionic themselves (it is, of course, impossible to embezzle odd fermionic states from even fermionic states).
To do this, we first review the general formalism of fermionic quantum systems and then present results that allow us to determine when fermionic systems can be interpreted as bipartite systems fulfilling Haag duality.

A fermionic quantum system is described by the C*-algebra $\CAR(\h)$ of canonical anticommutation relations (CAR) over a single-particle Hilbert space $\h$. The algebra $\CAR(\h)$ is generated by operators $a(\xi),a^\dagger(\eta)$ for $\ket\xi,\ket\eta\in\h$ such that
\begin{align}
    \{a(\xi),a(\eta)\} = \{a^\dagger(\xi),a^\dagger(\eta)\} = 0,\quad \{a(\xi),a^\dagger(\eta)\} = \braket{\xi}{\eta},
\end{align}
where $a^\dagger$ depends linearly on its argument and $a^\dagger(\eta) = a(\eta)^*$.\footnote{The C*-norm on on the CAR algebra is unique \cite[Sec.~6]{evans1998qsym}.}
The operator $-1$ on $\h$ induces an automorphism $\gamma$ of order $2$ on $\CAR(\h)$, which is called the parity automorphism.
Every $x\in\CAR(\h)$ uniquely decomposes into an even and odd part: $x=\frac12(x+\gamma(x)) + \frac12(x-\gamma(x))= x_{e}+x_{o}$, where $\gamma(x_{e})=x_{e}$ and $\gamma(x_{o})= -x_{o}$ under the parity automorphism.
We define the (unital) even *-subalgebra of $\CAR(\h)$ as
\begin{align}
    \CAR_{e}(\h) := \{ x\in \CAR(\h)\ :\ x=\gamma(x) \}. 
\end{align}
Due to the parity super-selection rule, valid physical observables are represented by $\CAR_{e}(\h)$ \cite{wick_intrinsic_1952}.

Localization regions of operators are determined by subspaces $q\h$ of the single-particle Hilbert space $\h$, where $q$ is a projection operator. 
For example, $\h$ could be given by $\ell^2(\ZZ)$ and $q\h = \ell^2(\ZZ_+)$ describes the operators associated to the right half-chain.
The operators localized on the right half-chain are then given by $\CAR(q\h)\subseteq \CAR(\h)$. 
The local \emph{observables} (relative to $q$) are given by the even subalgebra $\CAR_{e}(q\h)\subset \CAR_{e}(\h)$. 

A state $\omega$ on $\CAR(\h)$ is called \emph{even} if $\omega\circ\gamma=\omega$.
If $\omega$ is an even state, we will denote by $\omega
_{e}$ its restriction to $\CAR_{e}(\h)$. It follows that an even state $\omega$ is pure if and only if $\omega_{e}$ is pure \cite[Lem.~6.23]{evans1998qsym}.

In the following, $\omega$ will be an even pure state.
We denote the GNS representations of $\omega$ and $\omega_{e}$ by $(\pi,\H,\ket\Omega)$ and $(\pi_{e},\H_{e},\ket{\Omega_{e}})$, respectively.
The GNS representation of $\omega_{e}$ arises by truncating the restriction of $(\pi,\H,\ket\Omega)$ to $\CAR_{e}(\h)$ with the projection $P_e=[\pi(\CAR_{e}(\h))\ket\Omega]$, i.e., $\H_{e}=P_{e}\H$, $\ket{\Omega_{e}}=P_{e}{\ket\Omega}$ and $\pi_{e}=P_{e}\pi(\placeholder)P_{e}$.
The parity automorphism $\gamma$ is implemented by the self-adjoint unitary
\begin{align}
    (-1)^F = P_{e} - (1-P_{e}) = 2 P_{e} - 1.
\end{align}
We also define the \emph{twist operator} \cite{bisognano1976duality,baumgaertel2002twisted_duality}
\begin{align}
    Z = P_{e} - i (1-P_{e}) = (1+i)P_{e} - 1 = \tfrac{1}{1+i}(1+i(-1)^{F}).
\end{align}
which is unitary, fulfills $Z^2 = (-1)^F$ and commutes with even operators.

For a projection $q$ on $\h$, we define the von Neumann algebras 
\begin{align}
    \M(q) :=\pi_{e}(\CAR_{e}(q\h))''
\end{align}
and
\begin{align}
    \F(q) := \pi(\CAR(q\h))'', \qquad  \F_e(q) := \pi(\CAR_{e}(q\h))''.
\end{align}
acting on $\H_{e}$ and $\H$ respectively. Since $\omega$ is a pure state on $\CAR(\h)$ and $\CAR(q\h)$ and $\CAR(q^\perp\h)$ generate $\CAR(\h)$, we have $\F(q)\vee\F(q^\perp)=\B(\H)$.

Local operations (relative to $q$) are precisely those with Kraus operators $k_j \in \M(q)$ \cite{locc}. In particular, a local unitary operation is represented by a unitary $u\in \M(q)$. 
Our goal in the following is to determine when we have Haag duality, i.e., $\M(q) = \M(q^\perp)'$, when $\M(q)$ is a factor, and when $\M(q)$ has type $\III_1$. 
In the following, we need to be careful to distinguish between $\M(q)$ and $\F_e(q)$.

Note that $\F(q)$ can never be the commutant of $\F(q^\perp)$ because the odd operators in $\F(q)$ anticommute with the odd operators in $\F(q^\perp)$. The importance of the twist operator stems from the relations $Z x_{o} Z^{*}  = i x_{o} (-1)^F$ and $Z x_{e} Z^{*} = x_{e}$. 
Hence, if $y_{o} x_{o} = -x_{o} y_{o}$ we have
\begin{align}
    y_{o} Z x_{o} Z^{*} = Z x_{o} Z^{*} y_{o}. 
\end{align}
This shows that $\F(q)$ commutes with $Z \F(q^\perp)Z^{*}$, i.e.,
\begin{align}
\label{eq:twist_inclusion}
    Z\F(q^\perp)Z^{*} & \subseteq \F(q)'.
\end{align}

\begin{lemma}
\label{lem:pure_factor}
If $\omega$ is an even pure state, then $\F(q)$ is a factor.
\end{lemma}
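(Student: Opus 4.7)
I would aim to show $Z(\F(q)) = \CC 1$. Since $\omega$ is pure, the GNS representation $\pi$ is irreducible and $\pi(\CAR(\h))' = \CC 1$; because the subalgebras $\CAR(q\h)$ and $\CAR(q^\perp\h)$ generate $\CAR(\h)$ as a $*$-algebra, this yields $\F(q)\vee\F(q^\perp) = \B(\H)$.

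Given $z\in Z(\F(q))$, the parity automorphism $\gamma$ preserves $\F(q)$ and its center, so the decomposition $z = z_e + z_o$ into $\gamma$-even and $\gamma$-odd parts lies termwise in $Z(\F(q))$. The even case is handled by the CAR alone: because creation and annihilation operators from orthogonal subspaces anticommute, a $\gamma$-even polynomial in $\CAR(q\h)$-generators commutes with every element of $\CAR(q^\perp\h)$ (the anticommutation signs cancel in pairs). Consequently $z_e \in \F(q)' \cap \F(q^\perp)' = (\F(q)\vee\F(q^\perp))' = \CC 1$.

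For the odd part $z_o$, assume $z_o\neq 0$. Then $z_o^*z_o$ and $z_o^2$ are $\gamma$-even central elements and therefore scalars by the previous step; after rescaling and a phase adjustment, I may replace $z_o$ by a self-adjoint odd unitary $u\in Z(\F(q))$ with $u^2 = 1$. The CAR implies that $u$ commutes with the even part of $\F(q^\perp)$ and anticommutes with its odd part, so the unitary $w := u(-1)^F$ commutes with all of $\F(q^\perp)$ and with the even part of $\F(q)$, while anticommuting with the odd part of $\F(q)$.

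The main obstacle is ruling out such a $u$. To complete the argument I would use the symmetric form of the twist inclusion \eqref{eq:twist_inclusion} (with $q$ and $q^\perp$ exchanged), namely $Z\F(q)Z^* \subseteq \F(q^\perp)'$, which places $w$ inside $Z\F(q)Z^*$; combined with the irreducibility of $\pi$, this should force $w$ to be a scalar multiple of $(-1)^F$, hence $u$ to be a scalar, contradicting its odd nature. An alternative route is to invoke the general fact that pure states on a graded tensor product of CAR algebras restrict to factor states on each tensor factor; applied to $\CAR(\h)$ viewed as the graded tensor product of $\CAR(q\h)$ and $\CAR(q^\perp\h)$, this gives that $\omega|_{\CAR(q\h)}$ is a factor state, so that its GNS closure $\pi_q(\CAR(q\h))''$ is a factor. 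Then $\F(q) = \pi(\CAR(q\h))''$, being an amplification of $\pi_q(\CAR(q\h))''$ on $\H$, is also a factor.
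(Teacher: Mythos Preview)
Your handling of the even part of a central element is correct and coincides with the paper's: an even $z_e\in Z(\F(q))$ commutes with all of $\F(q^\perp)$ (this is exactly the content of the twist inclusion restricted to even elements), so $z_e\in\F(q)'\cap\F(q^\perp)'=(\F(q)\vee\F(q^\perp))'=\CC1$.

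The gap is in your odd case. You correctly extract a self-adjoint odd unitary $u\in Z(\F(q))$ with $u^2=1$ and form $w=u(-1)^F$. But the step in which the twist inclusion $Z\F(q)Z^*\subseteq\F(q^\perp)'$ is supposed to ``place $w$ inside $Z\F(q)Z^*$'' and then, ``combined with irreducibility,'' force $w$ to be scalar does not go through. That $w\in Z\F(q)Z^*$ is automatic (since $u\in\F(q)$ and $ZuZ^*$ is a scalar multiple of $u(-1)^F$); what you would need in order to exploit either irreducibility or the paper's relation $\F(q)'\cap Z\F(q)Z^*\subseteq\CC1$ is $w\in\F(q)'$. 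You yourself observe that $w$ \emph{anticommutes} with the odd part of $\F(q)$, so $w\notin\F(q)'$ and $w\notin(\F(q)\vee\F(q^\perp))'$. The twist inclusion only goes one way and gives no leverage here; the argument does not close.

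The paper disposes of the odd case by a different device: having produced $u=u^*$, $u^2=1$, it forms the projection $p=\tfrac12(1-u)\in Z(\F(q))$ and argues that this projection must be a scalar, forcing $u=\pm1$ and contradicting oddness. Your alternative route via the graded tensor product decomposition $\CAR(\h)\cong\CAR(q\h)\,\hat\otimes\,\CAR(q^\perp\h)$, together with the fact that pure states restrict to factor states on graded tensor factors, is a legitimate strategy, but as stated it imports the result rather than proving it; the claim that $\F(q)$ is an amplification of the GNS factor of $\omega|_{\CAR(q\h)}$ requires exactly the quasi-equivalence (equivalently, factoriality) you are trying to establish. If you want a self-contained argument you must actually eliminate the odd central element directly.
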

\begin{proof}
    Since $\omega$ is pure we have $\B(\H) = \F(q)\vee \F(q^\perp)$ (because $\CAR(q\h)$ and $\CAR(q^\perp\h)$ generate $\CAR(\h)$).
    Taking commutants and using \eqref{eq:twist_inclusion}, we find
    \begin{align}
        \CC 1 = \F(q)' \cap \F(q^\perp)' \supseteq \F(q)' \cap Z \F(q) Z^{*}.
    \end{align}
    Now suppose $x\in \F(q)\cap \F(q)'=Z(\F(q))$. If $x$ is even, it is also part of $Z \F(q) Z^{*}$ and therefore $x\propto 1$. If $x\neq0$ is odd, then $x^2$, $x^*x$ and $xx^*$ are even elements of $Z(\F(q))$ and hence $x^2\propto 1, x^*x\propto 1, xx^*\propto 1$. Since $x\neq0$, we have $x^*x=c 1$ for some $c>0$, and we can rescale $x$ such that $x^*x=1$. This implies $xx^*=1$ and that $x$ is unitary, because $xx^*\propto 1$. Now, $\norm{x^2}^2=\norm{(x^*)^2x^2}=1$ yields $x^2=c'1$ with $|c'|=1$, and we have $x=x^*$ and $x^2=1$ after another rescaling. Then, $q=\tfrac{1}{2}(1-x)\in Z(\F(q))$ is a projection. Since a projection is necessarily even ($q=q^2$), this yields $q=0$ and, therefore, $x=1$. This is a contradiction since $x$ was assumed to be odd. Thus, $\F(q)\cap \F(q)' = \CC 1$.
\end{proof}
Equality in \eqref{eq:twist_inclusion} can be understood as a fermionic version of Haag duality.
\begin{definition}
We say that $\omega$ and a closed subspace $q\h\subset \h$ fulfill \emph{twisted duality} if
\begin{align}
    \F(q)' = Z \F(q^\perp)Z^{*}.
\end{align}
\end{definition}
While the operational significance of twisted duality may at first be unclear, we show that it implies Haag duality on the level of the physical observables.
\begin{proposition}\label{prop:duality}
    Suppose $\omega$ is an even pure state on $\CAR(\h)$. If $\omega$ and $q\h\subset \h$ fulfill twisted duality, then $\M(q)'=\M(q^\perp)$.
\end{proposition}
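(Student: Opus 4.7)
The inclusion $\M(q^\perp)\subseteq \M(q)'$ is immediate: any two monomials $\alpha\in\CAR_e(q\h)$ and $\beta\in\CAR_e(q^\perp\h)$ of even degrees $2k,2\ell$ commute inside $\CAR(\h)$, since reordering them produces the sign $(-1)^{(2k)(2\ell)}=1$, and this commutation persists upon taking weak closures.

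For the reverse inclusion, my plan is to lift an operator $x\in\M(q)'\subset \B(\H_e)$ to an even operator $\tilde x\in\F_e(q^\perp)\subset \B(\H)$ and then compress back to $\H_e$. Specifically, set $\tilde x|_{\H_e}:=x$ and define $\tilde x$ on $\H_o$ by the rule
\[
 \tilde x(zv) := z(xv)\qquad (z\in \F_o(q),\ v\in \H_e).
\]
The key consistency check uses that $z_j^* z_i\in \F_e(q)$ for $z_i,z_j\in\F_o(q)$ (odd times odd is even), so $[x,z_j^*z_i]=0$ on $\H_e$; hence whenever $\sum_i z_iv_i = 0$, applying $z_j^*$ yields $z_j^*\sum_i z_i(xv_i) = x\sum_i z_j^*z_i v_i = 0$ for every $j$, and density of $\F_o(q)\H_e$ in $\H_o$ (coming from purity/cyclicity of $\ket\Omega$) forces $\sum_i z_i(xv_i)=0$. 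A parallel $z^*z$ argument, combined with the positive-operator inequality $x^*x \le \|x\|^2$ and the commutation $[x^*x,z_j^*z_i]=0$, shows $\|\tilde x\|\le\|x\|$, so $\tilde x$ extends to a bounded operator on all of $\H$.

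By construction $\tilde x$ preserves the $\ZZ_2$-grading, hence is even, and commutes with $\F(q)$: commutation with $\F_e(q)$ follows from $x\in\M(q)'$ on $\H_e$ together with the defining rule on $\H_o$ (using $yz\in\F_o(q)$ for $y\in\F_e(q)$, $z\in\F_o(q)$), while commutation with odd $z\in\F_o(q)$ is precisely the defining rule. Thus $\tilde x\in\F(q)'\cap\F_e(\h)$. Applying twisted duality, $\F(q)'=Z\F(q^\perp)Z^*$, and using $Zx_eZ^*=x_e$ together with $Zx_oZ^*=-ix_o(-1)^F$ one sees that the elements of $Z\F(q^\perp)Z^*$ commuting with $(-1)^F$ are exactly $\F_e(q^\perp)$. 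Hence $\tilde x\in\F_e(q^\perp)$, and compressing yields $x=P_e\tilde xP_e|_{\H_e}\in\M(q^\perp)$, as desired.

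The main obstacle is the density of $\F_o(q)\H_e$ in $\H_o$, which underwrites both well-definedness and boundedness of the extension. This should follow from purity of $\omega$ via a Reeh–Schlieder-style cyclicity argument: $\ket\Omega$ is cyclic for $\F(\h)$, and combined with the known cyclic/separating structure of pure quasi-free-type states for local subalgebras, odd elements in $\F(q)$ acting on the even part of the Hilbert space produce a dense subspace of $\H_o$. Degenerate cases (e.g., $q=0$, where $\F_o(q)=0$) must be treated separately using irreducibility of $\pi_e$ on $\H_e$ to conclude $\M(q^\perp)=\B(\H_e)$ directly.
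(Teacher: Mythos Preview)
Your approach is genuinely different from the paper's. The paper never lifts an operator from $\H_e$ to $\H$; instead it first proves an \emph{even-commutant lemma}, $\F_e(q)'=\F(q^\perp)\vee\{(-1)^F\}''$, directly from twisted duality and the identity $\F_e(q)=\F(q)\cap\{(-1)^F\}'$, and then applies the standard cutdown formula (Kadison--Ringrose, Prop.~5.5.6): since $\M(q)=\F_e(q)P_e$ with $P_e\in\F_e(q)'$, one gets $\M(q)'=P_e\F_e(q)'P_e=P_e\big(\F(q^\perp)\vee\{(-1)^F\}''\big)P_e=P_e\F(q^\perp)P_e=\M(q^\perp)$. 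This is purely algebraic and avoids any density issue.

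Your explicit lifting works, but the step you flag as ``the main obstacle'' --- density of $\F_o(q)\H_e$ in $\H_o$ --- is not a Reeh--Schlieder phenomenon and does not require cyclicity at all. For $q\neq0$ pick any unit vector $\xi\in q\h$ and set $c:=a(\xi)+a^\dagger(\xi)\in\CAR_o(q\h)$. Then $c$ is an odd self-adjoint \emph{unitary} ($c^2=\{a(\xi),a^\dagger(\xi)\}=1$), and since odd unitaries intertwine $P_e$ and $P_o$ one has $c\H_e=\H_o$ exactly, not merely densely. This observation also lets you bypass the consistency and boundedness checks entirely: simply \emph{define} $\tilde x|_{\H_o}:=c\,x\,c$; a one-line computation using $[x,\,c z|_{\H_e}]=0$ for $z\in\F_o(q)$ (note $cz\in\F_e(q)$) recovers your rule $\tilde x(zv)=z(xv)$. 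The remaining verifications that $\tilde x$ is even and commutes with $\F(q)$, and the final application of twisted duality, are correct as you wrote them.

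In short: your route is valid and more hands-on than the paper's structural argument, but the density step should be replaced by the one-line Majorana-unitary observation; appealing to cyclicity of $\ket\Omega$ for local subalgebras is both unnecessary and, for general even pure states (not just quasi-free ones), not obviously available.
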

The proof of \cref{prop:duality} makes use of the following Lemma, which is a reformulation of \cite[Prop.~17.64]{derezinski_mathematics_2023}.
\begin{lemma}\label{lem:even-commutant}
    Let $\omega$ be pure and even, and suppose that $\omega$ and $q$ fulfill twisted duality. Then $ \F_e(q)' = \F(q^\perp) \vee \{(-1)^F\}''$.
\end{lemma}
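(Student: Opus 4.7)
The plan is to derive the identity in three steps: first, recognize $\F_e(q)$ as a relative commutant; second, apply the standard formula $(M\cap N)'=M'\vee N'$ for von Neumann algebras; third, massage the result using twisted duality together with $Z\in\{(-1)^F\}''$.

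The first step is to show that
\begin{align}
    \F_e(q)=\F(q)\cap\{(-1)^F\}'.
\end{align}
The inclusion $\subseteq$ is immediate, since every generator in $\pi(\CAR_e(q\h))$ is fixed by $\gamma=\Ad((-1)^F)$, hence commutes with $(-1)^F$, and this property passes to the weak closure. For the reverse inclusion I introduce the $\sigma$-weakly continuous linear idempotent
\begin{align}
    E\colon \F(q)\to\F(q),\qquad E(y)=\tfrac{1}{2}\bigl(y+(-1)^F y(-1)^F\bigr).
\end{align}
Any $y\in\F(q)\cap\{(-1)^F\}'$ is fixed by $E$. On the other hand, $E$ sends $\pi(\CAR(q\h))$ onto $\pi(\CAR_e(q\h))$, so by $\sigma$-weak continuity $E(\F(q))\subseteq \F_e(q)$. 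Therefore $y=E(y)\in\F_e(q)$, giving the desired equality.

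The second step is the standard observation that for two von Neumann algebras $M,N\subseteq\B(\H)$ one has $(M\cap N)'=M'\vee N'$; indeed, $M\cap N = (M'\vee N')'$, and taking commutants once more yields the claim. Applied with $M=\F(q)$ and $N=\{(-1)^F\}'$, this gives
\begin{align}
    \F_e(q)'=\F(q)'\vee \{(-1)^F\}''.
\end{align}

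For the third step I invoke twisted duality, $\F(q)'=Z\F(q^\perp)Z^*$, and note that $Z=\tfrac{1}{1+i}(1+i(-1)^F)\in\{(-1)^F\}''$. Hence conjugation by $Z$ takes place inside the join with $\{(-1)^F\}''$, so $Z\F(q^\perp)Z^*\vee\{(-1)^F\}''=\F(q^\perp)\vee\{(-1)^F\}''$, completing the proof.

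The main obstacle is the first step: the identification of the fixed-point algebra of $\F(q)$ under $\gamma$ with $\F_e(q)$. One has to be careful that the even part of the weak closure is indeed the weak closure of the even part, which is precisely what the $\sigma$-weakly continuous projection $E$ is designed to establish. Once this is in hand, the remainder is formal manipulation of commutants.
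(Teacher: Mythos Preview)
Your proof is correct and follows essentially the same route as the paper: both hinge on the identification $\F_e(q)=\F(q)\cap\{(-1)^F\}'$, twisted duality, and the fact that $Z\in\{(-1)^F\}''$. The only differences are cosmetic---you take commutants first and then absorb the $Z$-conjugation into $\{(-1)^F\}''$, whereas the paper conjugates first and takes commutants last---and you supply the averaging-projection argument for the first identity that the paper simply asserts.
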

\begin{proof}
The even part $\F_e(q)$ can be written as $\F_e(q) = \F(q) \cap \{(-1)^F\}'$.
We now write
\begin{align} \nonumber
    \F(q^\perp)' \cap \{(-1)^F\}' & = Z \F(q)Z^{*} \cap \{(-1)^F\}'\nonumber\\
    & = Z \left(\F(q)\cap \{(-1)^F\}'\right)Z^{*} \nonumber\\
    &= Z  \F_e(q) Z^{*} =  \F_e(q),
\end{align}
where we used that $Z$ commutes with even operators. Taking the commutant on both sides yields the claim.
\end{proof}
\begin{proof}[Proof of \cref{prop:duality}]
We have that $\M(q) =  \F_e(q) P_{e}$ with $P_{e} \in  \F_e(q)'$. By \cite[Prop.~5.5.6]{KadisonRingrose1} this implies that $\M(q)' = P_{e}  \F_e(q)' P_{e}$. Using \cref{lem:even-commutant} we find
\begin{align}
    \M(q)' = P_{e} \F(q^\perp)\vee\{(-1)^F\}'' P_{e} = P_{e} \F(q^\perp) P_{e} = \M(q^\perp),
\end{align}
where we used that $P_{e}=\tfrac{1}{2}(1+(-1)^F)\in \{(-1)^F\}''$.
\end{proof}

\begin{remark}[Jordan-Wigner transformations and Haag duality]\label{remark:jw} It is well known that one-dimen\-sional fermionic systems have a dual representation in terms of spin chains via the Jordan-Wigner transformation. For a bipartition into two semi-infinite half-chains, the Jordan-Wigner transformation preserves the structure of the subsystems. Due to a lack of parity super-selection rule, the physical observable algebras are now given by the Jordan-Wigner duals of $\F(q)$ and $\F(q^\perp)$. If the fermionic factors $\F(q)$ arise from a pure quasi-free state, twisted duality will imply Haag duality for the respective local factors describing the spin system. For details of this construction, we refer to \cite{keyl2006} (see also \cite[Sec.~6.10]{evans1998qsym}). As a consequence, our results also apply to the Jordan-Wigner duals of the considered fermionic systems. 
\end{remark}

Next, we need to determine the type of $\M(q)$. 
The parity automorphism $\gamma$, represented on $\H$ by $(-1)^F$, restricts to an automorphism $\gamma|_q$ on $\F(q)$.
Depending on $\omega$, this restricted automorphism may be an inner automorphism of $\F(q)$, i.e., it may be represented as $\gamma|_q(x) = U_q x U_q^*$ by a unitary operator $U_q \in \F(q)$, or not. That $\gamma|_q$ is inner on $\F(q)$ does not imply that it is inner as an automorphism on $\CAR(q\h)$. We, therefore, call $\gamma|_q$ \emph{weakly inner} if it is inner on $\F(q)$. If $\gamma|_q$ is not weakly inner, we say that it is \emph{outer}.

\begin{proposition}\label{prop:factor-type}
Let $\omega$ be pure and even, and suppose that $\omega$ and $q$ fulfill twisted duality. Then:
\begin{enumerate}
    \item\label{item:cutdown}
    The cutdown $\F_e(q)\ni x\mapsto P_e x P_e \in \M(q)$ is an isomorphism.
    \item\label{item:factor} $\F_e(q)\cong\M(q)$ is a factor if and only if $\gamma|_q$ is outer.
    \item\label{item:types} If $\gamma|_q$ is weakly inner, the (sub)type of the von Neumann algebra $\F_e(q)'$ coincides with the (sub)type of $\F(q^{\perp})$.
    If $\gamma|_q$ is outer and $\F(q^{\perp})$ has type $X\in \{\II_1,\II_\infty,\III_0,\III_1\}$ then $\F_e(q)'$ has type $X$.
    If $\gamma|_q$ is outer and $\F(q^{\perp})$ has type $\III_\lambda$ with $0<\lambda<1$, then $\F_e(q)'$ has type $\III_{\lambda^2},\III_{\lambda^{1/2}}$, or $\III_\lambda$. 
\end{enumerate}
\end{proposition}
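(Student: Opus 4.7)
My plan is to prove the three items of Proposition~\ref{prop:factor-type} in order, leveraging \cref{lem:pure_factor} and \cref{lem:even-commutant} throughout. Before starting, I would establish a key preliminary calculation used in items (1) and (3): if $\gamma|_q$ is weakly inner, implemented by some $V_q \in \F(q)$, then after normalizing to $V_q^2 = 1$ (using that $V_q^2 \in Z(\F(q)) = \CC 1$), one checks that $V_q$ is automatically even, and that $W := (-1)^F V_q^*$ satisfies $\Ad(W)|_{\F(q)} = \id$ by direct computation using $\gamma^2 = \id$. Hence $W \in \F(q)'$; since $W$ is even, twisted duality combined with the fact that the twist $Z$ commutes with even elements forces $W \in \F(q^\perp)$. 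This yields a factorization $(-1)^F = V_q W$ with $V_q \in \F(q)$ and $W \in \F(q^\perp)$ both even.

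For item (1), I would first observe that $P_e$ commutes with $\pi(\CAR_e(\h))$ and so lies in $\F_e(q)'$; the cutdown $x \mapsto P_e x P_e = x P_e$ is then a normal $*$-homomorphism $\F_e(q) \to \B(\H_e)$ whose image is weakly dense in $\M(q)$, hence equals $\M(q)$, giving surjectivity. Injectivity reduces to showing that the central support $c(P_e) \in Z(\F_e(q))$ is $1$. If $\gamma|_q$ is outer, item (2) below makes $\F_e(q)$ a factor, so $c(P_e)=1$ is immediate from $P_e \neq 0$. If $\gamma|_q$ is weakly inner, then $V_q$ commutes with $\F_e(q)$, placing $(1 \pm V_q)/2$ in $Z(\F_e(q))$, and the preliminary gives $(1 \pm V_q)P_e/2 = (1 \pm V_q)(1 \pm W)/4$; this is nonzero because $(1 \pm V_q)/2 \in \F(q)$ and $(1 \pm W)/2 \in \F(q)'$ are nonzero projections and $\F(q)$ is a factor by \cref{lem:pure_factor}, so no nonzero element of $\F(q)$ can annihilate a nonzero projection of the commutant. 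For item (2), I would invoke the standard fact that the fixed-point algebra of a finite-group action on a factor is itself a factor if and only if the action is pointwise outer; the inner direction is explicit from the preliminary, since $V_q \in Z(\F_e(q)) \setminus \CC 1$ witnesses nontriviality of the center.

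Item (3) is the main structural statement. Starting from \cref{lem:even-commutant}, $\F_e(q)' = \F(q^\perp) \vee \{(-1)^F\}''$, I would split on the inner/outer dichotomy. In the weakly inner case, the preliminary substitutes $V_q$ for $(-1)^F$ in the generators (since $W \in \F(q^\perp)$), giving $\F_e(q)' = \F(q^\perp) \vee \{V_q\}''$. As $V_q$ is even and lies in $\F(q)$, it commutes with $\F(q^\perp)$, and $\{V_q\}'' \cong \CC^2$, so $\F_e(q)' \cong \F(q^\perp) \oplus \F(q^\perp)$, each summand inheriting the type of $\F(q^\perp)$. In the outer case, the symmetric version of the preliminary (with the roles of $q$ and $q^\perp$ swapped) forces $\gamma|_{q^\perp}$ to be outer and $(-1)^F \notin \F(q^\perp)$, so $\F_e(q)'$ identifies with the crossed product $\F(q^\perp) \rtimes_{\gamma|_{q^\perp}} \ZZ_2$, with $(-1)^F$ as the order-two implementing unitary. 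The asserted type list then follows from the classical behavior of $\ZZ_2$-crossed products by outer actions on factors: types $\II_1, \II_\infty, \III_0, \III_1$ are preserved, while $\III_\lambda$ for $0 < \lambda < 1$ yields a crossed product of type $\III_{\lambda^{1/2}}$, $\III_{\lambda^2}$, or $\III_\lambda$ depending on how $\gamma$ interacts with the flow of weights. The main technical obstacle will be this last $\III_\lambda$ refinement: pinning down which of the three possibilities actually occurs requires a careful analysis of the Connes $T$- and $S$-invariants under the crossed product, which I would handle by quoting the standard crossed-product theory rather than reproving it here.
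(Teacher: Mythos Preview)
Your proposal is correct and follows essentially the same route as the paper's proof. Your ``preliminary'' (the factorization $(-1)^F = V_q W$ with $V_q \in \F(q)$, $W \in \F(q^\perp)$ both even) is exactly the content of \cref{lem:local-parity}, and from there both arguments proceed identically: in the weakly inner case one gets $\F_e(q)' \cong \F(q^\perp)\oplus\F(q^\perp)$ via $\{V_q\}''\cong\CC^2$, and the central-cover computation $P_q^\pm P_e = P_q^\pm P_{q^\perp}^\pm \neq 0$ gives the cutdown isomorphism; in the outer case $\F_e(q)'$ is the $\ZZ_2$-crossed product of $\F(q^\perp)$, whence factoriality and the type list. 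The only cosmetic difference is that for item~(2) you invoke the general fixed-point-algebra criterion directly, while the paper argues on the commutant side via the crossed product (citing \cite[Prop.~13.1.5]{KadisonRingrose2}), and for the $\III_\lambda$ refinement the paper cites the index-$2$ subfactor results of Loi and Longo rather than the flow-of-weights language---but these are equivalent packagings of the same facts.
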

The proof of \cref{prop:factor-type} relies on the following Lemma.
\begin{lemma}\label{lem:local-parity} Let $\omega$ be even, and suppose that $\omega$ and $q\h$ fulfill twisted duality. Then $\gamma|_{q}$ is outer if and only if $\gamma|_{q^{\perp}}$ is outer.
\end{lemma}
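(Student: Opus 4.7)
The plan is to prove the contrapositive biconditional: $\gamma|_q$ is weakly inner if and only if $\gamma|_{q^\perp}$ is. Since twisted duality is symmetric under swapping $q$ and $q^\perp$ (taking commutants of $\F(q)'=Z\F(q^\perp)Z^{*}$ gives $\F(q^\perp)'=Z^{*}\F(q)Z$, and because $Z^{2}$ is a scalar multiple of $(-1)^{F}$ which implements $\gamma$, one has $Z^{*}\F(q)Z=Z\F(q)Z^{*}$), it suffices to prove a single direction. Assume $\gamma|_q=\Ad(U_q)$ for some unitary $U_q\in\F(q)$; I will construct an implementer of $\gamma|_{q^\perp}$ inside $\F(q^\perp)$.

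The first step is to pin down the parity of $U_q$. By \cref{lem:pure_factor}, $\F(q)$ is a factor, so unitary implementers of a given inner automorphism agree up to a phase. Since $\gamma^{2}=\id$, the operator $\gamma(U_q)$ also implements $\gamma|_q$, whence $\gamma(U_q)=\lambda U_q$ for some $\lambda\in\CC$; applying $\gamma$ once more gives $\lambda^{2}=1$, so $U_q$ is either even or odd. The odd case is ruled out as follows. Set $W:=(-1)^{F}U_q^{*}$; a direct check (using $U_q^{*}xU_q=\gamma(x)$ for $x\in\F(q)$, which follows because $\Ad(U_q)$ is of order two) shows $W\in\F(q)'$. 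If $U_q$ were odd, then the (anti)commutation rules between $\F(q)$ and $\F(q^\perp)$ give $U_q^{*}yU_q=\gamma(y)$ for every $y\in\F(q^\perp)$ regardless of the parity of $y$: even $y$ commutes with $U_q$ while odd $y$ anticommutes. Hence $WyW^{*}=(-1)^{F}\gamma(y)(-1)^{F}=y$, so $W\in\F(q)'\cap\F(q^\perp)'=(\F(q)\vee\F(q^\perp))'=\CC1$, forcing $U_q\propto(-1)^{F}$; but $(-1)^{F}$ is even, contradicting the assumption that $U_q$ is odd.

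Thus $U_q$ is even. Because even elements of $\F(q)$ commute with all of $\F(q^\perp)$, one computes $WyW^{*}=(-1)^{F}y(-1)^{F}=\gamma(y)$ for every $y\in\F(q^\perp)$. It remains to place $W$ inside $\F(q^\perp)$ itself. Twisted duality provides $W=ZwZ^{*}$ for some unitary $w\in\F(q^\perp)$. Since $\gamma(Z)=Z$ and $W$ is even, $w$ is even as well; but then $w$ commutes with $Z$, so $W=ZwZ^{*}=w\in\F(q^\perp)$. Therefore $w=W\in\F(q^\perp)$ implements $\gamma|_{q^\perp}$, proving that $\gamma|_{q^\perp}$ is weakly inner.

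The principal obstacle is the parity analysis of $U_q$: the odd case looks superficially plausible, and ruling it out requires both the factoriality of $\F(q)$ and a careful tracking of the signs generated by anticommutations across the bipartition. Once $U_q$ is known to be even, the transfer to the $q^\perp$ side is a short consequence of twisted duality together with the fact that $Z$ fixes even elements.
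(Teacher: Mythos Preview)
Your proof is correct and follows the same overall strategy as the paper: establish that the implementer $U_q$ is even, then set $U_{q^\perp}$ (your $W$) equal to $(-1)^F$ times $U_q^{\pm1}$, observe that it lies in $\F(q)'$ and implements $\gamma|_{q^\perp}$, and finally use twisted duality together with $Z$-invariance of even elements to conclude $U_{q^\perp}\in\F(q^\perp)$.

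The one genuine difference is in how evenness of $U_q$ is obtained. The paper argues that $U_q\in\F(q)\cap\F_e(q)'$ and then computes
\[
\F(q)\cap\F_e(q)'=\F(q)\cap\bigl(\F(q)'\vee\{(-1)^F\}''\bigr)\subset\{(-1)^F\}'',
\]
using factoriality of $\F(q)$; this is concise but relies on a distributivity-type identity for the join that is not entirely obvious. Your route is more elementary: factoriality forces $\gamma(U_q)=\pm U_q$, and you rule out the odd sign by showing it would make $W$ central in $\B(\H)$, hence $U_q\propto(-1)^F$, which is even. This avoids the lattice manipulation at the cost of a short case analysis and an explicit use of the graded commutation between $\F(q)$ and $\F(q^\perp)$. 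Both arguments are short; yours is perhaps more transparent about exactly where factoriality and the cross-bipartition (anti)commutation enter.
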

\begin{proof}
We show that if $\gamma|_q$ is weakly inner, then  $\gamma|_{q^\perp}$ is weakly inner. The claim then follows because twisted duality for $q$ is equivalent to twisted duality for $q^\perp$.
Suppose $\gamma|_q$ is implemented by a unitary $U_q\in \F(q)$. This implies $(-1)^{F}U_{q}(-1)^{F} = \gamma(U_{q}) = \gamma|_{q}(U_{q}) = U_{q}U_{q}U_{q}^{*} = U_{q}$.Thus, $U_{q}\in\F_{e}(q)$ is even. Then, $U_{q^\perp}:=U_q (-1)^F\in \F(q)'$ is a unitary that implements $\gamma|_{q^{\perp}}$. 
We observe that $U_{q^\perp}$ is even since $U_q$ is even. Thus, $ U_{q^\perp} = Z^{*}U_{q^\perp}Z\in\F_e(q^{\perp})$.
\end{proof}

\begin{proof}[Proof of \cref{prop:factor-type}]
We first prove items \ref{item:cutdown} -- \ref{item:types} in the case where $\gamma_q$ is weakly inner and make use of the notation in the proof of \cref{lem:local-parity}.  In this case $1\neq U_q \in \F_e(q)\cap\F_e(q)'$ and, therefore, $\F_e(q)$ is not a factor.
By \cref{lem:local-parity}, we can write $(-1)^F = U_q U_{q^\perp}$ and, by \cref{lem:even-commutant}, we find $\F_e(q)' = \F(q^\perp) \vee \{U_q\}''$.
Hence, the center of $\F_e(q)$ is generated by $U_q$.
Now, we note that $U_{q}$ can be chosen self-adjoint, i.e., $U_{q}^{2}=1$: $(\gamma|_q)^2=\id$ implies that $U_{q}^{2}$ is central, i.e., $U_{q}^{2}\in\F(q)\cap\F(q)'$. Therefore, we can find a central unitary $V\in\F(q)\cap\F(q)'$ such that $U_{q}^{2}=V^{2}$ and replace $U_{q}$ by $\tilde{U}_{q}=V^{*}U_{q}$ which is self-adjoint.
Since $U_q$ is a self-adjoint unitary, its spectrum is $\spec U_q=\{\pm1\}$ with spectral projections $P_{q}^{\pm} = \tfrac{1}{2}(1\pm U_q)$. This implies $\F_e(q)' \cong \F(q^\perp) \oplus \F(q^\perp)$ and, by duality, $\F_e(q^{\perp})' \cong \F(q) \oplus \F(q)$. This shows the first part of item~\ref{item:types}.
To infer that $\F_e(q) \cong \M(q)$, we show $P_{e} = \tfrac{1}{2}(1+(-1)^{F})\in\F_e(q)'$ has central cover $Z_{e}=1$.
Since $(-1)^F= U_qU_{q^\perp}$, we find
\begin{align*}
P^{\pm}_{q}P_{e} = \tfrac{1}{4}(1+U_qU_{q^\perp}\pm(U_q+U_{q^\perp})) = \tfrac{1}{2}(1\pm U_q)\tfrac{1}{2}(1\pm U_{q^\perp}) = P_{q}^{\pm}P_{q^{\perp}}^{\pm},
\end{align*}
meaning that none of the two minimal projections in the center of $\F_e(q)'$ covers $P_{e}$. Therefore, we have $Z_{e}=1$ and it follows from \cite[Prop.~5.5.6]{KadisonRingrose1} that $\F_e(q)$ and $\M(q)$ are isomorphic.

Now suppose that $\gamma|_q$ is outer. 
Since $(-1)^F$ implements the outer automorphism $\gamma|_{q^\perp}$ on $\F(q^\perp)$, the relation $\F_e(q)' = \F(q^\perp) \vee \{(-1)^F\}''$ implies that $\F_e(q)'$ is (isomorphic to) the crossed product of $\F(q^\perp)$ by an outer action of $\ZZ_2$ \cite[Ex.~X.1.1]{takesaki2}. It follows from \cite[Prop.~13.1.5]{KadisonRingrose2} that $\F_e(q)'$ is a factor and hence $\F_e(q)$ is a factor as well.
Since $P_{e}$ is in the commutant of $ \F_e(q)$, i.e, $P_{e}\equiv[\F_e(q)]\in\F_e(q)'$, we have that $\M(q) = P_{e}  \F_e(q) P_{e} =  \F_e(q) P_{e} \cong \F_e(q)Z_{e}$, where $Z_{e}$ is the central cover of $P_{e}$ in $\F_e(q)'$ (as before). Since $\F_e(q)$ is a factor, we have $Z_{e}=1$ implying that $\F_e(q)\cong \M(q)$ and that $\M(q)$ is a factor (again by \cite[Prop.~5.5.6]{KadisonRingrose1}).
This shows  item~\ref{item:factor} and the first part of item~\ref{item:types}.

Consider the inclusion $\F(q^{\perp}) \subset \F_e(q)'=\F(q^{\perp})\vee\{(-1)^{F}\}''$.
Due to the crossed product structure, there is a conditional expectation $\F_e(q)' \rightarrow \F(q^{\perp})$ of index $2$ \cite[Ex.~1.1]{loi_theory_1992}. 
If $\F(q^{\perp})$ has type $\III$, the assertion then follows from  \cite[Thm.~2.7]{loi_theory_1992} and \cite[Cor.~2.11]{loi_theory_1992}.
If $\F(q^{\perp})$ has type $\II_1$, the assertion follows from \cite[Ex.~5.3]{evans1998qsym}.
If $\F(q^{\perp})$ has type $\II_{\infty}$, the assertion follows from \cite[Prop.~2.3]{longo1989index1} and the type $\II_{1}$ case. 
\end{proof}

\begin{corollary}
\label{cor:factor-type}
Given the same assumptions as in \cref{prop:factor-type}, if $\F(q)$ is of type $\III$, then the subtype of $\F_{e}(q)$ coincides with the subtype of $\F(q)$.
\end{corollary}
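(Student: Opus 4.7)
The plan is to case-split on whether $\gamma|_q$ is weakly inner or outer, as in \cref{prop:factor-type}, and exploit that for a type~III factor $\M$ the algebras $\M$ and $\M'$ share the same subtype. Combined with twisted duality $\F(q^\perp)\cong Z\F(q)'Z^{*}$, this gives that $\F(q^\perp)$ has the same subtype as $\F(q)$, which reduces the task to matching the subtype of $\F_e(q)$ against that of $\F(q^\perp)$ as provided by \cref{prop:factor-type}.

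In the weakly inner case, \cref{prop:factor-type}\ref{item:types} gives $\F_e(q)'\cong \F(q^\perp)\oplus \F(q^\perp)$; the common subtype of the central summands equals that of $\F(q^\perp)$, hence of $\F(q)$, and this transfers back to $\F_e(q)$ since for type~III a factor and its commutant have the same subtype. In the outer case, $\F_e(q)$ is already a factor, and when $\F(q)$ has subtype $\III_0$ or $\III_1$ the conclusion is immediate from \cref{prop:factor-type}\ref{item:types} together with the twisted-duality reduction above.

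The substantive case is $\F(q)$ of type $\III_\lambda$ with $0<\lambda<1$ and $\gamma|_q$ outer, where \cref{prop:factor-type}\ref{item:types} allows the three possibilities $\III_{\lambda^{1/2}}, \III_\lambda, \III_{\lambda^2}$. To pin down $\III_\lambda$, I would use that $\omega$ is even, so $\gamma$ commutes with the modular flow $\sigma^\omega$ of $\omega$, and therefore $\gamma|_q$ commutes with $\sigma^{\omega|_{\F(q)}}_t$. The $\omega$-preserving conditional expectation $E=\tfrac12(\id+\gamma|_q):\F(q)\to \F_e(q)$, combined with Takesaki's theorem on expected subalgebras, then identifies $\sigma^{\omega|_{\F_e(q)}}_t$ with $\sigma^{\omega|_{\F(q)}}_t|_{\F_e(q)}$.

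Given this modular compatibility and the index-2 crossed-product structure $\F_e(q)\subset \F(q)$, \cite[Cor.~2.11]{loi_theory_1992} (invoked just as in the proof of \cref{prop:factor-type}) forces the Connes $T$-invariants to coincide, $T(\F_e(q))=T(\F(q))=\frac{2\pi}{|\log\lambda|}\ZZ$. Since $T$ distinguishes subtypes within type~III, $\F_e(q)$ must also be of type $\III_\lambda$. The main obstacle is exactly this last step: ruling out the period doubling or halving of the modular flow that would otherwise shift the subtype to $\III_{\lambda^{1/2}}$ or $\III_{\lambda^2}$. The key leverage is that, because $\omega$ is even, the $\ZZ_2$-action by $\gamma|_q$ is $\sigma$-invariant, and so its induced action on the flow of weights is trivial rather than a nontrivial rotation of the period circle.
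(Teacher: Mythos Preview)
Your approach is considerably more elaborate than the paper's. The paper dispatches the corollary in two sentences: since $\F(q)$ is type~$\III$ on a separable Hilbert space it is automatically in standard form, hence anti-isomorphic to its commutant $\F(q)'=Z\F(q^\perp)Z^*$, so $\F(q)$ and $\F(q^\perp)$ share the same subtype; then \cref{prop:factor-type} forces $\F_e(q)'$ to be type~$\III$ as well, hence also in standard form and anti-isomorphic to $\F_e(q)$. Your case-split reproduces the same conclusions in the weakly inner case and in the outer $\III_0/\III_1$ cases, just by quoting \cref{prop:factor-type}\ref{item:types} directly rather than via the anti-isomorphism detour.

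The substantive difference is the outer $\III_\lambda$ case with $0<\lambda<1$, which you correctly flag as the crux. Here the paper's short proof is actually \emph{silent}: the chain ``subtype$(\F(q))=$ subtype$(\F(q^\perp))$'' and ``subtype$(\F_e(q))=$ subtype$(\F_e(q)')$'' does not by itself close the loop, since \cref{prop:factor-type}\ref{item:types} only pins down subtype$(\F_e(q)')$ relative to subtype$(\F(q^\perp))$ up to the three options $\lambda,\lambda^{1/2},\lambda^2$. So your concern is legitimate, and you go further than the paper by attempting to resolve it.

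However, your proposed resolution has a gap. You can indeed produce a $\gamma|_q$-invariant faithful normal state $\phi$ on $\F(q)$ (average any faithful normal state with its $\gamma|_q$-translate), and Takesaki's theorem then gives $\sigma^{\phi|_{\F_e(q)}}_t=\sigma^\phi_t|_{\F_e(q)}$. But the inference ``$\gamma|_q$ commutes with $\sigma^\phi$, hence acts trivially on the flow of weights, hence $T(\F_e(q))=T(\F(q))$'' is not justified. Concretely: if $\sigma^\phi_{t_0}=\Ad(u)$ with $t_0$ generating $T(\F(q))$, then $\gamma|_q(u)=\pm u$ since $\F(q)$ is a factor; when $u$ is odd only $u^2\in\F_e(q)$, giving $2t_0\in T(\F_e(q))$ but not $t_0$, which is exactly the $\III_{\lambda^{1/2}}$ scenario. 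Commutation with the modular flow does not by itself exclude this. Nor does \cite[Cor.~2.11]{loi_theory_1992} as invoked in the proof of \cref{prop:factor-type}---that is precisely the reference yielding the three-way ambiguity, not resolving it. To finish the argument you would need an additional structural input (e.g., showing the implementing unitary for the generator of $T(\F(q))$ can be chosen even, or computing the Loi invariant of the inclusion $\F_e(q)\subset\F(q)$ directly), which neither you nor the paper supplies.
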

\begin{proof}
Since $\F(q)$ is of type $\III$, we know that $\F(q)$ is in standard form \cite[III.2.6.16]{blackadar_operator_2006} and, thus, anti-isomorphic to $\F(q^{\perp})$ by twisted duality. By \cref{prop:factor-type}, we know that $\F_{e}(q)'$ is of type $\III$ and anti-isomorphic to $\F_{e}(q)$ as it is also in standard form. 
\end{proof}

\begin{remark}\label{rem:factor-type}
In view of \cref{prop:factor-type}, we note that $\gamma|_{q}$ is always weakly inner if $q$ is finite dimensional ($\dim\textup{ran}(q)<\infty$). This implies that the algebraic structure of the fermionic observable algebras can vastly differ between finite and infinite dimensions: $\M(q)$ always has a center in the former case, while it can be a factor in the latter case.
This potential structural change has direct physical consequences. If the observable algebra $\M(q)$ has a center, classical information can be extracted without perturbing the system. This classical information is precisely the parity information of the given state.
\end{remark}

\subsection{Gauge-invariant quasi-free states}
We briefly collect some general properties of gauge-invariant quasi-free states that we require; see, for example, \cite{powers1970free_states}.
Since we only consider gauge-invariant states, we write quasi-free states instead of gauge-invariant quasi-free states in the following. We refer to \cite{balslev_states_1968,araki1970car_bog,evans1998qsym} for discussions of general quasi-free states.
If $0\leq s\leq 1$ is an operator on $\h$, the quasi-free state $\omega_s$ on $\CAR(\h)$ associated to $s$ is determined by
\begin{align}\label{eq:quasi-free}
    \omega_s(a(\xi_1)\cdots a(\xi_m) a^\dagger(\eta_n) \cdots a^\dagger(\eta_1)) = \delta_{mn} \det([\bra{\xi_j}s \ket{\eta_k}]_{j,k=1}^m),\quad \forall \ket{\eta_j},\ket{\xi_k}\in \h. 
\end{align}
It follows that $\omega_s$ is completely determined by the second moments $\omega_s(a(\xi)a^\dagger(\eta))= \bra \xi s\ket\eta$ and hence by $s$. Eq.~\eqref{eq:quasi-free} thus expresses Wick's theorem. 
Every quasi-free state is even, and $\omega_s$ is pure if and only if $s$ is a projection operator. In this case, $\omega_s$ is called a \emph{Fock state} because its GNS Hilbert space can be identified with a Fock space. 
Note that if $p$ is a projector we have $\omega_p(a^\dagger(\xi)a(\xi)) = 0$ whenever $\xi \in p\h$. In particular, if $p$ is the spectral projection onto the positive part of the single-particle Hamiltonian $h$ on $\h$, then $\omega_p$ is a ground state. This follows from \cite[Prop.~5.3.19.(5)]{bratteli1997oa2} since, by construction, $\omega_p$ is invariant under the quasi-free dynamics generated by $h$, the generator $H_{p}$ of said dynamics in the Fock representation $(\pi_{p},\H_{p})$ induced by $\omega_p$ is evidently positive, and $e^{itH_{p}}\in\pi_{p}(\CAR(\h))''=\B(\H_{p})$ by irreducibility.
It has been shown in \cite{araki1964vna_free_field,foit1983twisted_duality,derezinski_mathematics_2023} that quasi-free fermionic Fock states fulfill twisted duality. 
We can, therefore, summarize the results of the previous section relevant to us as:
\begin{corollary}
If $\omega_p$ is a quasi-free Fock state on $\CAR(\h)$, then $\M(q)'=\M(q^\perp)$. Hence the triple $(\M(q),\M(q^\perp),\H_{e})$ defines a bipartite system. If $\F(q)$ has type $\III_1$, this bipartite system is a universal embezzler.
\end{corollary}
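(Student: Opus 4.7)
The plan is to combine the twisted duality property of quasi-free Fock states with the structural results already developed in this section. First, I would verify the hypotheses of \cref{prop:duality}: for a projection $p$, the state $\omega_p$ defined by \eqref{eq:quasi-free} is pure and even (evenness is immediate from the form of \eqref{eq:quasi-free}, which only pairs matching numbers of creation and annihilation operators, and purity of Fock states is standard). Twisted duality for quasi-free Fock states is precisely the classical input cited in the paragraph preceding the corollary. Applying \cref{prop:duality} then yields $\M(q)'=\M(q^\perp)$, so $(\M(q),\M(q^\perp),\H_{e})$ is a bipartite system satisfying Haag duality.

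For the universal embezzler claim, I would chain the hypothesis that $\F(q)$ has type $\III_1$ through the structural results to conclude that $\M(q)$ is also a type $\III_1$ factor, and then apply the characterization of universal embezzlers recalled in \cref{sec:bipartite} from \cite{long_paper}. In detail: by \cref{cor:factor-type}, $\F_{e}(q)$ inherits the subtype $\III_1$ from $\F(q)$ and is in particular a factor. Being a factor, item~2 of \cref{prop:factor-type} forces $\gamma|_q$ to be outer; item~1 of the same proposition then gives that the cutdown $\F_{e}(q)\to\M(q)$ is an isomorphism, so $\M(q)$ is a type $\III_1$ factor. The characterization of universal embezzlers as precisely the bipartite systems whose local algebras have type $\III_1$ then closes the argument.

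I expect no genuine obstacle here: every ingredient is already in place, and the content of the corollary is essentially that the abstract machinery of \cref{prop:duality,prop:factor-type,cor:factor-type} specializes cleanly to the quasi-free Fock case. The only care needed is in keeping straight the three closely related algebras $\F(q)$, $\F_{e}(q)$, and $\M(q)$, and in correctly threading the outer/weakly inner dichotomy of $\gamma|_q$ so that the type $\III_1$ property is transferred all the way from $\F(q)$ to the physically relevant half-chain algebra $\M(q)$.
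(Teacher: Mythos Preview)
Your proposal is correct and matches the paper's implicit argument; the corollary is stated without a separate proof, as a summary of the cited twisted-duality result combined with \cref{prop:duality}, \cref{prop:factor-type}, and \cref{cor:factor-type}, exactly as you outline.

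One minor streamlining: item~1 of \cref{prop:factor-type} holds unconditionally (the proof establishes the cutdown isomorphism in \emph{both} the weakly inner and outer branches), so you need not first deduce that $\gamma|_q$ is outer before invoking it. Your detour through item~2 is harmless but redundant. Relatedly, be slightly careful with the phrase ``is in particular a factor'' after invoking \cref{cor:factor-type}: that corollary asserts only that the \emph{subtype} of $\F_e(q)$ coincides with that of $\F(q)$, not factoriality per se (in the weakly inner branch one would get a direct sum of two $\III_1$ factors). Factoriality for quasi-free Fock states is established only later, in \cref{prop:inner-to-I}. This does not break your argument, since the universal-embezzler characterization quoted from \cite{long_paper} is phrased for von Neumann algebras of type $\III_1$ and item~1 already delivers $\M(q)\cong\F_e(q)$, but the cleanest chain is simply: twisted duality $+$ \cref{prop:duality} $\Rightarrow$ Haag duality; \cref{cor:factor-type} $+$ item~1 of \cref{prop:factor-type} $\Rightarrow$ $\M(q)$ has type $\III_1$; then apply the theorem from \cite{long_paper}.
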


Furthermore, whenever $(\M(q),\M(q^\perp),\H_e)$ is a universal embezzler, the embezzling unitaries can be chosen from the even CAR algebras, i.e., $\CAR_{e}(q\h)$ and $\CAR_{e}(q^{\perp}\h)$. 
This follows from \cref{lem:cstar_mbz} because $\CAR_e(\h)$ is simple if $\h$ is infinite dimensional, which is true for $\CAR_{e}(q\h)$ and $\CAR_{e}(q^{\perp}\h)$ \cite{doplicher_simplicity_1968}. Hence the representation $\pi_e$ is faithful and every unitary $u\in \pi(\CAR_e(q\h))$ corresponds to a unitary in $u_0\in \CAR_e(q\h)$ via $u=\pi_e(u_0)$.
The representation $\pi$ of $\CAR(q\h)$ is faithful as well. In fact, $\CAR(q\h)$ and $\CAR_{e}(q\h)$ are *-isomorphic for infinite-dimensional $\h$ \cite{stormer_even_1970, binnenhei997on_even_car} (cp.~\cref{rem:factor-type}). Since infinite tensor products of matrix algebras are simple, we can draw the same conclusion for spin systems and states that are dual to quasi-free Fock states via the Jordan-Wigner transformation (cf.~\cref{remark:jw}).

In view of \cref{prop:factor-type} and \cref{rem:factor-type}, it is natural to ask whether $\M(q)$ is always a factor if it has type $\III$, i.e., whether $\gamma|_q$ is always outer in this case. We show a stronger statement:
\begin{proposition}\label{prop:inner-to-I}
    Consider a quasi-free Fock state $\omega_p$ on $\CAR(\h)$. Then $\gamma|_q$  is implementable by a unitary operator on $\H$ if and only if $\F(q)$ has type $\I$. 
\end{proposition}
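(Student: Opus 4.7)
The plan is to reduce both sides of the equivalence to the same Hilbert--Schmidt condition on $[q,p]$, using the Shale--Stinespring implementability criterion on one side and the Powers--St\o rmer classification on the other.

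First, I would extend $\gamma|_{q}$ to the Bogoliubov automorphism $\alpha$ of $\CAR(\h)$ that acts trivially on $\CAR(q^{\perp}\h)$; at the one-particle level, $\alpha$ is induced by the self-adjoint unitary involution $u:=1-2q$, equal to $-1$ on $q\h$ and $+1$ on $q^{\perp}\h$. Implementability of $\gamma|_{q}$ by a unitary on $\H$ is then equivalent to implementability of $\alpha$ in the Fock representation $\pi_{p}$, which by the Shale--Stinespring criterion for gauge-invariant Bogoliubov automorphisms holds if and only if $[u,p]=-2[q,p]$ is Hilbert--Schmidt.

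For the other side, the restriction $\omega_{p}|_{\CAR(q\h)}$ is the quasi-free state with symbol $s:=qpq|_{q\h}$ on $q\h$. By the Powers--St\o rmer theorem \cite{powers1970free_states}, the factor $\F(q)$ is of type $\I$ iff $\omega_{s}$ is quasi-equivalent to a pure Fock state, iff $s(1-s)$ is trace class. Decomposing $[q,p]=qpq^{\perp}-q^{\perp}pq$ and using $p^{2}=p$, $q^{2}=q$, a short computation yields
\begin{align*}
\tfrac{1}{2}\|[q,p]\|_{2}^{2}=\|qpq^{\perp}\|_{2}^{2}=\tr(qpq^{\perp}pq)=\tr\big(qpq(1-qpq)\big)=\tr\big(s(1-s)\big),
\end{align*}
so that $[q,p]$ is Hilbert--Schmidt iff $s(1-s)$ is trace class. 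Combining the two reductions gives the stated equivalence.

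The main technical obstacle is the reduction from ``$\gamma|_{q}$ implementable on $\H$'' to ``$\alpha$ implementable in $\pi_{p}$'' in the first step. One direction is immediate (any Bogoliubov implementer of $\alpha$ implements $\gamma|_{q}$ on $\F(q)$); for the other, one must show that an arbitrary implementer $U\in\B(\H)$ of $\gamma|_{q}$ can be modified by an element of $\F(q)'=Z\F(q^{\perp})Z^{\ast}$ (via twisted duality) so as to also fix $\F(q^{\perp})$, thereby producing a Bogoliubov implementer of the full $\alpha$. This is where the bulk of the operator-algebraic work lives; a direct route goes through the uniqueness of Bogoliubov implementers up to phase, in parallel with the treatment of \cite{derezinski_mathematics_2023}.
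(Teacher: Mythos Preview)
Your approach is the same as the paper's: reduce implementability to a Hilbert--Schmidt condition via Shale--Stinespring/Araki, reduce the type~$\I$ question to a trace-class condition via Powers--St\o rmer, and observe that the two conditions coincide. The paper carries out the last step differently: it uses the Powers--St\o rmer criterion in the form ``$s=e+t$ with $e$ a spectral projection of $s$ and $t$ trace class'' and then proves (\cref{lem:HS-to-T}) via Halmos' two-projections decomposition that this is equivalent to $qpq^{\perp}$ being Hilbert--Schmidt. Your identity $\|qpq^{\perp}\|_{2}^{2}=\tr\big(qpq-(qpq)^{2}\big)=\tr\big(s(1-s)\big)$ is a cleaner shortcut that bypasses the two-projections machinery entirely; it only requires that ``$s(1-s)$ trace class'' is a valid form of the type~$\I$ criterion, which is an easy reformulation of the cited Powers--St\o rmer lemma.

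On the obstacle you flag: you are being more scrupulous than the paper. The paper's \cref{lem:implementability} directly invokes Araki's criterion for the Bogoliubov automorphism induced by $u_{q}=1-2q$, tacitly identifying ``$\gamma|_{q}$ implementable on $\H$'' with ``$\alpha$ implementable''. For the application in the subsequent Corollary only the equivalence ``weakly inner $\Leftrightarrow$ type $\I$'' is needed, and there the gap closes cheaply: if $U_{q}\in\F(q)$ implements $\gamma|_{q}$, one can take $U_{q}\in\F_{e}(q)$ (cf.\ the proof of \cref{lem:local-parity}), and even elements of $\F(q)$ commute with all of $\F(q^{\perp})$, so $U_{q}$ already implements the full Bogoliubov $\alpha$. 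For the literal statement of the proposition (an \emph{arbitrary} implementer $U\in\B(\H)$) your proposed route through $\F(q)'=Z\F(q^{\perp})Z^{*}$ is the natural one, but you would still need to supply the argument; the paper does not.
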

Since all automorphisms of type $\I$ factors are inner, we find:
\begin{corollary}
    Consider a quasi-free Fock state $\omega_p$ on $\CAR(\h)$. The following are equivalent:
    \begin{enumerate}
        \item $\gamma|_q$ is weakly inner.
        \item $\F(q)$ has type $\I$.
        \item $\M(q)$ has type $\I$.
        \item $\M(q)$ is not a factor.
    \end{enumerate}
\end{corollary}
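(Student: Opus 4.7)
The plan is to chain \cref{prop:inner-to-I} with \cref{prop:factor-type} and the classical fact that every automorphism of a type $\I$ factor is inner. I would first establish $(1)\Leftrightarrow(2)$. If $\gamma|_q$ is weakly inner, then it is in particular implementable by a unitary on $\H$ (namely by an element of $\F(q)\subset\B(\H)$), so by \cref{prop:inner-to-I}, $\F(q)$ has type $\I$. Conversely, $\F(q)$ is a factor by \cref{lem:pure_factor}, and every automorphism of a type $\I$ factor is inner, so $\gamma|_q$ is weakly inner whenever $\F(q)$ has type $\I$.

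The equivalence $(1)\Leftrightarrow(4)$ is immediate from \cref{prop:factor-type}\,item~\ref{item:factor}, which asserts that $\M(q)$ is a factor precisely when $\gamma|_q$ is outer. For $(2)\Leftrightarrow(3)$, I would combine items~\ref{item:cutdown} and \ref{item:types} of \cref{prop:factor-type} with twisted duality. The cutdown in item~\ref{item:cutdown} provides an isomorphism $\M(q)\cong\F_e(q)$, so determining the type of $\M(q)$ reduces to that of $\F_e(q)$, equivalently that of $\F_e(q)'$ (since type $\I$ is preserved under taking commutants). Twisted duality yields $\F(q^\perp)\cong\F(q)'$, so $\F(q^\perp)$ has type $\I$ iff $\F(q)$ does. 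In the weakly inner case, item~\ref{item:types} states that $\F_e(q)'$ has the same type as $\F(q^\perp)$; in the outer case, the same item shows that $\F_e(q)'$ is of type $\II$ or $\III$ whenever $\F(q^\perp)$ is. Assembling these two cases via $(1)\Leftrightarrow(2)$ produces $(2)\Leftrightarrow(3)$.

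I do not expect any serious obstacle. The only step that needs care is the contrapositive direction $(3)\Rightarrow(2)$: ruling out that $\M(q)$ could be of type $\I$ while $\F(q)$ is of type $\II$ or $\III$. In that situation, $\gamma|_q$ is outer by $(1)\Leftrightarrow(2)$, so by \cref{lem:local-parity} $\gamma|_{q^\perp}$ is outer, and the type $\II$/$\III$ of $\F(q^\perp)$ is inherited by $\F_e(q)'$ through item~\ref{item:types}. Commutation and the cutdown then force $\M(q)\cong\F_e(q)$ to be of type $\II$ or $\III$, contradicting $(3)$. The remaining direction $(2)\Rightarrow(3)$ is a direct application of the weakly inner branch of item~\ref{item:types} together with the twisted duality identification of the types of $\F(q)$ and $\F(q^\perp)$.
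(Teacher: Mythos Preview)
Your proposal is correct and follows essentially the same route as the paper: $(1)\Leftrightarrow(2)$ via \cref{prop:inner-to-I} and innerness of automorphisms of type $\I$ factors, $(1)\Leftrightarrow(4)$ via \cref{prop:factor-type}\,item~\ref{item:factor}, and $(2)\Leftrightarrow(3)$ via the cutdown isomorphism and item~\ref{item:types}. The only cosmetic difference is that you identify the types of $\F(q)$ and $\F(q^\perp)$ directly through twisted duality $\F(q)'=Z\F(q^\perp)Z^*$, whereas the paper routes this through \cref{lem:local-parity} combined with $(1)\Leftrightarrow(2)$ applied to $q^\perp$; both arguments are valid and interchangeable here.
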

\begin{proof}
The equivalence of items 1 and 2 is shown in \cref{prop:inner-to-I}. That $\F(q)$ and $\M(q)$ have the same types follows from \cref{prop:factor-type}, because $\F(q)$ has type $\I$ if and only if $\F(q^\perp)$ has type $\I$ by \cref{lem:local-parity}. By \cref{prop:factor-type} $\F_e(q)'$ has type $\I$ if and only if $\F(q^\perp)$ has type $\I$. Since a von Neumann algebra and its commutant always have the same type, we find that $\M(q)\cong \F_e(q)$ has type $\I$ if and only if $\F_e(q)'$ has type $\I$. The equivalence between items 1 and 4 is item \ref{item:factor} in \cref{prop:factor-type}.    
\end{proof}

We emphasize that \cref{prop:inner-to-I} shows that if $\M(q)$ is not of type $\I$, then $\gamma|_q$ cannot even be implemented by a unitary operator on $\H$ that acts globally. 
In this case, there is neither a local nor a global operator on $\H$ that measures the local parity of Alice's or Bob's system.

To show the statement, we make use of the following Lemma:
\begin{lemma}\label{lem:implementability}  Consider a quasi-free Fock state $\omega_p$ on $\CAR(\h)$. $\gamma_q$ is implemented by a unitary $U_q$ on $\H$ if and only if $qpq^{\perp}$ is Hilbert-Schmidt.
\end{lemma}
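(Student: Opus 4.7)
The natural strategy is to relate $\gamma|_q$ to a Bogoliubov automorphism of the full $\CAR(\h)$ and then apply the standard Hilbert-Schmidt implementability criterion. Let $V := q^\perp - q = 1-2q$ on $\h$, a self-adjoint unitary with $V|_{q\h} = -\id$ and $V|_{q^\perp\h} = +\id$; it induces a Bogoliubov automorphism $\alpha_V$ of $\CAR(\h)$ satisfying $\alpha_V|_{\CAR(q\h)} = \gamma|_q$ and $\alpha_V|_{\CAR(q^\perp\h)} = \id$. By the implementability criterion for Bogoliubov automorphisms in fermionic Fock representations \cite{derezinski_mathematics_2023, araki1964vna_free_field}, $\alpha_V$ is implementable in the Fock representation $\pi_p$ if and only if $[V,p]$ is Hilbert-Schmidt. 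Since $[V,p] = -2[q,p]$ and, using cyclicity of the trace,
\begin{align*}
    \tfrac{1}{2}\|[q,p]\|_2^2 = \tr(pq) - \tr(pqpq) = \|qpq^\perp\|_2^2,
\end{align*}
this condition is equivalent to $qpq^\perp$ being Hilbert-Schmidt. It then remains to show that implementability of $\alpha_V$ on $\CAR(\h)$ is equivalent to implementability of $\gamma|_q$ on $\F(q)$.

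The direction ``$\alpha_V$ implementable $\Rightarrow$ $\gamma|_q$ implementable'' is immediate: any implementer of $\alpha_V$ restricts to one of $\gamma|_q = \alpha_V|_{\F(q)}$. For the converse, suppose $U_q \in \B(\H)$ implements $\gamma|_q$. Since $(-1)^F$ implements $\gamma$ globally and $\gamma$ agrees with $\gamma|_q$ on $\CAR(q\h)$, a short calculation yields $U_q^*(-1)^F \pi_p(a) (-1)^F U_q = \pi_p(a)$ for every $a \in \CAR(q\h)$, so $U_q^*(-1)^F \in \F(q)'$. By twisted duality for quasi-free Fock states \cite{foit1983twisted_duality}, $\F(q)' = Z\F(q^\perp)Z^*$, and thus $U_q^*(-1)^F = ZYZ^*$ for some unitary $Y \in \F(q^\perp)$. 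The remaining task is to exploit this factorization to produce an implementer of the full $\alpha_V$ by modifying $U_q$ with $Y$ and $Z$ so that the result also commutes with $\pi_p(\CAR(q^\perp\h))$, thereby implementing $\alpha_V$ trivially on $\CAR(q^\perp\h)$.

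The main obstacle is carrying out this last construction, which must carefully track the $\ZZ_2$-grading: even elements of $\F(q^\perp)$ commute with all of $\F(q)$, whereas odd ones anticommute with odd elements of $\F(q)$. The twist $Z$, which acts trivially on even operators and via a controlled phase on odd ones, is precisely the tool that repairs this anticommutation, which is why twisted duality (rather than plain Haag duality on $\F(q)$) is indispensable here. Once an implementer of $\alpha_V$ is in hand, the Hilbert-Schmidt criterion delivers the claimed condition on $qpq^\perp$, completing the equivalence.
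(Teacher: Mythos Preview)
Your first step is exactly the paper's argument: introduce the Bogoliubov automorphism $\alpha_V$ induced by $V=1-2q$, apply Araki's implementability criterion, and reduce the Hilbert--Schmidt condition on $[V,p]$ to that on $qpq^\perp$. The paper stops here, because it reads the lemma's ``$\gamma_q$ is implemented by a unitary on $\H$'' precisely as ``$\alpha_V$ is implementable in $\pi_p$'' (the \emph{local} parity, acting trivially on $\CAR(q^\perp\h)$; cf.\ the sentence after Prop.~9 about ``measuring the local parity'').

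The additional equivalence you set out to prove --- ``$\alpha_V$ implementable on $\CAR(\h)$'' $\Leftrightarrow$ ``some $U_q\in\B(\H)$ implements $\gamma|_q$ on $\F(q)$'' --- cannot be completed, and this is the genuine gap in your proposal. The right-to-left direction is not just hard, it is false: the global parity $(-1)^F\in\U(\H)$ \emph{always} satisfies $(-1)^F x(-1)^F=\gamma|_q(x)$ for every $x\in\F(q)$, so under your reading the hypothesis of the converse is vacuously satisfied, whereas $\alpha_V$ is implementable only when $qpq^\perp$ is Hilbert--Schmidt. Your own calculation already exhibits this: you observe that $U_q^*(-1)^F\in\F(q)'$, which says exactly that every implementer of $\gamma|_q$ on $\F(q)$ differs from $(-1)^F$ by an element of $\F(q)'$ --- in particular $(-1)^F$ is one such implementer.

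In short, the detour through twisted duality is not needed for the lemma as the paper intends it, and the ``remaining task'' you identify is an artifact of interpreting $\gamma_q$ as the restriction of $\gamma$ to $\F(q)$ rather than as the local-parity Bogoliubov automorphism $\alpha_V$ on all of $\CAR(\h)$. (If one instead wants the stronger statement that $\gamma|_q$ is \emph{weakly inner}, i.e.\ implemented by a unitary \emph{in} $\F(q)$, iff $qpq^\perp$ is Hilbert--Schmidt, then the argument runs via the observation --- used in the proof of Lemma~\ref{lem:local-parity} --- that any such implementer is automatically even and hence commutes with $\F(q^\perp)$, so it already implements $\alpha_V$.)
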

\begin{proof}
The state $\omega_p \circ \gamma|_q$ is a quasi-free pure state represented by the covariance $s = u_q p u_q$, with the unitary $u_q  = 1-2q = q^{\perp}-q$. 
According to \cite[Lem.~9.4]{araki1970car_bog}, $\gamma|_{q}$ is unitarily implementable if and only if $p-u_{q}pu_{q}$ is Hilbert-Schmidt.
Since
\begin{align}
    p - (1-2q)p(1-2q) = 2(qpq^\perp + q^\perp p q) 
\end{align}
this is true if and only if $q p q^\perp$ is Hilbert-Schmidt. 
\end{proof}
\begin{lemma}\label{lem:HS-to-T}
    Let $p,q$ be projections on $\h$. Then $qpq^\perp$ is Hilbert-Schmidt if and only if $qpq = e +t$, where $e$ is a spectral projection of $qpq$ and $t$ is trace-class.
\end{lemma}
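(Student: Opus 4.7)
The plan is to reduce the statement to a spectral-theoretic claim about the single positive operator $A := qpq$. The key algebraic identity is
\begin{align*}
    qpq - (qpq)^2 = qpq^\perp\,(qpq^\perp)^*,
\end{align*}
obtained by writing $(qpq)^2 = qp(1-q^\perp)pq = qpq - qpq^\perp p q$ and then using $(qpq^\perp)^* = q^\perp p q$ together with $(q^\perp)^2 = q^\perp$ to recognize $qpq^\perp p q = qpq^\perp (qpq^\perp)^*$. Since a bounded operator $B$ is Hilbert--Schmidt iff $BB^*$ is trace class, the lemma is equivalent to the assertion that $A(1-A)$ is trace class if and only if $A = e + t$ for a spectral projection $e$ of $A$ and a trace class $t$. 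Note that $0\le A\le 1$, so $\spec(A)\subset[0,1]$ and the functional calculus is well-defined.

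For the easy direction, assume $A = e + t$ with $e$ a spectral projection of $A$ and $t$ trace class. Since spectral projections commute with $A$, in particular $[e,A]=0$, which gives $et=te\cdot e + \dots$; more simply, expanding directly,
\begin{align*}
    A(1-A) = (e+t)(1-e-t) = (e-e^2) + t - et - te - t^2 = t - et - te - t^2,
\end{align*}
which is a sum of trace class operators.

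For the converse, suppose $A(1-A)$ is trace class and set $e := \chi_{[1/2,1]}(A)$, a spectral projection of $A$, and $t := A - e$. By functional calculus, $|t|$ equals $g(A)$ for $g(\lambda):=\min(\lambda,1-\lambda)$ on $[0,1]$. On the other hand, $f(\lambda) := \lambda(1-\lambda) = g(\lambda)\bigl(1-g(\lambda)\bigr)$, and because $g(\lambda)\leq 1/2$, we have $1-g(\lambda)\geq 1/2$, yielding the scalar inequality $f(\lambda) \geq \tfrac{1}{2}\,g(\lambda)$ on $[0,1]$. Applying functional calculus gives the operator inequality
\begin{align*}
    A(1-A) = f(A) \;\geq\; \tfrac{1}{2}\,|A-e|.
\end{align*}
Since $A(1-A)$ is trace class, domination forces $|t|=|A-e|$, and hence $t$, to be trace class, which completes the proof. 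The argument is essentially a short computation; the only nontrivial step is spotting the identity that reduces the Hilbert--Schmidt condition on the off-diagonal block to trace class behavior of the spectral function $\lambda(1-\lambda)$ of $qpq$.
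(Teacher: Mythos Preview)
Your proof is correct and reaches the same core spectral estimate as the paper, but by a more direct route. The paper invokes Halmos' theory of two projections to reduce to the case where $p$ and $q$ are in general position, identifies $qpq$ with an operator $c^2$ on a subspace, and then reads off $\tr\bigl((qpq^\perp)^* qpq^\perp\bigr)=\tr\bigl(c^2(1-c^2)\bigr)$ from the explicit $2\times 2$ block form. You bypass that structural decomposition entirely with the algebraic identity $qpq-(qpq)^2 = qpq^\perp(qpq^\perp)^*$, which immediately converts the Hilbert--Schmidt hypothesis on $qpq^\perp$ into trace-class-ness of $A(1-A)$ for $A=qpq$. From that point the two arguments coincide: both set $e=\chi_{[1/2,1]}(A)$ and exploit the scalar inequality $\lambda(1-\lambda)\ge \tfrac12\min(\lambda,1-\lambda)$ on $[0,1]$ (the paper splits this into the two estimates $\tr(e(1-c^2))$ and $\tr(e^\perp c^2)$, which is the same bound written out by cases). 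Your path is shorter and avoids an auxiliary structure theorem at no cost in generality. One cosmetic point: the unfinished aside ``which gives $et=te\cdot e+\dots$'' in the easy direction should be removed; the direct expansion that follows it is already complete and correct.
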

\begin{proof}
By the theory of two projections \cite{halmos_two_1969} the Hilbert space $\h$ can be decomposed as $\h_0\oplus \h_1$, so that 
$p$ and $q$ decompose into orthogonal projections as $p=p_0\oplus p_1$ and $q=q_0\oplus q_1$ with  $[p_0,q_0]=0$, $[p_1,q_1]\neq 0$ and $p_1,q_1$ are unitarily equivalent to projection operators of the form
\begin{align}
    q_1\cong \begin{pmatrix}1 & 0\\ 0& 0\end{pmatrix},\quad p_1 \cong \begin{pmatrix} c^2 & c(1-c^2)^{1/2} \\ c(1-c^2)^{1/2} & 1-c^2\end{pmatrix} 
\end{align}
on $\h_1\cong\mathfrak k\oplus\mathfrak k$, where $0\leq c\leq 1$. 
Thus $qpq = q_0p_0 \oplus q_1 p_1 q_1$. Absorbing the projection $q_0p_0$ into $e$, we can thus assume without loss of generality that $q$ and $p$ are in general position, i.e., $p=p_1$, $q=q_1$ and $qpq=c^2$.
If $c^2 = e+t$ with $t$ trace-class, then
\begin{align}
\tr((qpq^\perp)^*qpq^\perp) = \tr(c^2(1-c^2)) = \tr((e+t)(e^\perp - t)) = \tr(te^\perp - et - t^2) <\infty.
\end{align}
Hence $qpq^\perp$ is Hilbert-Schmidt. 
Conversely, let $e = \chi_{[1/2,1]}(c^2)$ and assume that $qpq^\perp = c\sqrt{1-c^2}$ is Hilbert-Schmidt. We then have
\begin{align}
    \tr(c^2(1-c^2)) = \tr(e c^2(1-c^2)) + \tr(e^\perp c^2(1-c^2)) <\infty.
\end{align}
Since $ec^2\geq 1/2 e$ and $e^\perp (1-c^2) \geq 1/2 e^\perp$ we have
\begin{align}
    \tr (e(1-c^2)) \leq 2 \tr(e c^2(1-c^2)) <\infty. 
\end{align}
and
\begin{align}
    \tr (e^\perp c^2) \leq 2 \tr(e^\perp c^2(1-c^2)) <\infty. 
\end{align}
Setting $t:= c^2 - e$,  we find
\begin{align}
    \tr(|t|)  = \tr(e(1-c^2)) + \tr(e^\perp c^2)  < \infty.
\end{align}
    \end{proof}
\begin{proof}[Proof of \cref{prop:inner-to-I}]
    By \cite[Lem.~5.3]{powers1970free_states}, a quasi-free state $\omega_s$ on $\CAR(\h)$ yields a type $\I$ factors $\F(\h)$ if and only if $s= e+t$, where $e$ is a spectral projection of $s$ and $t$ is trace-class. By assumption, $\gamma|_q$ is weakly inner. Therefore, by \cref{lem:implementability}, we find that $qpq^\perp$ is Hilbert-Schmidt, and hence \cref{lem:HS-to-T} implies that $\F(q)$ type $\I$. 
\end{proof}

\section{Fermionic many-particle systems as universal embezzlers}
\label{sec:fermin_mbz_univ} 
We consider the fermionic second quantization $H$ of a (self-adjoint) single-particle Hamiltonian $h$ on $\mathfrak{h}=\ell^2(\ZZ)\otimes \CC^b$ with standard basis $\{\ket{x}\ox \ket{l}\}_{x\in\ZZ, l=1,...,b}$. 
\begin{equation}
    H = \sum_{x,y}\sum_{l,m} h_{l,m}(y-x)\,a^{\dag}_l(y)\,a_m(x), \qquad h_{lm}(y-x) = \bra{y}\ox\bra{l}h\ket{x}\ox\ket{m},
\end{equation}
where $\{a_l(x),a^{\dag}_{m}(y)\}=\delta_{l,m}\delta_{x,y}$ realize the canonical anti-commutation relations with respect to $\mathfrak{h}$ and we assumed that $h$ is translation invariant with respect to the shift on $\ZZ$. We denote the resulting algebra of canonical anti-commutation relations by $\CAR(\h)$. 

We assume that the functions $x\mapsto h_{lm}(x)$ have sufficiently fast decay, e.g., absolute summability, to ensure that their Fourier transforms $\hat h_{lm}(k)$ are continuous functions on the unit circle $S^1 \cong [0,2\pi)$ \cite{katznelson_introduction_2004}. 
For each $k\in S^1$ the matrix $\hat h(k) = [\hat h_{lm}(k)]$ is self-adjoint. The one-particle Hamiltonian $h$ is hence unitarily equivalent to a multiplication operator on  $L^2(S^1,\CC^b)\cong \ell^2(\ZZ)\otimes \CC^b$. Note that $h$ is necessarily bounded:
\begin{align}
    \norm{h} = \norm{\hat h} \leq \sup_{\substack{\ket\psi \in L^2(S^1,\CC^b) \\  \braket{\psi}{\psi}=1}} \int_{S^1} |\langle \psi(k)| \hat h(k)|\psi(k)\rangle| dk \leq \int_{S^1}\norm{\hat h(k)} dk < \infty.
\end{align}

A multiplication operator $\hat f\in L^\infty(S^1, M_n(\CC))$ is said to be discontinuous at $k\in S^1$ if every representative is discontinuous at $k$, i.e., the left- and right-limits do not coincide. 
Furthermore, $\hat f$ is piecewise continuous if it only has finitely many points of discontinuity.
From now on, we make the following regularity assumption: 

\begin{assumption}\label{assumption}
    The symbol $\hat h:S^1\to M_n(\CC)$ is such that $\hat p_+ = \chi_{(0,\oo)}(\hat h)\in L^\oo(S^1,M_b(\CC))$ is piecewise continuous.
\end{assumption}

\begin{definition}
    A Hamiltonian $H$ satisfying \cref{assumption} is called \emph{critical} if $\hat p_+\in L^\oo(S^1,M_b(\CC))$ has at least one point of discontinuity.
\end{definition}

We can always choose $\hat p_+$ to be represented by a matrix-valued function that is continuous from the left. The left and right limits of $\hat p_+$ at a discontinuity are distinct projectors.

\begin{example}
    The Hamiltonian $H$ induced by $\hat h(k) = \mathrm{diag}(1+\cos(k), - (1+\cos(k))$ is not critical in our sense, since $\hat p_+$ is constant equal to $\mathrm{diag}(1,0)$. This is despite the fact that $\hat h(\pi)=0$ and that $H$ is gapless.  
\end{example}
\begin{example}
The Hamiltonian $H_{\text{XX}} = \sum_x \big(a(x)^\dagger a(x+1) + a^\dagger(x+1)a(x)\big)$ leads to a scalar symbol $\hat h(k) = 2\cos(k)$. The projection $\hat p_+$ onto the strictly positive part of $\hat h$ is given by $\hat p_+ = 1-\chi_{[\pi/2,3\pi/2]}$. It is simply an indicator function with discontinuities at $\pi/2$ and $3\pi/2$. Hence, the Hamiltonian is critical. 
\end{example}

\begin{example}[Su-Schrieffer-Heeger model \cite{su_soliton_1980}] The Hamiltonian 
\begin{align}
    H_{\text{SSH}} = \sum_{x} \left(a_1^\dagger(x)a_2(x) + a_2(x)^\dagger a_1(x) + a_1^\dagger(x)a_2(x+1) + a_2(x+1)^\dagger a_1(x)\right),
\end{align}
leads to the symbol
\begin{align}
    \hat h (k) = \begin{pmatrix}
        0 & 1+ e^{i k}\\
        1+ e^{-ik} &0.
\end{pmatrix}
\end{align}
Clearly, $\hat h$ is continuous on $S^1$. 
Its eigenvalues can be chosen analytically up to the boundary point $p=2\pi$  as $\lambda_\pm(k) =\pm 2\cos(k/2)$, entailing $\lambda_+(k+2\pi) = \lambda_-(k)$, with $\lambda_+(k)\geq 0$ for $k\in[0,\pi]$ and $\lambda_-(k)\geq 0$ for $k\in[\pi,2\pi]$. Alternatively, one could choose them as continuous but non-smooth functions of $k\in S^1$.
However, its eigenprojectors are  discontinuous at $k=\pi$ since the eigenvectors $\varphi_\pm(k)\in \CC^2$ of $\hat h(k)$ corresponding to $\lambda_\pm$ fulfill $\varphi_+(k+2\pi) = \varphi_-(k)$, where $\varphi_\pm(k) = (\pm e^{i k/2},1)^\top$. The topology of the band structure is that of a Möbius band.
The projection $\hat p_+$ on the strictly positive part of $\hat h$ has a discontinuity at $p=\pi$, since $\lim_{\eps\to 0}\varphi_+(\pi-\eps) = (i,1)^\top$, but $\lim_{\eps\to 0}\varphi_-(\pi+\eps) = (-i,1)^\top$.
\end{example}

Given a pure, quasi-free ground state $\omega_p$ of $H$, we denote by $q$ the projection onto the subspace $\ell^2(\ZZ_+)\otimes \CC^b \subset \ell^2(\ZZ)\otimes \CC^b=\h$ corresponding to the right half-chain and define $\M_A := \M(q)$, $\M_B:=\M(q^\perp)=\M_A'$ in the notation of section~\ref{sec:fermi_mbz}. $(\M_A,\M_B,\H_e)$ defines the bipartite system describing the physically relevant, even observable algebras.
The state $\omega_p$ is represented by a vector $\ket \Omega\in \H_e$. 

\begin{theorem}\label{thm:main-fermions}
If $H$ fulfills assumption \ref{assumption} and is critical, there exists a gauge-invariant quasi-free, pure, and translation-invariant ground state such that $\M_A$ is a type $\III_1$ factor. 
\end{theorem}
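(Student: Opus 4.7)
I would construct a specific pure, gauge-invariant, quasi-free ground state and then reduce the type computation to a spectral question about a block Toeplitz operator. Take $p := \chi_{(0,\oo)}(h)$, which under the Fourier identification $\h\cong L^2(S^1,\CC^b)$ is multiplication by $\hat p_+$. The associated quasi-free Fock state $\omega_p$ on $\CAR(\h)$ is pure because $p$ is a projection, gauge-invariant quasi-free by construction, and translation invariant because $p$ is a Fourier multiplier. It is a ground state of $H$: in the Fock representation $(\pi_p,\H_p)$ the second-quantized generator of the dynamics is positive and lies in $\B(\H_p) = \pi_p(\CAR(\h))''$, so \cite[Prop.~5.3.19]{bratteli1997oa2} applies.

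Next I would reduce the problem to determining the type of the fermionic factor $\F(q)$. If one knows that $\F(q)$ is of type $\III_1$, then \cref{prop:inner-to-I} forces $\gamma|_q$ to be outer (since type $\III_1$ is not type $\I$), whence \cref{prop:factor-type} gives $\M_A\cong \F_e(q)$ and \cref{cor:factor-type} promotes this to a type $\III_1$ factor. By the Powers--St\o rmer classification of quasi-free factors \cite{powers1970free_states}, the type of $\F(q)$ is controlled entirely by the spectrum of the positive contraction $A := qpq|_{q\h}$ on $q\h \cong \ell^2(\ZZ_+)\ox \CC^b$. For our purposes, it suffices to show that $\spec(A)$ has a non-atomic part accumulating at both $0$ and $1$: this makes the Araki--Woods asymptotic ratio set equal to $[0,\oo]$ and forces $\F(q)$ to have type $\III_1$.

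The spectral input is provided by block Toeplitz theory. Under the Fourier transform, $q\h$ identifies with the vector-valued Hardy space $H^2(S^1)\ox\CC^b$, and $A$ is the block Toeplitz operator $T_{\hat p_+}$ with symbol $\hat p_+\in L^\oo(S^1,M_b(\CC))$. Since $H$ is critical, $\hat p_+$ is piecewise continuous with at least one jump $k_0\in S^1$ at which the one-sided limits $P^\pm := \hat p_+(k_0^\pm)$ are distinct orthogonal projections in $M_b(\CC)$. The Gohberg--Krupnik description of the essential spectrum of block Toeplitz operators with piecewise continuous symbols adds to $\essspec(T_{\hat p_+})$, at each such discontinuity, the union of spectra $\spec\!\bigl((1-\mu)P^- + \mu P^+\bigr)$ for $\mu\in[0,1]$. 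Putting $P^-, P^+$ into Halmos' two-projection normal form on the subspace where they fail to commute reduces this to a one-parameter family of $2\times 2$ self-adjoint pencils. An elementary trace-determinant computation shows that the eigenvalues sweep out intervals $[0,\alpha]$ and $[1-\alpha,1]$ for some $\alpha>0$ (with $\alpha=\tfrac{1}{2}$ in the orthogonal case $P^-P^+=0$), delivering exactly the non-atomic accumulation at $0$ and $1$ needed in the previous step.

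The main obstacle is the matrix-valued Toeplitz computation in the third step: the scalar case (which already covers $H_{\text{XX}}$, whose symbol is an indicator function) is classical Widom theory, but the matrix case requires the full Gohberg--Krupnik machinery combined with the two-projection theorem to guarantee that every genuine jump of $\hat p_+$ contributes intervals of essential spectrum near both endpoints of $[0,1]$. A secondary subtlety is that I have taken $p_0=0$; handling the general pure choice $p = p_+ + p_0$ on a non-trivial single-particle kernel needs a Hilbert--Schmidt perturbation argument showing that adding a projection onto a translation-invariant subspace of $\ker h$ alters neither the purity of the ground state nor the type of $\F(q)$.
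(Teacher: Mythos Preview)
Your proposal is correct and follows essentially the same route as the paper: choose $p=p_+$ (i.e., $p_0=0$), identify $qpq$ with the block Toeplitz operator $T_{\hat p_+}$, invoke the Gohberg--Krupnik description of the essential spectrum of piecewise continuous block Toeplitz operators at a jump of $\hat p_+$, and use Halmos' two-projection theory to see that the interpolated pencil $(1-\mu)P^- + \mu P^+$ sweeps a nontrivial interval; then apply the Powers--St\o rmer/Araki--Woods classification to conclude type $\III_1$ for $\F(q)$ and descend to $\M_A$ via \cref{prop:factor-type} and \cref{cor:factor-type}.

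Two minor remarks. First, your stated sufficient spectral condition (``non-atomic part accumulating at both $0$ and $1$'') is stronger than needed: the paper's Classification \cref{lem:classification} shows that \emph{any} nondegenerate interval $[a,b]\subset\spec(qpq)$ already forces type $\III_1$, so you only need one of the two intervals your pencil computation produces. Second, your aside about handling general $p_0$ by a Hilbert--Schmidt perturbation is not quite right: a translation-invariant $p_0$ can have infinite rank and need not be a Hilbert--Schmidt perturbation at the Toeplitz level. The paper instead observes that as long as $\hat p_0$ is piecewise continuous the Gohberg--Krupnik argument applies verbatim to the full symbol $\hat p = \hat p_+ + \hat p_0$; since the theorem only asks for existence, $p_0=0$ suffices anyway.
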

The proof of the theorem proceeds via several steps:
\begin{enumerate}
    \item A quasi-free, translation-invariant, and pure ground state is represented by a projection $p = p_+ + p_0$, where $p_+$ is the spectral projection onto the strictly positive part of $h$ and $p_0$ is a projection onto a subspace in the kernel of $h$ that commutes with translations. By translation-invariance, $p$ is represented by a projection valued symbol $\hat p = \hat p_+ + \hat p_0\in L^\infty(S^1,M_b(\CC))$ when viewed in Fourier space, where $\hat p_+(k)$ is the spectral projection onto the strictly positive part of $\hat h(k)$ and $\hat p_0(k)$ is a projection onto a subspace in the kernel of $\hat h(k)$. We call $\hat p$ the \emph{symbol} of $p$. Clearly $\hat p_+ \hat p_0 = \hat p_0\hat p_+ = 0$. 
    By assumption, $\hat p_+$ is piecewise continuous with finitely many discontinuities.
    The regularity of $p_0$ is, in principle, arbitrary. In the following, we therefore choose $p_0=0$, but our arguments apply whenever $\hat p_0$ is piecewise continuous. 

    \item We denote by $q$ the inclusion operator $\ell^2(\ZZ_+)\ox\CC^{b} \hookrightarrow \ell^2(\ZZ)\ox\CC^{b}$.
    Then, the state associated with the right half-chain is given by the quasi-free state $\omega_{q^* p q}$. 
Note that $qq^*$ is the projection onto $\ell^2(\ZZ_+)\ox\CC^{b}$.
    
    \item By \cref{lem:pure_factor}, the half-chain von Neumann algebra $\F(q q^*)$ is a factor. It is isomorphic to the factor constructed from $\omega_{q^* p q}$ and $\CAR(\ell^2(\ZZ_+)\ox \CC^b)$ by \cite[Prop.~5.5.6]{KadisonRingrose1} because the latter is the cut-down of $\F(qq^*)$ by the projection onto the invariant subspace $[\F(qq^*)\H)]$.
    In the following we identify $q$ with $qq^*$, so that $\F(q) = \F(qq^*)$.

    \item\label{item:classification} Our aim is to show that $\F(q)$ is a factor type $\III_1$. The previous point implies that it is sufficient to show that  $\spec(q^*pq)$, the spectrum of $q^*pq$, contains a non-trivial interval, see section~\ref{subsec:classification}.

    \item\label{item:toeplitz} The operator $q^* p q$ can be identified with a block (or matrix-valued) Toeplitz operator with piecewise continuous symbol $\hat p$. Due to our assumption that the Hamiltonian is critical, $\hat p$ has at least one point of discontinuity on the circle. 
    We use arguments from the theory of block Toeplitz operators to show in section~\ref{subsec:toeplitz} that this suffices for $\spec(q^*pq)$ to contain a non-trivial interval. 
\end{enumerate}
We now proceed to explain steps \ref{item:classification} and \ref{item:toeplitz} of the argument.

\subsection{Classification Lemma}
\label{subsec:classification}
The following Lemma yields a sufficient condition to obtain type $\III_1$ for quasi-free states. It can be seen as a summary of the relevant results in \cite{powers1970free_states} and \cite{araki1968factors}.
    \begin{lemma}\label{lem:classification} Consider an operator $0\leq s\leq 1$ on $\h$ such that $[a,b]\subseteq \spec(s)$ for some $0\le a<b\le 1$. Then $\pi_{\omega_s}(\CAR(\h))''$ has type $\III_1$.
\end{lemma}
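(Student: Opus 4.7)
The proof combines the classification of quasi-free CAR factors by Powers and Størmer~\cite{powers1970free_states} with the Araki-Woods classification of ITPFI factors~\cite{araki1968factors}. I would proceed in three steps.

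\emph{Step 1 (Reduction to a strict sub-interval).} Since $a<b$, pick a sub-interval $[a',b']\subseteq [a,b]$ with $0<a'<b'<1$. The spectral projection $E:=\chi_{[a',b']}(s)$ has infinite rank: otherwise $s|_{E\h}$ would be a self-adjoint operator on a finite-dimensional space whose spectrum contains the continuum $[a',b']$, a contradiction. Write $\h=\K\oplus\K^\perp$ with $\K=E\h$ and $s=s_1\oplus s_2$, where $\spec(s_1)=[a',b']$ and $\K$ is separable and infinite-dimensional.

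\emph{Step 2 (Reduction to the subsystem on $\K$).} With respect to $\h=\K\oplus\K^\perp$, the quasi-free state $\omega_s$ factorizes as a $\ZZ_2$-graded tensor product of $\omega_{s_1}$ and $\omega_{s_2}$, so the generated factor decomposes as the graded tensor product of the factors generated by $\omega_{s_1}$ and $\omega_{s_2}$. Since type $\III_1$ is preserved under (graded) tensor products with arbitrary factors, it suffices to show that $\pi_{\omega_{s_1}}(\CAR(\K))''$ is of type $\III_1$.

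\emph{Step 3 (Identification with an Araki-Woods factor and conclusion).} Applying Weyl's criterion to $s_1$, I would extract an orthonormal sequence $\{\xi_n\}\subset\K$ of (approximate) eigenvectors with eigenvalues $\{\lambda_n\}$ dense in $[a',b']$. The Powers-Størmer quasi-equivalence criterion then allows replacing $s_1$ by the diagonal operator $s_1'=\sum_n\lambda_n\kettbra{\xi_n}$ without changing the isomorphism class of the GNS factor. In its Fock representation, $\omega_{s_1'}$ realizes the ITPFI factor $\bigotimes_n(M_2(\CC),\rho_n)$ with $\rho_n=\mathrm{diag}(\lambda_n,1-\lambda_n)$. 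By~\cite{araki1968factors}, this factor is of type $\III_1$ iff its asymptotic ratio set $r_\infty$ equals $\RR_+^*$. Since $\{\lambda_n/(1-\lambda_n)\}$ is dense in the non-trivial interval $[a'/(1-a'),b'/(1-b')]\subset\RR_+^*$, the multiplicative subgroup it generates is dense in $\RR_+^*$, whence $r_\infty=\RR_+^*$ and the factor is type $\III_1$.

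\emph{Main obstacle.} The main technical issue is the quasi-equivalence invoked in Step~3: relating $\omega_{s_1}$, whose defining operator may have purely continuous spectrum, to the ITPFI state $\omega_{s_1'}$ requires careful control of Hilbert-Schmidt norms and spectral behavior near the endpoints $a'$ and $b'$. A cleaner alternative is to appeal directly to the Powers-Størmer classification, which reads off the factor type from the spectral measure of $s$ on $(0,1)$ and thereby bypasses any explicit diagonalization.
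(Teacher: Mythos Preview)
Your overall strategy coincides with the paper's: reduce to a diagonal (pure point spectrum) operator via the Powers--St{\o}rmer quasi-equivalence criterion, identify the resulting factor with an ITPFI factor, and invoke Araki--Woods. The paper, however, bypasses your Steps~1 and~2 entirely by citing a specific result from~\cite{powers1970free_states} (their Lemma~4.3, a Weyl--von~Neumann type statement): for any $0\le s\le 1$ and any $\eps>0$ there exists $0\le r\le 1$ with pure point spectrum whose eigenvalues are dense in $\spec(s)$ and such that both $r^{1/2}-s^{1/2}$ and $(1-r)^{1/2}-(1-s)^{1/2}$ are Hilbert--Schmidt of norm~$<\eps$. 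This yields quasi-equivalence of $\omega_s$ and $\omega_r$ directly on all of~$\h$, after which the ITPFI identification and Araki--Woods finish the argument exactly as in your Step~3.

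Your Step~3 is essentially an attempt to reprove this lemma by hand, and the obstacle you flag is precisely the gap: Weyl's criterion produces approximate eigenvectors, not a diagonal operator $s_1'$ with $s_1^{1/2}-(s_1')^{1/2}$ Hilbert--Schmidt, so quasi-equivalence does not follow from what you wrote. The ``cleaner alternative'' you gesture at is in fact what the paper does, but the precise tool is this diagonalization lemma rather than a direct type read-off from the spectral measure. Once that lemma is in hand, the decomposition $\h=\K\oplus\K^\perp$ and the graded-tensor-product reduction of Step~2 become unnecessary---which is a genuine simplification, since justifying that type~$\III_1$ survives graded tensor products with an arbitrary quasi-free factor is itself a nontrivial side claim you would otherwise have to argue.
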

Note that the premise of the Lemma is fulfilled if the interval is contained in the essential spectrum $[a,b]\subseteq \essspec(s)\subseteq\spec(s)$.
To prove the Lemma, we make use of the following two results from \cite{powers1970free_states}. The first determines when two quasi-free states $\omega_s$ and $\omega_r$ are quasi-equivalent, in which case the respective factors are isomorphic. 
\begin{theorem}[{\cite[Thm.~5.1]{powers1970free_states}}]\label{thm:quasi-equivalence}
    Let $0\leq r,s\leq 1$ act on a Hilbert space $\h$. The quasi-free states $\omega_r$ and $\omega_s$ are quasi-equivalent if and only if $r^{1/2}-s^{1/2}$ and $(1-r)^{1/2}-(1-s)^{1/2}$ are Hilbert-Schmidt. 
\end{theorem}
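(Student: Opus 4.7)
The plan is to realize both GNS representations as sub-representations of the Fock representation on the antisymmetric Fock space $\sF(\h\oplus\h)$ over a doubled one-particle Hilbert space (the Araki-Wyss construction), and then convert the quasi-equivalence question into a Shale-Stinespring implementability criterion for a Bogoliubov transformation on $\h\oplus\h$. Concretely, fixing an antiunitary $J\colon\h\to\h$ and denoting by $c^{(*)}$ the creation and annihilation operators on $\sF(\h\oplus\h)$ with vacuum $\Omega$, I would define
\begin{equation*}
    \pi_r(a(\xi))=c\bigl((1-r)^{1/2}\xi\oplus 0\bigr)+c^*\bigl(0\oplus Jr^{1/2}\xi\bigr),
\end{equation*}
and analogously $\pi_s$. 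A direct calculation verifies the CAR and $\langle\Omega,\pi_r(a(\xi)a^*(\eta))\Omega\rangle=\langle\xi,r\eta\rangle$, so $(\pi_r,\sF(\h\oplus\h),\Omega)$ is a cyclic representation of $\omega_r$. Quasi-equivalence of $\omega_r$ and $\omega_s$ is then equivalent to the existence of a normal $*$-isomorphism $\pi_r(\CAR(\h))''\to\pi_s(\CAR(\h))''$ intertwining the generators.

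For sufficiency, I would construct such an isomorphism by lifting a single Bogoliubov unitary $V$ on $\h\oplus\h$ to Fock space. Writing $V$ in block form with respect to the decomposition $\h\oplus\h$, the intertwining relation $\pi_r(a(\xi))=\Gamma(V)\pi_s(a(\xi))\Gamma(V)^*$ forces the diagonal blocks to match the pairs $(1-r)^{1/2}\!\leftrightarrow\!(1-s)^{1/2}$ and $r^{1/2}\!\leftrightarrow\!s^{1/2}$, while unitarity then determines the off-diagonal blocks up to partial isometries. A short calculation on the one-particle space shows that the off-diagonals are unitarily equivalent (up to bounded factors) to $(1-r)^{1/2}-(1-s)^{1/2}$ and $r^{1/2}-s^{1/2}$. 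By Shale-Stinespring, $V$ lifts to a unitary on $\sF(\h\oplus\h)$ if and only if its off-diagonal blocks are Hilbert-Schmidt, giving the sufficient direction.

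For the converse, I would argue that any quasi-equivalence $\pi_r(\CAR(\h))''\to\pi_s(\CAR(\h))''$ uniquely extends to a CAR-isomorphism on the doubled algebra generated by $\pi_r$ together with the commutant generators coming from the Araki-Wyss construction, and therefore induces a Bogoliubov unitary $V$ on $\h\oplus\h$ of the block form above. Applying Shale-Stinespring in the other direction then forces the off-diagonal blocks to be Hilbert-Schmidt, which by the same identification yields the stated condition on $r^{1/2}-s^{1/2}$ and $(1-r)^{1/2}-(1-s)^{1/2}$.

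The main obstacle is the degenerate behavior on spectral subspaces of $r$ or $s$ where the eigenvalues hit $0$ or $1$: on these subspaces the Araki-Wyss doubling is either trivial or has to be amputated, and the identification of the off-diagonal blocks of $V$ with $r^{1/2}-s^{1/2}$ and $(1-r)^{1/2}-(1-s)^{1/2}$ must be reconciled with the possibly different projections $\chi_{\{0,1\}}(r)$ and $\chi_{\{0,1\}}(s)$. I would handle this by decomposing $\h$ according to the joint spectral types of $r$ and $s$, treating the non-degenerate part by the Shale-Stinespring argument above and checking by hand that the restriction of the Hilbert-Schmidt conditions to the degenerate pieces exactly encodes equality (respectively agreement up to finite/compact perturbations) of the corresponding spectral projections, as required for quasi-equivalence.
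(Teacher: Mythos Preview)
The paper does not prove this theorem; it is quoted from Powers--St\o rmer and used as a black box in the proof of \cref{lem:classification}. Your strategy via the Araki--Wyss doubling and Shale--Stinespring is a legitimate modern route, and the sufficiency direction is essentially right: one checks that $\pi_s=\beta_V\circ\pi_r$ for a Bogoliubov automorphism $\beta_V$ of $\CAR(\h\oplus\h)$ whose off-diagonal part is Hilbert--Schmidt precisely under the stated hypothesis, and Shale--Stinespring then gives a Fock-space unitary implementing $\beta_V$, hence unitary (a fortiori quasi-) equivalence of $\pi_r$ and $\pi_s$. Two minor points: $\Omega$ is generally not cyclic for $\pi_r$ on all of $\sF(\h\oplus\h)$, only on $[\pi_r(\CAR(\h))\Omega]$; and building $V$ as a genuine unitary on $\h\oplus\h$ requires matching a Fredholm index, which you should make explicit rather than hide in ``up to partial isometries''.

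The real gap is the converse. Quasi-equivalence hands you a $*$-isomorphism $\pi_r(\CAR(\h))''\to\pi_s(\CAR(\h))''$ sending $\pi_r(a)\mapsto\pi_s(a)$, but there is no reason this should extend to the doubled CAR algebra or be implemented by a \emph{Bogoliubov} unitary: arbitrary isomorphisms of von Neumann algebras need not respect any one-particle structure, and your ``unique extension to the commutant generators'' is asserted, not proved. The standard way around this is different in spirit: quasi-equivalence makes $\omega_s$ \emph{normal} in the GNS representation of $\omega_r$, hence represented by a vector $\Psi$ in (an amplification of) Fock space; the quasi-free structure of $\omega_s$ then forces $\Psi$ to be of Gaussian type, and the existence of such a vector in Fock space is exactly the Hilbert--Schmidt condition on $r^{1/2}-s^{1/2}$ and $(1-r)^{1/2}-(1-s)^{1/2}$. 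Your degeneracy analysis at the spectral endpoints $\{0,1\}$ is on the right track, but it belongs to this corrected argument rather than to the extension claim.
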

The second result is a modification of a result by von Neumann showing that any self-adjoint operator on a separable Hilbert space may be approximated by one with a pure point spectrum up to an arbitrarily small error in the Hilbert-Schmidt norm. 
\begin{lemma}[{\cite[Lem.~4.3]{powers1970free_states}}]\label{lem:vonNeumann} Let $0\leq s\leq 1$ act on a Hilbert space $\h$ and $\eps>0$. Then there exists an operator $0\leq r\leq 1$ on $\h$ with pure point spectrum such that $r^{1/2}-s^{1/2}$ and $(1-r)^{1/2}-(1-s)^{1/2}$ are Hilbert-Schmidt with Hilbert-Schmidt norm less than $\eps$. Furthermore the eigenvalues of $r$ are dense in the spectrum of $s$.   
\end{lemma}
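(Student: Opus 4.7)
The plan is to reduce both square-root inequalities to a single Weyl--von Neumann approximation via a change of variables that simultaneously linearizes $s^{1/2}$ and $(1-s)^{1/2}$. Introduce the self-adjoint ``angle operator'' $T := \arccos(s^{1/2})$, defined through continuous functional calculus on $s$. Since $\spec(s) \subseteq [0,1]$, we have $\spec(T) \subseteq [0,\pi/2]$, and by construction
\begin{equation}
    s^{1/2} = \cos(T), \qquad (1-s)^{1/2} = \sin(T).
\end{equation}
Hence it suffices to approximate $T$ in Hilbert--Schmidt norm by a pure-point-spectrum self-adjoint operator $T'$ with spectrum in $[0,\pi/2]$ whose eigenvalues are dense in $\spec(T)$.

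Invoking the Weyl--von Neumann theorem on $T$ yields, for any $\delta>0$, a self-adjoint $T'$ with pure point spectrum and $\|T - T'\|_{\mathrm{HS}} < \delta$. Post-composing with the $1$-Lipschitz cutoff $\phi(x) := \max\{0, \min\{x, \pi/2\}\}$ (which satisfies $\phi(T) = T$ since $\spec(T) \subseteq [0, \pi/2]$) one may assume $\spec(T') \subseteq [0, \pi/2]$. With a routine refinement of the Weyl--von Neumann construction---for instance, adding a trace-class perturbation of arbitrarily small HS norm that reserves an orthonormal sequence of eigenvectors assigned to a countable dense subset of $\spec(T)$---one further arranges that the eigenvalues of $T'$ are dense in $\spec(T)$. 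Finally, set $r := \cos^2(T')$. Then $r$ is self-adjoint with $0 \le r \le 1$ and pure point spectrum, and by the spectral mapping theorem applied to the homeomorphism $\cos^2 : [0, \pi/2] \to [0,1]$ its eigenvalues are dense in $\cos^2(\spec(T)) = \spec(s)$, as required.

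For the Hilbert--Schmidt bounds, note that $\spec(T), \spec(T') \subseteq [0, \pi/2]$ ensures $r^{1/2} = \cos(T')$ and $(1-r)^{1/2} = \sin(T')$, so
\begin{equation}
    r^{1/2} - s^{1/2} = \cos(T') - \cos(T), \qquad (1-r)^{1/2} - (1-s)^{1/2} = \sin(T') - \sin(T).
\end{equation}
The core estimate is the non-commutative Lipschitz bound $\|\cos(A) - \cos(B)\|_{\mathrm{HS}} \le \|A - B\|_{\mathrm{HS}}$ for self-adjoint $A, B$, along with the analogous bound for $\sin$. This follows from the Duhamel identity $e^{\i A} - e^{\i B} = \i \int_0^1 e^{\i(1-t)A} (A-B) e^{\i t B}\, dt$ together with the unitary invariance of $\|\cdot\|_{\mathrm{HS}}$, applied to $\cos = \Re(e^{\i \cdot})$ and $\sin = \Im(e^{\i \cdot})$. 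Choosing $\delta = \eps$ then yields $\|r^{1/2} - s^{1/2}\|_{\mathrm{HS}} < \eps$ and $\|(1-r)^{1/2} - (1-s)^{1/2}\|_{\mathrm{HS}} < \eps$ simultaneously.

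The main obstacle is this last step: the perturbation produced by Weyl--von Neumann is not a function of $T$, so $T$ and $T'$ do not share a spectral decomposition and one cannot simply read off $\|\cos(T) - \cos(T')\|_{\mathrm{HS}}$ from eigenvalue differences. The Duhamel representation bypasses the non-commutativity while delivering the sharp Lipschitz constant $1$; the rest is bookkeeping in the continuous functional calculus of $s$. Enforcing density of the eigenvalues of $T'$ in $\spec(T)$ is a minor technicality handled by the explicit addendum to the Weyl--von Neumann construction indicated above.
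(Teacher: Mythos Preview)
The paper does not supply its own proof of this lemma; it is quoted directly from Powers--St{\o}rmer \cite[Lem.~4.3]{powers1970free_states}, preceded only by the remark that it is ``a modification of a result by von Neumann''. So there is no in-paper argument to compare against.

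Your argument is correct and is, in fact, essentially the Powers--St{\o}rmer proof. The angle substitution $T=\arccos(s^{1/2})$ is exactly the device that converts the two awkward square roots into the globally Lipschitz functions $\cos T$ and $\sin T$, thereby reducing both Hilbert--Schmidt estimates to a single Weyl--von Neumann approximation of $T$. The Duhamel bound $\|\cos A-\cos B\|_{\mathrm{HS}}\le\|A-B\|_{\mathrm{HS}}$ is clean and correct.

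Two points you could tighten. For the cutoff step, invoking operator-Lipschitzness of the piecewise-linear $\phi$ in HS norm is true but less elementary than your Duhamel trick; a direct route is to observe that for any eigenvector $\xi_k$ of $T'$ with eigenvalue $\lambda_k\notin[0,\pi/2]$ one has $\mathrm{dist}(\lambda_k,[0,\pi/2])\le\|(T'-T)\xi_k\|$ (because $\spec T\subseteq[0,\pi/2]$), whence $\|\phi(T')-T'\|_{\mathrm{HS}}\le\|T'-T\|_{\mathrm{HS}}$ and the cutoff at worst doubles the error. For the density of eigenvalues, rather than grafting on an ad hoc trace-class correction it is cleaner to split off the discrete-spectrum subspace of $T$ first (where $T$ is already diagonal) and apply Weyl--von Neumann only on its orthogonal complement; the resulting $T'$ then retains every isolated eigenvalue of $T$ exactly, while Weyl's theorem on the invariance of the essential spectrum under compact perturbations guarantees density in $\essspec(T)$, hence in all of $\spec(T)$.
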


\begin{proof}[Proof of \cref{lem:classification}]
    By \cref{lem:vonNeumann}, there exists an operator $0\leq r\leq 1$ with pure point spectrum and eigenvalues that are dense in $\spec(s)$ and hence dense in $[a,b]$. By \cref{thm:quasi-equivalence}, the von Neumann algebras $\pi_{\omega_s}(\CAR(\h))''$ and $\pi_{\omega_r}(\CAR(\h))''$ have the same type. The factor $\pi_{\omega_r}(\CAR(\h))''$ is isomorphic to the ITPFI factor $\left(\otimes_j M_2(\CC), \otimes_j \mathrm{diag}(\mu_j,1-\mu_j)\right)$, where $\mu_j$ are the eigenvalues of $r$. 
    Since every $x\in[a,b]$ is an accumulation point of eigenvalues of $B$, the results of \cite{araki1968factors} show that $\pi_{\omega_s}(\CAR(\h))''$ has type $\III_1$ (type $S_\infty$ in their notation).
\end{proof}

\subsection{The essential spectrum of piecewise continuous block Toeplitz operators}
\label{subsec:toeplitz}
In this section, we show that the essential spectrum of the block Toeplitz operator $q^* p q$ contains a non-trivial interval. Together with \cref{lem:classification} this finishes the proof of \cref{thm:main-fermions}.

We will make use of the theory of piecewise continuous block Toeplitz operators.
To state these results we first have to introduce some notation. 
Let $\hat \Phi: S^1 \rightarrow M_b(\CC)$ be a left-continuous function with finitely many points of discontinuity.
To any such function, we can associate an operator $\Phi$ on $\h$ with $\hat \Phi$ as its symbol. Then $q^* \Phi q$ is a block Toeplitz operator. 
Define a function $\tilde\Phi: S^1 \times [0,1]\rightarrow M_n(\CC)$ by
\begin{align}
    \tilde\Phi(k,\mu) = \mu\hat\Phi(k) + (1-\mu)\hat\Phi(k+),\quad \hat\Phi(k+) := \lim_{\eps\rightarrow 0}\hat\Phi(k+ \eps).
\end{align}
Note that if $\hat \Phi$ is continuous at $k$, we have $\tilde\Phi(k,\mu)=\hat\Phi(k)$. If $\hat\Phi$ is discontinuous at $k$, $\mu \mapsto \tilde\Phi(k,\mu)$ linearly interpolates between $\hat\Phi(k)$ and $\hat\Phi(k+)$.
The following lemma is immediate from \cite[Thm.~3.1]{gohberg_classes_1993} since, by the definition, the essential spectrum of a bounded operator $A$ are those values $\lambda$ for which $A - \lambda 1$ is not Fredholm:
\begin{lemma}
    Let $\hat\Phi:S^1\rightarrow M_b(\CC)$ be piecewise continuous with finitely many points of discontinuity. Then $\lambda\in \essspec(q^* \Phi q)$ if and only if there exists a pair of points $(k,\mu) \in S^1\times [0,1]$ such that $\det(\tilde \Phi(k,\mu)-\lambda 1)=0$. In other words 
    \begin{align}
    \essspec(q^* \Phi q) = \bigcup_{k,\mu} \,\spec(\tilde \Phi(k,\mu)).
\end{align}
\end{lemma}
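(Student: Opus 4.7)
My plan is to reduce the statement directly to a Fredholm criterion for piecewise continuous block Toeplitz operators, relying on the fact that, by definition, $\lambda\in\essspec(A)$ if and only if $A-\lambda\1$ fails to be Fredholm. The cited Gohberg theorem supplies exactly such a criterion for the symbol class we are working in, so the proof will mostly be a translation of the hypothesis into that setup.

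First I would identify the compression $q^*\Phi q$ on $\ell^2(\ZZ_+)\ox\CC^{b}\cong \ell^2(\ZZ_+,\CC^{b})$ with the block Toeplitz operator carrying symbol $\hat\Phi$. Since $q$ is a partial isometry with $q^*q=\1$, one has
\begin{equation}
q^*\Phi q - \lambda\1 \;=\; q^*(\Phi-\lambda\1)q,
\end{equation}
which is again a block Toeplitz operator, now with symbol $\hat\Phi-\lambda\1_b$. This shifted symbol remains piecewise continuous with discontinuity set contained in that of $\hat\Phi$, and because $\tilde\Phi(k,\mu)$ is affine in the pair $(\hat\Phi(k),\hat\Phi(k+))$, the fiber function associated with $\hat\Phi-\lambda\1_b$ is precisely $(k,\mu)\mapsto\tilde\Phi(k,\mu)-\lambda\1_b$.

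Next I would invoke the Gohberg theorem \cite[Thm.~3.1]{gohberg_classes_1993}: a block Toeplitz operator with piecewise continuous symbol $\hat\Psi$ is Fredholm if and only if $\tilde\Psi(k,\mu)$ is invertible for every $(k,\mu)\in S^1\times[0,1]$. Applied to $\hat\Psi=\hat\Phi-\lambda\1_b$, this says $\lambda\notin\essspec(q^*\Phi q)$ iff $\det(\tilde\Phi(k,\mu)-\lambda\1_b)\neq 0$ for all $(k,\mu)$. Negating yields $\lambda\in\essspec(q^*\Phi q)$ iff $\lambda\in\spec(\tilde\Phi(k,\mu))$ for at least one $(k,\mu)$, which is the union formula in the statement.

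The substantive obstacle is not the logic above but matching conventions with \cite{gohberg_classes_1993}: one must use the left-continuous representative of $\hat\Phi$, interpret $\tilde\Phi(k,\cdot)$ as the straight-line segment in $M_b(\CC)$ joining $\hat\Phi(k)$ to $\hat\Phi(k+)$, and invoke the matrix-valued Fredholm criterion in the form that demands invertibility of \emph{every} point on these segments (not just the endpoints, and without a separate partial-index condition, since Fredholmness alone suffices for membership in the essential spectrum). Once these conventions are aligned with the source, the lemma follows with no further calculation.
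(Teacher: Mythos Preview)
Your proposal is correct and follows exactly the paper's approach: the paper states that the lemma is immediate from \cite[Thm.~3.1]{gohberg_classes_1993} together with the definition of the essential spectrum as the set of $\lambda$ for which $A-\lambda\1$ fails to be Fredholm. You have simply spelled out the details of that deduction (identifying $q^*\Phi q-\lambda\1$ as the Toeplitz operator with symbol $\hat\Phi-\lambda\1_b$ and matching conventions), which the paper leaves implicit.
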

We will apply the theorem to $\hat p$, the symbol of the projection $p$. We can represent $\hat p$ by a piecewise continuous function $S^1\rightarrow \M_b(\CC)$ that is continuous from the left. Since $\hat p$ has at least one point of discontinuity, say $k_0\in[0,2\pi)$, we have that
\begin{align}
    \tilde p(k_0,\mu) = \mu \hat p(k_0) + (1-\mu) \hat p(k_0+),
\end{align}
where $\hat p(k_0)$ and $\hat p(k_0+)$ are distinct projection operators (otherwise $\hat p$ would be continuous at $k_0$). 
The following Lemma then shows that a non-trivial interval is contained in $\essspec(q^* \Phi q)$.

\begin{lemma}\label{lem:two-projection}
    Let $p\neq q$ be orthogonal projections on a finite-dimensional Hilbert space. Then
    there exist $a<b\in [0,1] $ with
    \begin{align}
        [a,b] \subseteq \bigcup_{\mu\in[0,1]} \spec\left(\mu p + (1-\mu)q\right).
    \end{align}
\end{lemma}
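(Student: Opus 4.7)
The plan is to analyze the continuous affine path $A(\mu) := \mu p + (1-\mu)q$, $\mu \in [0,1]$, of self-adjoint operators, whose endpoints $A(0) = q$ and $A(1) = p$ are projections, and to exhibit a non-trivial interval swept out by a single eigenvalue branch of $A(\mu)$.

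First, I would invoke the standard Rellich--Kato theorem for continuous paths of Hermitian matrices: the eigenvalues can be listed as continuous functions $\lambda_1(\mu) \leq \cdots \leq \lambda_n(\mu)$ on $[0,1]$. Since $A(0)$ and $A(1)$ are projections, each value $\lambda_k(0), \lambda_k(1)$ belongs to $\{0,1\}$. Note also that $\|A(\mu)\| \leq 1$, so the whole spectral picture is confined to $[0,1]$.

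The crux is to show that whenever $p \neq q$, the operator $A(\mu)$ fails to be a projection for every $\mu \in (0,1)$, which then forces at least one eigenvalue to lie strictly between $0$ and $1$. Expanding $A(\mu)^2 = A(\mu)$ and using $p^2 = p$, $q^2 = q$, the identity collapses (after dividing by $\mu(1-\mu) \neq 0$) to $pq + qp = p + q$. Multiplying this on the left by $p$ and simplifying gives $pqp = p$, equivalently $\mathrm{range}(p) \subseteq \mathrm{range}(q)$; the symmetric manipulation yields $qpq = q$ and the reverse inclusion. Hence $p = q$, contradicting the hypothesis. This algebraic step is the main (and really the only) obstacle; the rest is standard continuity bookkeeping.

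With the previous paragraph in hand, fix any $\mu_0 \in (0,1)$ and pick an index $k_0$ with $\lambda_{k_0}(\mu_0) \in (0,1)$. Since $\lambda_{k_0}$ is continuous on $[0,\mu_0]$ with $\lambda_{k_0}(0) \in \{0,1\}$, the intermediate value theorem guarantees that $\lambda_{k_0}([0,\mu_0])$ contains the non-trivial closed interval between $\lambda_{k_0}(0)$ and $\lambda_{k_0}(\mu_0)$. Setting $[a,b]$ equal to this interval, we have $[a,b] \subseteq \bigcup_{\mu \in [0,1]} \spec(A(\mu))$, as required.
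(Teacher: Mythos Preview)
Your proof is correct and takes a genuinely different route from the paper. The paper invokes Halmos's two-subspaces decomposition to reduce to a direct sum of $2\times 2$ blocks, then explicitly computes the eigenvalue branches $\lambda_\pm(\mu,\chi)=\tfrac12\bigl(1\pm\sqrt{4(\chi^2-1)(\mu-\mu^2)+1}\bigr)$ and reads off the concrete interval $[(1+\norm{c})/2,1]$. Your argument is more elementary and structural: you never compute anything, but instead observe that $A(\mu)^2=A(\mu)$ forces $pq+qp=p+q$ and hence $p=q$, so for $p\ne q$ some ordered eigenvalue $\lambda_{k_0}(\mu_0)$ must lie in $(0,1)$, and continuity plus the intermediate value theorem then produces the interval. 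The trade-off is that the paper's explicit computation yields a quantitative interval in terms of the principal angles between the subspaces, which could in principle feed into estimates, whereas your argument gives only existence---but since the lemma asks only for existence, your approach is both shorter and avoids the two-projections machinery entirely.
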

\begin{proof}
If $[p,q]=0$ and $p\neq q$ there exists a vector $\ket\psi$ such that (w.l.o.g.) $p\ket\psi = \ket\psi$ and $q\ket\psi = 0$. Then $\left(\mu p + (1-\mu)q\right)\ket\psi = \mu \ket\psi$ for $\mu\in[0,1]$ and we can choose $[a,b]=[0,1]$.
So let us assume $[p,q]\neq 0$.
It is well known from the theory of two subspaces \cite{halmos_two_1969} that $p$ and $q$ can be decomposed into orthogonal projections as $p=p_0+p_1$, $q=q_0+q_1$ with  $[p_0,q_0]=0$, $[p_1,q_1]\neq 0$ and $p_1,q_1$ are unitarily equivalent to projection operators of the form
\begin{align}
    p_1\cong \begin{pmatrix}1 & 0\\ 0& 0\end{pmatrix},\quad q_1 \cong \begin{pmatrix} c^2 & c(1-c^2)^{1/2} \\ c(1-c^2)^{1/2} & 1-c^2\end{pmatrix} 
\end{align}
on a Hilbert space $\mc K\oplus \mc K$ and where $0< c < 1$ (i.e. $0\le c\le 1$ and neither $0$ or nor $1$ are eigenvalues).
If $p_0\neq q_0$, the previous argument can be applied, and we get $[a,b]=[0,1]$. We can thus assume $p=p_1$ and $q=q_1$. Writing $c = \sum_j \chi_j \ketbra{\varphi_j}{\varphi_j}$ in a suitable orthonormal basis $\{\ket{\varphi_j}\}$ of $\mc K$, we find that
\begin{align}
    p\cong \bigoplus_{j} \begin{pmatrix} 1&0\\0&0\end{pmatrix},\quad q \cong \bigoplus_j \begin{pmatrix} \chi_j^2 & \chi_j (1-\chi_j^2)^{1/2}\\ \chi_j (1-\chi_j^2)^{1/2} & 1-\chi_j^2\end{pmatrix},\quad 0< \chi_k < 1.
\end{align}
The eigenvalues of $\mu p + (1-\mu)q$ are hence of the form $\lambda_\pm(\mu,\chi) = \frac{1}{2}\left(1 \pm \sqrt{4(\chi^2-1)(\mu-\mu^2) +1}\right)$ with $\chi\in \spec(c)$. These are continuous functions in $\mu$ with $\lambda_+(0,\chi) = \lambda_+(1,\chi) = 1,\lambda_-(0,\chi)=\lambda_-(1,\chi)=0$,   and $\lambda_\pm(1/2,\chi) = \frac{1}{2}(1\pm \chi)$. Hence $[(1+\norm{c})/2,1] \subset  \bigcup_{\mu\in[0,1]} \spec\left(\mu p + (1-\mu)q\right)$.
\end{proof}

\subsection{Embezzling families induced by embezzling states}
\label{sec:mbz_fam}

In this section, we show that embezzling families can be obtained from embezzling, infinite-volume, ground states of many-body systems with local Hamiltonians.
We derive our result from an abstract result relating embezzling states to embezzling families, which we discuss first.

Let us begin by formally recalling the definition of an embezzling family.
Since the fermionic observable algebras $\CAR_e(\h)$ are not pure matrix algebras but contain a center if $\dim\h<\oo$, we use an algebraic approach.
A bipartite system $(\M_A,\M_B,\H)$ is finite-dimensional if the Hilbert space $\H$ finite-dimensional.
A finite-dimensional bipartite system is necessarily of the form 
\begin{equation}\label{eq:direct_sum}
    \H = \bigoplus_{k=1}^N \CC^{n_k}\otimes \CC^{m_k}, \quad \M_A = \bigoplus_k M_{n_k}(\CC)\otimes 1, \quad \M_B= \bigoplus_k 1\ox M_{m_k}(\CC) 
\end{equation}
for some $N\in\NN$ and $n,m \in \NN^{N}$.

\begin{definition}\label{def:mbzfam-bipartite}
    Let $(\M_{A_n},\M_{B_n},\H_n)$ be a sequence of finite-dimensional bipartite systems.
    A sequence $(\ket{\Omega_n}_{A_nB_n})$ of bipartite state vectors $\ket{\Omega_n}_{AB}\in\H_n$ is a \emph{(bipartite) embezzling family} if for all $d\in\NN$ and $\eps>0$, there exists $n(\eps,d)$ such that for all $n\ge n(\eps,d)$ and all $\ket\Psi_{A'B'}\in\CC^d\ox\CC^d$ there exist unitaries $u_{A_nA'}\in \M_{A_n} \otimes M_d(\CC)\ox1$ and $u_{B_nB'}\in \M_{B_n}\ox1\ox M_d(\CC)$
    \begin{equation}
        \norm{\ket{\Omega_n}_{A_nB_n}\ox\ket0_{A'}\ket0_{B'}- u_{A_nA'}u_{B_nB'} \big(\ket{\Omega_n}_{A_nB_n}\ox\ket{\Psi}_{A'B'}\big)} < \eps.
    \end{equation}
\end{definition}

The reduced states of an embezzling family enjoy the following monopartite embezzling property:

\begin{definition}
\label{def:mbzfam}
    Let $(\M_n)$ be a sequence of finite-dimensional von Neumann algebras and $(\omega_n)$ be a sequence of states on $\M_n$.
    Then $(\omega_n)$ is a \emph{monopartite embezzling family} of states if for all $d\in\NN$ and all $\eps>0$ there exists $n(\eps,d)\in\NN$ such that for all $n\geq n(\eps,d)$ and every state $\psi$ on $M_d(\CC)$ there exist unitaries $u_{n}\in M_{d}(\M_{n}) = \M_{n}\ox M_{d}(\CC)$ such that
    \begin{align}\label{eq:vilsa}
        \norm{\omega_{n}\ox\psi-u_{n}\big(\omega_{n}\ox\bra0\placeholder\ket0\big)u_{n}^{*}} & < \eps.
    \end{align}
\end{definition}

\begin{proposition}\label{prop:mono-bi}
    Let $(\M_{A_n},\M_{B_n},\H)$ be a sequence of finite-dimensional bipartite systems. Let $\ket{\Omega_n}_{A_nB_n}\in\H_n$ be a sequence of bipartite state vectors and let $\omega_{A_n}$ be the sequence of induced states on $\M_{A_n}$.
    The following are equivalent:
    \begin{enumerate}[(a)]
        \item $(\ket{\Omega_n}_{A_nB_n})$ is a bipartite embezzling family,
        \item $(\omega_{A_n})$ is a monopartite embezzling family.
    \end{enumerate}
\end{proposition}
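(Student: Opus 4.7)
The plan is to establish each implication separately by exploiting the correspondence between bipartite state vectors and their induced states on Alice's algebra, using the $2$-Lipschitz bound for the partial trace in one direction and Uhlmann's theorem in the other.

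For (a)$\Rightarrow$(b), given any state $\psi$ on $M_d(\CC)$, first purify it to a vector $\ket\Psi\in\CC^d\ox\CC^d$, which is always possible since $d$ itself suffices as the purifying dimension. By the bipartite embezzling hypothesis, for $n$ large there exist unitaries $u_{A_nA'}\in\M_{A_n}\ox M_d(\CC)$ and $u_{B_nB'}\in\M_{B_n}\ox M_d(\CC)$ realising \cref{def:mbzfam-bipartite} at error $\eps$. Taking reduced states on the extended Alice algebra $\M_{A_n}\ox M_d(\CC)$ and using that $u_{B_nB'}$ lies in its commutant and so does not affect Alice's reduction, one obtains
\begin{align*}
    \bigl\lVert u_{A_nA'}^{*}(\omega_{A_n}\ox\bra0\placeholder\ket0)u_{A_nA'} - \omega_{A_n}\ox\psi\bigr\rVert \le 2\eps,
\end{align*}
since the map from unit vectors to their reduced states is $2$-Lipschitz in trace norm. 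Setting $u_n := u_{A_nA'}^*$ then realises monopartite embezzlement as in \cref{def:mbzfam}.

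For (b)$\Rightarrow$(a), let $\ket\Psi\in\CC^d\ox\CC^d$, set $\psi:=\tr_{B'}\kettbra\Psi$, and use the monopartite hypothesis to produce $u_n$ with $u_n(\omega_{A_n}\ox\bra0\placeholder\ket0)u_n^*\approx\omega_{A_n}\ox\psi$ at error $\delta$. Put $u_{A_nA'}:=u_n^*$ and compare the two state vectors $\ket{\phi_1}:=u_{A_nA'}^{*}(\ket{\Omega_n}\ox\ket{0}\ket{0})$ and $\ket{\phi_2}:=\ket{\Omega_n}\ox\ket\Psi$ in $\H_n\ox\CC^d\ox\CC^d$. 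Their reduced states on $\M_{A_n}\ox M_d(\CC)$ are each within trace-norm distance $\delta$ of $\omega_{A_n}\ox\psi$. Uhlmann's theorem then furnishes a unitary $u_{B_nB'}$ in the commutant $(\M_{A_n}\ox M_d(\CC)\ox 1)'=\M_{B_n}\ox M_d(\CC)$ with $\norm{u_{B_nB'}\ket{\phi_2}-\ket{\phi_1}}\le C\sqrt{\delta}$, the square-root loss coming from the Fuchs--van de Graaf inequality relating fidelity and trace distance. Multiplying by the norm-preserving unitary $u_{A_nA'}$ yields the bipartite estimate in \cref{def:mbzfam-bipartite}; choosing $\delta=\eps^2/C^2$ achieves error $\eps$, so the function $n(\eps,d)$ in \cref{def:mbzfam-bipartite} may be taken to be $n_{\mathrm{mono}}(\eps^2/C^2,d)$.

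The main subtlety is ensuring that the Uhlmann unitary lies in $\M_{B_n}\ox M_d(\CC)$ rather than merely in $\B(\H_n\ox\CC^d\ox\CC^d)$, given the direct-sum structure \eqref{eq:direct_sum} of finite-dimensional bipartite systems. The crucial observation is that $u_{A_nA'}\in\M_{A_n}\ox M_d(\CC)$ respects the central decomposition $\H_n=\bigoplus_k \CC^{n_k}\ox\CC^{m_k}$, so both $\ket{\phi_1}$ and $\ket{\phi_2}$ inherit the same block weights $(p_k)$ as $\ket{\Omega_n}$. Applying standard Uhlmann within each central block produces a unitary in $1\ox M_{m_k}\ox M_d(\CC)$, and their direct sum is the required element of $\M_{B_n}\ox M_d(\CC)$.
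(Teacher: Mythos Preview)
Your proof is correct and follows essentially the same route as the paper. The paper records the (b)$\Rightarrow$(a) direction as a separate lemma (\cref{lem:omivine}) whose proof also reduces to the factor case via the Powers--St{\o}rmer/Fuchs--van de Graaf bound $\min_u\norm{\ket{\Phi_1}-(1\ox u)\ket{\Phi_2}}^2\le\norm{\rho_1-\rho_2}_1$ and then sums over the central blocks of \eqref{eq:direct_sum}, exactly as you do; for (a)$\Rightarrow$(b) the paper just says ``it is clear,'' and your $2$-Lipschitz argument makes that explicit.
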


\begin{lemma}\label{lem:omivine}
    Let $(\M_A,\M_B,\H)$ be a bipartite system.
    Let $\ket\Omega_{AB}\in\H$ and $\ket\Psi_{A'B'}\in\CC^d\otimes\CC^d$ be bipartite states with reduced states $\omega_A$ and $\psi_A$ on $\M_A$ and $M_d(\CC)$, respectively.
    If there is a unitary $u_{AA'}\in \M_A\otimes M_d(\CC)$ such that $\norm{\omega_A\ox\psi_{A'}-u_{AA'}(\omega_A\ox\bra0\placeholder\ket0_{A'})u_{AA'}^*}\le \eps$, then there is a unitary $u_{BB'}\in \M_B\ox M_d(\CC)$ such that $\norm{\ket{\Omega}_{AB}\ox\ket{\Psi}_{A'B'}-u_{AA'} u_{BB'}(\ket{\Omega}_{AB}\ox\ket0_{A'}\ket0_{B'})}\le \eps^{1/2}$.
\end{lemma}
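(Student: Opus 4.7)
The plan is to reinterpret both sides of the desired inequality as vector states on one factor of an enlarged bipartite system, and then apply an operator-algebraic Uhlmann-type bound. Form the enlarged bipartite system $(\N_A,\N_B,\K)$ with $\K := \H\ox\CC^d\ox\CC^d$ (the two $\CC^d$ factors labeled $A'$ and $B'$) and
$\N_A := \M_A\ox M_d(\CC)\ox 1$, $\N_B := \M_B\ox 1 \ox M_d(\CC)$.
Haag duality $\N_A' = \N_B$ is inherited from $\M_A' = \M_B$. Set $\ket{\Phi_1} := u_{AA'}\bigl(\ket{\Omega}_{AB}\ox\ket 0_{A'}\ket 0_{B'}\bigr)$ and $\ket{\Phi_2} := \ket{\Omega}_{AB}\ox\ket{\Psi}_{A'B'}$ in $\K$. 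Since $u_{AA'}\in\N_A$ and any $u_{BB'}\in\N_B$ commute, the conclusion is equivalent to producing a unitary $u_{BB'}\in\N_B$ with $\norm{u_{BB'}\Phi_1 - \Phi_2} \le \eps^{1/2}$.

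A direct computation of vector states shows that the restrictions $\rho_i := \omega_{\Phi_i}|_{\N_A}$ are $\rho_1 = u_{AA'}(\omega_A\ox\bra 0\placeholder\ket 0_{A'})u_{AA'}^*$ and $\rho_2 = \omega_A\ox\psi_{A'}$, so the hypothesis is exactly $\norm{\rho_1 - \rho_2} \le \eps$. The desired bound then follows from the standard operator-algebraic Uhlmann inequality: for two vectors $\Phi_1,\Phi_2$ inducing normal states $\rho_1,\rho_2$ on one factor $\N_A$ of a bipartite system with $\N_B=\N_A'$, there exists a unitary $w\in\N_B$ with $\norm{w\Phi_1-\Phi_2}^2 \le \norm{\rho_1 - \rho_2}$. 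One derives this by passing each $\Phi_i$ to its representative $\xi_i$ in the natural positive cone of $\N_A$ (in standard form on $\K$), invoking the Powers–St\o rmer estimate $\norm{\xi_1 - \xi_2}^2 \le \norm{\rho_1 - \rho_2}$ for cone representatives, and combining this with the fact that each $\Phi_i$ differs from $\xi_i$ by a partial isometry in $\N_B$. Taking $u_{BB'} := w$ and taking square roots gives the $\eps^{1/2}$ bound in the claim.

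The main technical point is the final part of the Uhlmann step, namely upgrading the partial isometry furnished by natural-cone arguments to a genuine unitary in $\N_B$. This is automatic in the principal regime of interest for this paper, where $\M_B$ is a type $\III_1$ (hence properly infinite) factor and all nonzero projections in $\N_B$ are equivalent; the additional $M_d(\CC)$ tensor factor on the $B'$ side provides further algebraic room. In the generality stated, one argues the same way after enlarging by the ancilla $M_d(\CC)$, which suffices to accommodate the co-supports of the two vectors' cyclic subspaces under $\N_A$.
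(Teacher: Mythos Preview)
Your high-level strategy---reduce to an Uhlmann/Powers--St\o rmer bound on the enlarged system $(\N_A,\N_B,\K)$---is the same as the paper's. The difficulty is in the execution. Your derivation invokes ``the natural positive cone of $\N_A$ (in standard form on $\K$)'', but nothing guarantees that $\N_A$ acts in standard form on $\K$. In fact, in the only place the lemma is used (the proof of \cref{prop:mono-bi}), the bipartite system is finite-dimensional of the shape \eqref{eq:direct_sum}, so $\N_A\cong\bigoplus_k M_{n_kd}(\CC)\otimes 1$ on $\bigoplus_k \CC^{n_kd}\otimes\CC^{m_kd}$, which is standard only when $n_k=m_k$ for every $k$. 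Without standard form there is no natural cone, and the argument as written does not start.

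Even granting a cone, your ``combine the partial isometries and then upgrade to a unitary'' step is where the content lies, and you do not actually carry it out. Appealing to type $\III_1$ misses the point, since the lemma is applied precisely to finite type $\I$ algebras; and the closing sentence about ``enlarging by the ancilla $M_d(\CC)$'' is not a proof---you are not free to enlarge, because the conclusion demands $u_{BB'}\in\M_B\otimes M_d(\CC)$. The paper sidesteps both issues by working directly in finite dimensions: on a single summand $\CC^n\otimes\CC^m$ the ordinary Uhlmann theorem together with Fuchs--van de Graaf gives
\[
\min_{u\in\mathrm U(m)}\norm{\ket{\Phi_1}-(1\otimes u)\ket{\Phi_2}}^2\le\norm{\rho_1-\rho_2}_1,
\]
where the minimiser is a genuine unitary because any two purifications in $\CC^n\otimes\CC^m$ are related by a unitary on the second tensor factor (Schmidt decomposition). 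The general finite-dimensional case then follows by applying this on each block of the direct sum \eqref{eq:direct_sum} and summing the squared errors, which is elementary. If you want to keep your operator-algebraic phrasing, you should replace the cone argument by this direct finite-dimensional computation.
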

\begin{proof}
\emph{Step 1.} We begin by assuming that $(\M_A,\M_B,\H)$ is of the special form $(M_n(\CC)\ox1,1\ox M_m(\CC),\CC^n\ox\CC^m)$.
We make use of the following standard inequality, which immediately follows from the Powers-St\o rmer/Fuchs-van der Graaf inequalities:
Let $\ket{\Phi_1}, \ket{\Phi_2} \in \CC^n\ox \CC^m$ be two normalized vectors with reduced density matrices $\rho_1$ and $\rho_2$ on $\CC^n$, respectively. Then
\begin{align}\label{eq:graaf}
    \min_{u \in \mathrm U(m)} \norm{\ket{\Phi_1} - 1\otimes u \ket{\Phi_2}}^2 \leq \norm{\rho_1-\rho_2}_1.
\end{align}
The result then follows by setting $\ket{\Phi_1} = \ket{\Omega}_{AB}\ox \ket{\Psi}_{A'B'}, \ket{\Phi_2} = u_{AA'}\ket{\Omega}_{AB}\ox\ket 0_{A'} \ket 0_{B'}$.

\emph{Step 2.}
The general case follows from the first step since we can decompose a bipartite system $(\M_A,\M_B,\H)$ as a direct sum $\bigoplus (\M_{A_k},\M_{B_k},\H_k)$ of bipartite systems without a center, i.e., of the form considered in the first step, and since $\norm{\oplus_k \ket{\xi_k} - \oplus_k \ket{\eta_k}}^2 = \sum_k \norm{\ket{\xi_k} - \ket{\eta_k}}^2$ as well as $\norm{\oplus_k \rho_k - \oplus_k \sigma_k}_1 = \sum_k\norm{\rho_k-\sigma_k}_1$.
\end{proof}

\begin{proof}[Proof of \cref{prop:mono-bi}]
    It is clear that a bipartite embezzling family induces a monopartite embezzling family of reduced states.
    The converse follows from \cref{lem:omivine}.
\end{proof}

A general von Neumann algebra $\M$ is called \emph{hyperfinite} if it is the (weak) closure of an increasing family of finite type $\I$ von Neumann algebras, i.e., multi-matrix algebras of the form given in eq.~\eqref{eq:direct_sum}.
The following theorem shows that every embezzling state on a hyperfinite von Neumann algebra induces an embezzling family. 

\begin{theorem}
\label{thm:mbzfam}
    Let $\M$ be a hyperfinite von Neumann and let $\M_1\subset\M_2\subset...$ be a weakly dense increasing sequence of finite-dimensional von Neumann algebras. If $\omega$ is a monopartite embezzling state, the sequence $(\omega_n)$ of restricted states $\omega_n=\omega|_{\M_n}$ is a monopartite embezzling family.
\end{theorem}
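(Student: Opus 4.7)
The plan is to reduce the continuum of states $\psi$ on $M_d(\CC)$ to a finite list using norm-compactness of the state space, apply the embezzling property of $\omega$ to each representative to obtain unitaries in $\M\ox M_d(\CC)$, and then approximate these finitely many unitaries by unitaries in $\M_n\ox M_d(\CC)$ simultaneously, via Kaplansky's density theorem combined with normality of $\omega$.

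Concretely, for fixed $d\in\NN$ and $\eps>0$, I will pick a finite $(\eps/3)$-net $\psi_1,\dots,\psi_K$ in the (norm-compact) state space of $M_d(\CC)$. Using that $\omega$ is a monopartite embezzling state, I obtain for each $i$ a unitary $u_i\in\M\ox M_d(\CC)$ with $\|\omega\ox\psi_i - u_i\phi u_i^*\| < \eps/3$, where $\phi := \omega\ox\bra0\placeholder\ket0$. Since $\bigcup_n \M_n\ox M_d(\CC)$ is weakly dense in $\M\ox M_d(\CC)$, Kaplansky's density theorem will produce, for each $i$, a sequence of unitaries $v_i^{(n)}\in\M_n\ox M_d(\CC)$ converging $*$-strongly to $u_i$ as $n\to\oo$. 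Choosing $n(\eps,d)$ large enough uniformly in the finite index set $i=1,\dots,K$, I will arrange that
\[
\|u_i\phi u_i^* - v_i^{(n)}\phi(v_i^{(n)})^*\| < \eps/3,\quad \forall n\ge n(\eps,d),\ i=1,\dots,K.
\]

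To finish, given $n\ge n(\eps,d)$ and an arbitrary state $\psi$, I will pick $i$ with $\|\psi-\psi_i\|<\eps/3$, set $u_n := v_i^{(n)}$, and use two observations: restricting a functional from $\M\ox M_d(\CC)$ to $\M_n\ox M_d(\CC)$ cannot increase its dual norm, and because $u_n\in \M_n\ox M_d(\CC)$ conjugation by $u_n$ preserves that subalgebra, so
\[
v_i^{(n)}\phi(v_i^{(n)})^*\big|_{\M_n\ox M_d(\CC)} = u_n(\omega_n\ox\bra0\placeholder\ket0)u_n^*.
\]
Three applications of the triangle inequality then deliver the required $\eps$-bound.

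The hard step is the continuity statement used to pass from $*$-strong convergence $v_i^{(n)}\to u_i$ to norm convergence $v_i^{(n)}\phi(v_i^{(n)})^*\to u_i\phi u_i^*$ in the predual. This requires normality of $\phi$, hence of $\omega$, which should be implicit in the hyperfinite setting since embezzling states there arise as vector states. Given normality, the continuity reduces to the pure-state estimate $\||\xi\rangle\langle\xi|-|\eta\rangle\langle\eta|\|_1 \lesssim \|\xi-\eta\|$ applied to an $\ell^1$-decomposition of $\phi$ into vector states, with dominated convergence yielding the norm conclusion. The compactness step, meanwhile, is what converts pointwise embezzling into the uniform family parameter $n(\eps,d)$.
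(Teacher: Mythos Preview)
Your proposal is correct and follows essentially the same route as the paper: compactness of the state space of $M_d(\CC)$ to reduce to finitely many targets, strong* approximation of the embezzling unitaries by unitaries from $\bigcup_n \M_n\ox M_d(\CC)$, the continuity $v\mapsto v\phi v^*$ for normal $\phi$, and the restriction inequality to pass from $\omega$ to $\omega_n$. One small remark: Kaplansky's density theorem only gives approximants in the unit ball, not unitaries; the paper handles this via a separate lemma (polar-decomposition argument \`a la Glimm combined with \cite[Thm.~II.4.11]{takesaki1}), and you should cite the analogous unitary-group density statement rather than Kaplansky proper.
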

To prove the theorem, we observe that the unitary group of $\M$ arises as the strong* closure of the unitary group of the increasing sequence of subalgebras:
\begin{lemma}
\label{lem:unitary_closure}
Consider $(\M_{n})$ and $\M$ as above, and denote $\M_{0}=\bigcup_{n}\M_{n}$. Then
\begin{align}
\label{eq:unitary_closure}
\U(\M) & = \overline{\U(\M_{0})}^{s^*}
\end{align}
where the closure on the right is taken in the strong* topology.
\end{lemma}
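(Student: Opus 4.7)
The plan is to prove both inclusions in \eqref{eq:unitary_closure} separately. The easy direction $\overline{\U(\M_{0})}^{s^{*}}\subseteq\U(\M)$ I would handle by noting that $\U(\M_{0})\subseteq\U(\M)$ since each $\M_{n}\subseteq\M$, and that $\U(\M)$ is strong${}^{*}$-closed in $\B(\H)$: if unitaries $u_{\alpha}\to u$ strong${}^{*}$, joint strong continuity of multiplication on norm-bounded sets yields $1=u_{\alpha}^{*}u_{\alpha}\to u^{*}u$ and $1=u_{\alpha}u_{\alpha}^{*}\to uu^{*}$ strongly, while $u\in\M$ because $\M$ is weakly, hence strong${}^{*}$, closed.

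For the nontrivial inclusion, I would fix $u\in\U(\M)$ and exponentiate: by the Borel functional calculus there is a self-adjoint $h\in\M$ with $\|h\|\le\pi$ and $u=\exp(ih)$. Since $\M_{0}$ is a $*$-subalgebra of $\M$ whose weak closure equals $\M$, Kaplansky's density theorem supplies a net (or sequence, by separability of the underlying Hilbert space) of self-adjoints $h_{\alpha}\in\M_{0}$ with $\|h_{\alpha}\|\le\pi$ converging strongly to $h$; for self-adjoints, strong convergence coincides with strong${}^{*}$ convergence. Each $h_{\alpha}$ sits in some finite-dimensional $\M_{n(\alpha)}$, so the norm-convergent exponential $u_{\alpha}:=\exp(ih_{\alpha})$ is a unitary in $\M_{n(\alpha)}\subseteq\M_{0}$.

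The remaining step is to promote $h_{\alpha}\to h$ strong${}^{*}$ to $u_{\alpha}\to u$ strong${}^{*}$. Approximating $t\mapsto\exp(it)$ uniformly by polynomials on $[-\pi,\pi]$ and using that polynomial functional calculus is jointly strongly continuous on norm-bounded sets of self-adjoints, one obtains $u_{\alpha}\to u$ strongly; strong convergence of unitaries to a unitary then upgrades automatically to strong${}^{*}$ convergence via the identity $u_{\alpha}^{*}-u^{*}=-u_{\alpha}^{*}(u_{\alpha}-u)u^{*}$, which shows $u_{\alpha}^{*}\to u^{*}$ strongly as well. The main technical point that needs care is this strong continuity of the continuous functional calculus on bounded sets of self-adjoints, but it is a standard consequence of polynomial approximation and should not be a real obstacle; everything else is a direct combination of Kaplansky density and the spectral theorem.
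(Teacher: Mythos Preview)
Your proof is correct and takes a genuinely different route from the paper's. The paper proceeds in two steps: first it uses Glimm's polar-decomposition trick to show that $\U(\M_0)$ is \emph{norm} dense in $\U(\overline{\M_0}^{\,\norm{\cdot}})$ (approximate a unitary $u$ in the norm closure by elements $x_n\in\M_n$, then replace each $x_n$ by the unitary $x_n(x_n^*x_n)^{-1/2}$ once $x_n$ is invertible), and then invokes the standard fact \cite[Thm.~II.4.11]{takesaki1} that the unitary group of a C*-algebra is strong$^*$ dense in the unitary group of its weak closure. Your argument bypasses the intermediate C*-algebra entirely: you write $u=\exp(ih)$, pull $h$ back to self-adjoints $h_\alpha\in\M_0$ via Kaplansky, and push forward through the exponential, using finite-dimensionality of each $\M_n$ only to ensure $\exp(ih_\alpha)\in\M_0$. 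This is essentially the proof of the cited Takesaki theorem carried out in place, so it is more self-contained; the paper's approach, on the other hand, isolates a norm-density statement that is of independent interest and makes clearer why finite-dimensionality of the $\M_n$ (rather than mere norm-closedness) is being used.
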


\begin{proof}[Proof (following Glimm {\cite[Lem.~3.1]{glimm1960uhf}}).]
In the norm topology, it is clear that $\overline{\U(\M_{0})}\subseteq\U(\overline{\M_{0}})$ because norm limits of unitaries are unitary. 
To see the converse, it suffices to show that every unitary $u\in \M_0$ arises as the norm limit of unitaries $u_n\in \M_n$.
Given $u\in\U(\M_{0})$, we find a sequence $(x_{n})$ with $x_{n}\in\M_{n}$ such that $\lim_{n}\norm{x_{n}-u}=0$. This implies that there exists an $N$ such that $x_{n}$ is invertible in $\M_{n}$ for all $n\geq N$ because the group of units of the Banach algebra $\M_{0}$ is open \cite[Prop.~I.1.7]{takesaki1}, and $x_{n}^{-1}$ can be obtained via the holomorphic functional calculus within the Banach algebra $\M_{n}$. Therefore, we can define the unitaries $u_{n}=x_{n}(x_{n}^{*}x_{n})^{-\frac{1}{2}}\in\M_{n}$ for $n\geq N$. It follows that $u_{n}\to u$ in norm because $(x_{n}^{*}x_{n})^{-\frac{1}{2}}\to(u^{*}u)^{-\frac{1}{2}}=1$ in norm.
To conclude the proof, we note that the unitary group of a C*-algebra $\A$ (acting on a Hilbert space) is strong* dense in the unitary group of the von Neumann algebra generated by $\A$ \cite[thm.~II.4.11]{takesaki1}.
\end{proof}

\begin{proof}[Proof of \cref{thm:mbzfam}]
Given $\eps>0$ and a state $\psi$ on $M_{m}(\CC)$, we find a unitary $u\in M_{m}(\M)$ such that
\begin{align}
\label{eq:mbzhalfeps}
    \norm{\omega\ox\psi-u(\omega\ox\bra0\placeholder\ket0)u^{*}} & < \eps.
\end{align}
By \cref{lem:unitary_closure}, the unitary $u$ can be approximated in the $\sigma$-strong topology (since the $\sigma$-strong and the strong topology coincide on the closed unit ball of $\M$ \cite[Lem.~II.2.5]{takesaki1})
by a sequence of unitaries $u_{j}\in\bigcup_{n}M_{m}(\M_{n})$.
It follows that\footnote{This is a consequence of the general fact that $\sigma$-strong convergence $x_n\to x$ implies $x_n\omega x_n^* \to x\omega x$ for all normal states $\omega$, which can be seen directly from the definition of the $\sigma$-strong topology \cite[Sec.~II.2]{takesaki1}.}
\begin{align*}
    \lim_{j}\norm{u_{j}(\omega\ox\bra0\placeholder\ket0)u_{j}^{*}-u(\omega\ox\bra0\placeholder\ket0)u^{*}} & = 0.
\end{align*}
In particular, we find an $j_{0}$ such that for all $j\geq j_{0}$, we have:
\begin{align}
\label{eq:uhalfeps}
    \norm{u_{j}(\omega\ox\bra0\placeholder\ket0)u_{j}^{*}-u(\omega\ox\bra0\placeholder\ket0)u^{*}} & < \eps.
\end{align}
Since $u_{j}\in\bigcup_{n}M_{m}(\M_{n})$ for all $j$, there is an $n_{0}$ such that $u_{j_{0}}\in M_{m}(\M_{n_{0}})$.
Combining \eqref{eq:mbzhalfeps} and \eqref{eq:uhalfeps}, and because $M_{m}(\M_{n_{0}})\subset M_{m}(\M_{n})$ for all $n\geq n_{0}$, we find:
\begin{align}
\label{eq:mbzfamcombined}
    \norm{\omega_{n}\ox\psi-u_{j_{0}}\big(\omega_{n}\ox\bra0\placeholder\ket0\big)u_{j_{0}}^{*}} & \leq \norm{\omega\ox\psi-u_{j_{0}}\big(\omega\ox\bra0\placeholder\ket0\big)u_{j_{0}}^{*}}\nonumber\\ \nonumber
    & \leq \norm{\omega\ox\psi-u\big(\omega\ox\bra0\placeholder\ket0\big)u^{*}} \\ \nonumber
    & \hspace{0.5cm} + \norm{u_{j_{0}}\big(\omega\ox\bra0\placeholder\ket0\big)u_{j_{0}}^{*}-u(\omega\ox\bra0\placeholder\ket0)u^{*}}\\ 
    & < \eps.
\end{align}
To conclude the argument, we observe that for every $\eps>0$, we can choose a finite collection of states $\{\psi_{i}\}_{i\in I}$ on $M_{m}(\CC)$, $|I|<\infty$, such that every state $\psi$ on $M_{m}(\CC)$ has norm distance $<\eps$ to this collection (due to the norm compactness of the states space of $M_{m}(\CC)$). We refer to such a collection $\{\psi_{i}\}_{i\in I}$ as a (finite) $\eps$-cover. Thus, we find for each $i\in I$, unitaries $u_{j_{0}}(i)$ and $n_{0}(i)$ such that \eqref{eq:mbzfamcombined} holds for any state $\psi$ on $M_{m}(\CC)$ if $n\geq\max_{i\in I}n_{0}(i)$.
This proves the claim.
\end{proof}

The assumptions of \cref{thm:mbzfam} can be relaxed by allowing for some additional freedom in the relation between the embezzling family $\omega^{(n)}$ and the embezzling state $\omega$ as stated in the following corollary.
\begin{corollary}
\label{cor:mbzfam}
With the assumption of \cref{thm:mbzfam}, let $\omega^{(n)}$ be a sequence of states on $\M_n$ that converges to $\omega$ in the sense that $\lim_{n}\omega^{(n)}(a_k)=\omega(a_{k})$ for all $k$ and $a_{k}\in\M_{k}$. Then, there is a subsequence that is an embezzling family.
\end{corollary}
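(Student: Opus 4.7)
The plan is to adapt the proof of \cref{thm:mbzfam} by replacing the restrictions $\omega|_{\M_{n}}$ with the given sequence $\omega^{(n)}$ and then extracting a subsequence via diagonalization. For each pair $(d,\eps)$ with $d\in\NN$ and $\eps>0$, the proof of \cref{thm:mbzfam} yields an integer $N(d,\eps)$, a finite $\eps$-cover $\{\psi_{i}\}_{i\in I(d,\eps)}$ of the state space of $M_{d}(\CC)$, and a finite family of unitaries $\{u_{i}(d,\eps)\}_{i\in I(d,\eps)}\subset M_{d}(\M_{N(d,\eps)})$, all lying in a fixed finite-dimensional subalgebra, such that for every state $\psi$ on $M_{d}(\CC)$ some $u_{i}(d,\eps)$ satisfies $\norm{\omega\ox\psi-u_{i}(d,\eps)(\omega\ox\bra0\placeholder\ket0)u_{i}(d,\eps)^{*}}<\eps$.

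Since $\M_{N(d,\eps)}$ is finite-dimensional, the pointwise hypothesis $\omega^{(n)}(a)\to\omega(a)$ for all $a\in\M_{N(d,\eps)}$ is equivalent to norm convergence $\omega^{(n)}|_{\M_{N(d,\eps)}}\to\omega|_{\M_{N(d,\eps)}}$ in the trace-norm topology of the (finite-dimensional) predual. Combining the triangle inequality with the standard estimate $\norm{\phi\ox\sigma}\leq\norm{\phi}\norm{\sigma}$ for tensor products of bounded linear functionals yields
\begin{equation*}
\norm{\omega^{(n)}\ox\psi-u_{i}(\omega^{(n)}\ox\bra0\placeholder\ket0)u_{i}^{*}}\leq\norm{\omega\ox\psi-u_{i}(\omega\ox\bra0\placeholder\ket0)u_{i}^{*}}+2\norm{\omega^{(n)}-\omega|_{\M_{n}}}_{\M_{n}^{*}}.
\end{equation*}
Thus, provided the subsequence is chosen so that the last term becomes sufficiently small, the unitaries $u_{i}(d,\eps)$ continue to provide $2\eps$-embezzling for $\omega^{(n)}$ with respect to the norm on $M_{d}(\M_{n})^{*}$.

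Finally, enumerate the pairs $(d_{m},1/l_{m})$, $m\in\NN$, bijectively with $\NN$ and build the subsequence inductively by $n_{k}=\max\{n_{k-1}+1,\tilde{n}_{1},\ldots,\tilde{n}_{k}\}$, where $\tilde{n}_{m}$ is chosen such that for all $n\geq\tilde{n}_{m}$ the unitaries $u_{i}(d_{m},1/l_{m})$ yield $(2/l_{m})$-embezzling for $\omega^{(n)}$. The resulting sequence $(\omega^{(n_{k})})_{k\in\NN}$ is an embezzling family in the sense of \cref{def:mbzfam}. The main obstacle is the second step: the embezzling inequality is required on the full predual $M_{d}(\M_{n})^{*}$, while the convergence hypothesis only provides control of $\omega^{(n)}-\omega$ on each fixed finite-dimensional $\M_{k}$. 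Carefully choosing the subsequence so that these residual discrepancies decay simultaneously across all embezzling tasks, while respecting the hierarchy of subalgebras in which the embezzling unitaries are localized, constitutes the technical heart of the argument.
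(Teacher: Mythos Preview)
Your overall strategy coincides with the paper's: invoke \cref{thm:mbzfam} to obtain embezzling unitaries in a fixed finite-dimensional $M_d(\M_{N(d,\eps)})$, upgrade the pointwise hypothesis to norm convergence on each finite-dimensional $\M_k$, combine via the triangle inequality, and uniformize through an $\eps$-cover. The diagonal extraction at the end is also fine.

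There is, however, a genuine gap at precisely the place you flag as the ``main obstacle.'' Your displayed inequality bounds the error by $2\norm{\omega^{(n)}-\omega|_{\M_{n}}}_{\M_{n}^{*}}$, i.e., the norm on the \emph{varying} predual $\M_n^*$. The hypothesis does not control this quantity: it only gives $\norm{\omega^{(n)}|_{\M_k}-\omega|_{\M_k}}_{\M_k^*}\to 0$ for each \emph{fixed} $k$. Your final paragraph asserts that ``carefully choosing the subsequence'' handles this, but no amount of subsequencing makes $\norm{\omega^{(n)}-\omega|_{\M_n}}_{\M_n^*}$ small in general, so the argument as written does not close.

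The paper's resolution is simpler than what you anticipate. Instead of working on $M_d(\M_n)^*$, one works on $M_d(\M_k)^*$ with the \emph{restricted} states $\omega^{(n)}_k:=\omega^{(n)}|_{\M_k}$. Because the embezzling unitary $u_{j_0}$ already lies in $M_d(\M_{k_0})\subset M_d(\M_k)$ for all $k\ge k_0$, the triangle inequality yields
\[
\norm{\omega^{(n)}_{k}\ox\psi-u_{j_{0}}(\omega^{(n)}_{k}\ox\bra0\placeholder\ket0)u_{j_{0}}^{*}}
\le 2\norm{\omega^{(n)}_{k}-\omega|_{\M_k}}_{\M_k^*}
+\norm{\omega|_{\M_k}\ox\psi-u_{j_{0}}(\omega|_{\M_k}\ox\bra0\placeholder\ket0)u_{j_{0}}^{*}},
\]
and now the error term \emph{is} controlled by the hypothesis. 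Choosing, for each $k\ge k_0$, an index $n_k$ with $\norm{\omega^{(n_k)}_k-\omega|_{\M_k}}<\eps/4$ gives the embezzling family directly; there is no need for a simultaneous-decay argument ``across all embezzling tasks.'' In short, the fix is not a more careful diagonalization but the observation that the relevant norm is $\M_k^*$, not $\M_n^*$.
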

\begin{proof}
Set $\omega_k = \omega|_{\M_k}$ and $\omega^{(n)}_k = \omega^{(n)}|_{\M_k}$ for $n\geq k$.
Since each $\M_{k}$ is finite-dimensional, we know that $\lim_{n}\norm{\omega^{(n)}_{k}-\omega_{k}} = 0$ as weak* and norm convergence are equivalent.
Using the same notation as in the proof of \cref{thm:mbzfam}, we can estimate for a state $\psi$ on $M_{m}(\CC)$ and for all $k\geq k_{0}$ using \eqref{eq:mbzfamcombined}:
\begin{align} \nonumber
\norm{\omega^{(n)}_{k}\ox\psi-u_{j_{0}}\big(\omega^{(n)}_{k}\ox\bra0\placeholder\ket0\big)u_{j_{0}}^{*}} & \leq 2\norm{\omega^{(n)}_{k}-\omega_{k}}+\norm{\omega_{k}\ox\psi-u_{j_{0}}\big(\omega_{k}\ox\bra0\placeholder\ket0\big)u_{j_{0}}^{*}} \\ 
    & < 2\norm{\omega^{(n)}_{k}-\omega_{k}} + \tfrac{\eps}{2}.\label{eq:mbzfam_approx}
\end{align}
Thus, if we choose a large enough $n=n_{k}$ such that $\norm{\omega^{(n)}_{k}-\omega_{k}}<\tfrac{\eps}{4}$, we can construct the required subsequence by choosing $\omega^{(k)} = \omega^{(n_{k})}$ on $\M_{n_{k}}$. As in the proof of \cref{thm:mbzfam}, we obtain a uniform estimate for all states $\psi$ on $M_{m}(\CC)$ by choosing $k$ and, therefore, $n=n_{k}$ so large that eq.~\eqref{eq:mbzfam_approx} holds with respect to an $\eps$-cover.
\end{proof}

Now, let us discuss how \cref{thm:mbzfam} applies to many-body systems. For background material on the rigorous description of infinite many-body systems, we refer to \cite{bratteli1987oa1,bratteli1997oa2}.
A general many-body system on a lattice $\Lambda$ (e.g., $\Lambda = \ZZ$) is described by its dynamics on the quasi-local algebra $\A$, which is a C*-algebra generated by the subalgebras $\A(X)$ describing observables localized in a finite region $X\subset \Lambda$ such that i) $\A(X) \subseteq \A(Y)$ if $X\subseteq Y$ and ii) $[\A(X),\A(Y)]=0$ if $X\cap Y=\emptyset$.
For example, for fermionic systems $\A(X)$ would be $\CAR_e(p_X\h)$, where $p_X$ is a projection onto the single-particle subspace associated to the region $X$ and where subspaces associated with disjoint regions are orthogonal.
For spin systems $\A(X) = \bigotimes_{x\in X} M_d(\CC)$.
In the following, we only assume that the $\A(X)$ are finite-dimensional for finite regions $X$. 
The dynamics of the infinite system induced by a Hamiltonian $H$ is described by a one-parameter group of automorphisms $\tau_t$ on $\A$.
A ground state can be characterized abstractly as a state $\omega$ on $\A$ that is a KMS state of temperature zero with respect to $\tau_t$ \cite[Def. 5.3.18]{bratteli1997oa2}.

If we only consider a finite sublattice $\Lambda_n$, for example $\Lambda_n = \{-n,\ldots,n\}\subset\ZZ$, and if $H$ is sufficiently local, we can define Hamiltonians $H_n$ acting non-trivially on $\A(\Lambda_n)$ (and trivially on operators supported outside of $\Lambda_n$) by only considering the local terms of $H$ acting within $\Lambda_n$. 
In our case, the Hamiltonian $H_n$ is necessarily bounded and induces a strongly continuous one-parameter group of automorphisms $\tau^{(n)}_t$ on $\A$. By construction, $\tau^{(n)}_t$ leaves invariant all operators with support outside of $\Lambda_n$.  

We assume that the dynamics $\tau^{(n)}_t$ converges strongly to $\tau_t$: For each $A\in \A$ we have\begin{align}
    \lim_{n\rightarrow \infty} \norm{\tau^{(n)}_t(A) - \tau_t(A)} = 0.
\end{align}
In fact, typically, $\tau_t$ is derived in this way, for example, using Lieb-Robinson bounds. We refer to \cite{bratteli1997oa2,nachtergaele_quasi-locality_2019,van_luijk_convergence_2024} and references therein for further details. 

Let us now fix a disjoint bipartition of the lattice $\Lambda$ as $\Lambda = \Lambda_A\cup \Lambda_B$. Given a ground state $\omega$ of $H$ and  its GNS representation $(\pi_\omega,\H,\ket\Omega_{AB})$, we obtain von Neumann algebras
\begin{align}
    \M_A := \pi_\omega(\A(\Lambda_A))'',\quad \M_B := \pi_\omega(\A(\Lambda_B))''.
\end{align}
In the following, we assume that the GNS representation $\pi_\omega$ is faithful. This means that there are no irrelevant physical degrees of freedom in the quasi-local algebra $\A$. In particular, this is true if $\A$ is simple, which is true for spin systems and fermionic systems.
Assuming Haag duality, $\M_A = \M_B'$, we obtain a bipartite system $(\M_A,\M_B,\H)$.  
Similarly, for each $\Lambda_n$ we can consider a pure ground state $\omega^{(n)}$ of the Hamiltonian $H_n$ restricted to $\A(\Lambda_n)$ and GNS representation $(\pi_n, \H_n,\ket{\Omega_n}_{AB})$.  We then obtain (finite-dimensional) algebras
\begin{align}
    \M^{(n)}_{A_n} := \pi_{n}(\A(\Lambda_n\cap \Lambda_A)), \quad \M^{(n)}_{B_n} := \pi_{n}(\A(\Lambda_n\cap \Lambda_B)).
\end{align}
If Haag duality holds this yields a bipartite system $(\M^{(n)}_{A_n},\M^{(n)}_{B_n},\H_n)$. 
For fermionic systems, Haag duality holds by \cref{prop:duality} because twisted duality holds for any even state if $\h$ is finite-dimensional. 
For spin systems, Haag duality also holds on this level because $\A(\Lambda_n) = \A(\Lambda_n\cap \Lambda_A)\ox \A(\Lambda_n\cap \Lambda_B)$ with factors $\A(\Lambda_n\cap \Lambda_i)$.

Then, we have the following corollary, which implies \cref{thm:main-families}:

\begin{corollary}\label{cor:gs-families} Suppose under the above conditions that the ground state $\omega$ is unique and represented by an embezzling state $\ket\Omega_{AB}$ on the bipartite system $(\M_A,\M_B,\H)$.
Then the sequence of bipartite systems $(\M^{(n)}_{A_n},\M^{(n)}_{B_n},\H_n)$ and states $\ket{\Omega_n}_{AB}\in\H_n$ contains a bipartite embezzling family as subsequence.
\end{corollary}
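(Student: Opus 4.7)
The plan is to reduce the claim to \cref{cor:mbzfam} applied to the hyperfinite algebra $\M_A$, and then to upgrade from the monopartite to the bipartite setting via \cref{prop:mono-bi}. The two main inputs are (i) that the reduced state $\omega_A$ of $\ket\Omega_{AB}$ on $\M_A$ is a monopartite embezzling state, and (ii) that the finite-volume ground states $\omega^{(n)}$ converge to $\omega$ pointwise on each finite local subalgebra.

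I would first verify that $\omega_A$ is monopartite embezzling: given $\eps>0$ and a state $\psi$ on $M_d(\CC)$, choose any purification $\ket\Psi\in\CC^d\ox\CC^d$ of $\psi$, apply the bipartite embezzling property of $\ket\Omega_{AB}$ to $\ket\Psi$, and trace out Bob's subsystems. Since $u_{BB'}$ acts trivially on the $AA'$-marginal, one obtains an estimate of the form \eqref{eq:vilsa} using the standard bound $\norm{\kettbra{\xi}-\kettbra{\eta}}_1\leq 2\norm{\ket\xi-\ket\eta}$ and contractivity of the partial trace.

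Next, I would set up the finite-dimensional filtration. Let $\A_k^A:=\A(\Lambda_k\cap\Lambda_A)$ and $\M_k:=\pi_\omega(\A_k^A)\subseteq\M_A$; faithfulness of $\pi_\omega$ makes $\pi_\omega\colon\A_k^A\to\M_k$ a $*$-isomorphism, the $\M_k$ form an increasing weakly dense sequence, and $\M_A$ is thereby hyperfinite. Using that $\pi_n$ is likewise faithful on the finite-dimensional $\A_n^A$ (which is automatic when $\A_n^A$ is simple, as for spin systems), one obtains a canonical $*$-isomorphism $\phi_n := \pi_n\circ\pi_\omega^{-1}\colon\M_n\to\M^{(n)}_{A_n}$ intertwining the reduced state of $\ket{\Omega_n}_{AB}$ with the state $\tilde\omega^{(n)}$ on $\M_n$ defined by $\tilde\omega^{(n)}(\pi_\omega(a)) := \omega^{(n)}(a)$ for $a\in\A_n^A$. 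Since the monopartite embezzling property is invariant under such $*$-isomorphisms, \cref{prop:mono-bi} reduces the claim to producing a subsequence of $(\tilde\omega^{(n)})$ that is a monopartite embezzling family on $(\M_n)$.

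The crux is verifying the convergence hypothesis of \cref{cor:mbzfam}, namely $\omega^{(n)}(a)\to\omega(a)$ for every $a$ in a fixed $\A_k^A$. I would extend each $\omega^{(n)}$ from $\A(\Lambda_n)$ to a state on all of $\A$ (for example by tensoring with a fixed reference state outside $\Lambda_n$) and use weak*-compactness of the state space of $\A$ to extract an accumulation point $\omega_\infty$; the goal is then to show $\omega_\infty$ is a ground state of $\tau_t$, whence uniqueness of $\omega$ forces $\omega_\infty=\omega$ and yields convergence of the full sequence. To that end one observes that for any local $a$ the passivity inequality $-i\omega^{(n)}(a^{*}\delta_n(a))\geq 0$ passes to the limit, because the finite-volume derivation $\delta_n$ eventually agrees with the infinite-volume derivation $\delta$ once $\Lambda_n$ contains a sufficiently large neighborhood of $\supp(a)$, and $\tau^{(n)}_t\to\tau_t$ strongly. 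Assembling everything, \cref{cor:mbzfam} extracts a subsequence of $(\tilde\omega^{(n)})$ that is a monopartite embezzling family on $(\M_n)$, and transporting back through $\phi_n$ followed by \cref{prop:mono-bi} yields the desired bipartite embezzling subsequence of $(\ket{\Omega_n}_{AB})$. I expect the main obstacle to lie in this passage-to-the-limit step, since the stability of the ground-state condition under the joint convergence of states and dynamics must be formulated carefully; uniqueness of the infinite-volume ground state is precisely what promotes subsequential convergence to convergence of the full sequence.
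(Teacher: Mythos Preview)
Your proposal is correct and follows essentially the same approach as the paper: reduce to the monopartite setting on the hyperfinite algebra $\M_A$, establish pointwise convergence of the finite-volume reduced states to $\omega_A$, invoke \cref{cor:mbzfam}, and lift back to the bipartite statement via \cref{prop:mono-bi}. The only noteworthy difference is in the convergence step: the paper extends $\omega^{(n)}$ to a ground state of $\tau^{(n)}$ on all of $\A$ via \cite[Prop.~2.3.24]{bratteli1987oa1} and then cites \cite[Prop.~5.3.25]{bratteli1997oa2} directly to conclude that every weak* cluster point is a ground state of $\tau$, whereas you sketch this stability result by hand through the passivity inequality $-i\omega^{(n)}(a^*\delta_n(a))\ge0$. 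Your sketch is the standard route to that proposition, and your caveat about the passage to the limit is well placed: in the generality assumed here one only has strong convergence $\tau^{(n)}_t\to\tau_t$, which does not literally give eventual agreement of $\delta_n$ with $\delta$ on local elements, so the cleanest fix is exactly to invoke the cited result rather than reprove it.
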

\begin{proof}
In the following, we use the notation $A_n:= \Lambda_n \cap \Lambda_A$.
    The automorphisms $\tau^{(n)}_t$ on $\A$ converge to $\tau_t$ strongly by assumption.
    We can extend the chosen ground states $\omega^{(n)}$ of $H_n$ on $\A(\Lambda_n)$ to states $\tilde \omega^{(n)}$ on $\A$ that are also ground states of $\tau^{(n)}_t$ \cite[Prop.~2.3.24]{bratteli1987oa1}. 
    By \cite[Prop.~5.3.25]{bratteli1997oa2}, every weak* cluster point of the sequence $\tilde\omega^{(n)}$ is an infinite-volume ground state. Since, by assumption, the infinite-volume dynamics has a unique ground state $\omega$, it follows that $\tilde\omega\up n$ weak*-converges to $\omega$. 
    Consequently, the restrictions of $\omega^{(n)}$ to $\A(A_k)$, which coincide with the restrictions $\tilde\omega^{(n)}|_{\A(A_k)}$ of $\tilde\omega^{(n)}$ for $n\ge k$, weak*-converge to the restriction $\omega|_{\A(A_k)}$ of $\omega$ to $\A(A_k)$.
    Since the relevant observable algebras are finite-dimensional, we have, in fact:
    \begin{align}
    \lim_n \,\norm{\omega^{(n)}|_{\A(A_k)} - \omega|_{\A(A_k)}} = 0,\quad  k\in \NN. 
    \end{align}
    Since $\pi_\omega$ is faithful by assumption, we can consider the states $\omega^{(n)}|_{A_n}$ as states on the (von Neumann) algebras $\M_{A_n} := \pi_\omega(\A(A_n))$, which converge  to $\omega$ in the sense of \cref{cor:mbzfam}. Consequently, the sequence $\omega^{(n)}_{A_n}$ contains a subsequence that is a monopartite embezzling family. 
     Considering $\omega^{(n)}_{A_n}$ again as states on $\M^{(n)}_{A_n}$, we find that the sequence $(\M^{(n)}_n,\omega^{(n)}_{A_n})$ has a monopartite embezzling family as subsequence.
     By \cref{prop:mono-bi} a subsequence of the $\ket{\Omega_n}$, then also provides a bipartite embezzling family.
\end{proof}
\begin{remark}
\label{rem:uniqueness}
    Note that in the preceding corollary, the assumption that $\omega$ is unique is only used to argue that the local ground states converge towards an embezzling state. If this can be ensured by different means, for example, because all ground states are embezzling, then the assumption is not required.
\end{remark}

\printbibliography

\end{document}